\numberwithin{equation}{section}
\newlist{steps}{enumerate}{1}
\setlist[steps, 1]{label = Step \Roman*:}
\newlist{driftcondi}{enumerate}{1}
\setlist[driftcondi]{label=(C\arabic*)}
\newtheorem{thm}{Theorem}[section]
\newtheorem{cor}[thm]{Corollary}
\newtheorem{assume}{Assumption}[section]
\newtheorem{prop}[thm]{Proposition}
\newtheorem{lemma}[thm]{Lemma}
\theoremstyle{definition}
\newtheorem{remark}{Remark}[section]
\newtheorem{defi}{Definition}[section]
\newcommand{\upperRomannumeral}[1]{\uppercase\expandafter{\romannumeral#1}}
\DeclarePairedDelimiter{\norm}{\lVert}{\rVert}
\newcommand{\mbp}{\mathbb{P}}
\newcommand{\mbq}{\mathbb{Q}}
\newcommand{\tmbp}{\widetilde{\mathbb{P}}}
\newcommand{\hmbp}{\widehat{\mathbb{P}}}
\newcommand{\bmbp}{\overline{\mathbb{P}}}
\newcommand{\mbr}{\mathbb{R}}
\newcommand{\EP}{\mathbb{E}^{\mathbb{P}}}
\newcommand{\EQ}{\mathbb{E}^{\mathbb{Q}}}
\newcommand{\ETP}{\mathbb{E}^{\tmbp}}
\newcommand{\EHP}{\mathbb{E}^{\hmbp}}
\newcommand{\EBP}{\mathbb{E}^{\bmbp}}
\newcommand{\hlambda}{\hat{\lambda}}
\newcommand{\tB}{\widetilde{B}}
\newcommand{\de}{\frac{\partial}{\partial\epsilon}\biggl|_{\epsilon=0}}
\newcommand{\sube}{_{\epsilon}}
\newcommand{\supe}{^{\epsilon}}
\newcommand{\ETPe}{\mathbb{E}^{\tmbp\supe}}
\newcommand{\EHPe}{\mathbb{E}^{\hmbp\supe}}
\newcommand{\hY}{\hat{Y}}
\newcommand{\RN}[1]{
	\textup{\uppercase\expandafter{\romannumeral#1}}
}
\begin{document}

\author{Hyungbin Park\thanks{hyungbin@snu.ac.kr, hyungbin2015@gmail.com} and  Heejun Yeo\thanks{yeohj@snu.ac.kr}  \\ \\ \normalsize{Department of Mathematical Sciences} \\ 
		\normalsize{Seoul National University}\\
		\normalsize{1, Gwanak-ro, Gwanak-gu, Seoul, Republic of Korea} 
	}
\title{Dynamic and static fund separations and their stability for long-term optimal investments}
	
\maketitle	
	
\begin{abstract}
	This paper investigates
	dynamic and static fund separations and their stability for long-term optimal investments under three model classes. An investor maximizes the expected utility with constant relative risk aversion under an incomplete market consisting of a safe asset, several risky assets, and a single state variable. The state variables in two of the model classes follow a 3/2 process and an inverse Bessel process, respectively. The other market model has the partially observed state variable modeled as an Ornstein-Uhlenbeck state process.
	We show that the dynamic optimal portfolio of this utility maximization consists of $m+3$ portfolios: the safe asset,  the myopic portfolio, the \textit{m} time-independent portfolios, and the intertemporal portfolio.
	Over time, the intertemporal portfolio eventually vanishes, leading the dynamic portfolio to converge to $m+2$ portfolios, referred to as the static portfolio.
	We also prove that the convergence is stable under model parameter perturbations.
	In addition, sensitivities of the intertemporal portfolio with respect to small parameters perturbations also vanish in the long run.
	The convergence rate for the intertemporal portfolio and its sensitivities are computed explicitly for the presented models.
\end{abstract}

{\small\hspace{5mm}\textbf{\textsc{Key Words: }} fund separation, long-term investment, non-affine model, partial observation}

\section{Introduction}

\subsection{Overview}
An optimal investment problem is typically formulated as a problem of finding an optimal trading strategy that maximizes the expected utility of an investor. If a market depends on \textit{d} state variables that models investment opportunities, classical \textit{dynamic fund separation} asserts that the optimal portfolio of the investor consists $d+2$ funds: the safe asset, the myopic portfolio, and \textit{d} intertemporal portfolios (\citet{merton1973intertemporal}). This paper refines the dynamic fund separation for an investor with constant relative risk aversion (CRRA) under markets with a single state variable by further decomposing the intertemporal portfolio into a time-independent portfolio and another intertemporal portfolio. In addition, we show that the weight of the new intertemporal portfolio vanishes over time. The result provides \textit{static fund separation} in long-term investments. Moreover, the new intertemporal portfolio becomes insensitive to small parameter perturbations in the long run, which ensures stability of the optimal portfolio under parameter perturbations.

We outline the classical dynamic separation theorem for a market with a single state variable in the following simple form. Let $T>0$ be a terminal time and $\hat{\pi}_T(t,y)$ denote the optimal portfolio at current time \textit{t} and current state \textit{y}. Then $\hat{\pi}_T$ can be expressed as
\begin{equation} \label{port:optimal}
	\hat{\pi}_T(t,y)=w_0(p)M(y)+\tilde{w}(p,t,T,y)I(y),
\end{equation}
where $M,w,I,\tilde{w}$ denote the myopic portfolio and its weight, the intertemporal portfolio and its weight, respectively, and \textit{p} refers to the preference parameter. Our dynamic separation further decomposes $\tilde{w}$ to the form
\begin{equation}
	\hat{\pi}_T(t,y)=w_0(p)M(y)+\tilde{w}_1(p,y)I(y)+\tilde{w}_2(p,t,T,y)I(y).
\end{equation}
Such an additional decomposition of the intertemporal portfolio can be performed by applying the Hansen--Scheinkman factorization developed by \citet{hansen2009long} (see also \citet{hansen2012dynamic}, \citet{qin2016positive}, and \citet{park2018sensitivity}). The vanishing of $\tilde{w}_2$ in the long run can be argued by employing the ergodic theorems for diffusion processes (see, e.g., \citet{pinchover2004large}). Thus, the optimal portfolio eventually converges to the time-independent portfolio in the long-run
\begin{equation} \label{port:static}
	\hat{\pi}_{\infty}(y)=w_0(p)M(y)+\tilde{w}_1(p,y)I(y).
\end{equation}
We call this portfolio the static optimal portfolio, in that a long-term investor does not have to adjust her funds over time.

The term \textit{static fund separation} was first used by \citet{guasoni2015static}; their static fund separation theorem involves more than that mentioned in the previous paragraph. For two classes of affine models, they separated variables \textit{p} and \textit{y} of the weight $\tilde{w}_1$ in \eqref{port:static} as \begin{equation}
	\tilde{w}_1(p,y)=\sum_{i=1}^m w_i(p)\psi_i(y)
\end{equation}
for some positive number \textit{m}. Then the optimal portfolio \eqref{port:optimal} and the static optimal portfolio \eqref{port:static} are further decomposed into
\begin{equation}
	\hat{\pi}_{T}(t,y)=w_0(p)M(y)+\sum_{i=1}^m w_i(p)\psi_i(y)I(y)+\tilde{w}_2(p,t,T,y)I(y)
\end{equation}
and
\begin{equation}
	\hat{\pi}_{\infty}(y)=w_0(p)M(y)+\sum_{i=1}^m w_i(p)\psi_i(y)I(y),
\end{equation}
respectively. Thus, the optimal portfolio now consists of the safe asset, $m+1$ preference-free static funds $M(y),(\psi_iI)(y)$ with weights $w_i(p)$ that depend on the risk aversion but not on the state variable \textit{y}, and one intertemporal portfolio. In addition, the static optimal portfolio consists of the safe asset and $m+1$ preference-free static funds $M(y),(\psi_iI)(y)$ with weights $w_i(p)$. They call the latter separation as the \textit{static fund separation theorem}. We also derive the same type of separations for two non-affine mode classes of models and one affine class of model with partially observable state variable.

The main contributions of this paper include two additional results compared to \citet{guasoni2015static}. First, the exponential convergence rate is computed explicitly for each model, whereas \citet{guasoni2015static} provided only convergence of the optimal portfolio without explicit convergence rate. The Hansen--Scheinkman decomposition plays an important role in finding the convergence rate. Second, sensitivity analysis of the optimal portfolio is conducted. We argue that the convergence of the optimal portfolio is stable under model parameter perturbations by showing that sensitivities of the intertemporal weight to parameter perturbations vanish in the long run. Mathematically, the sensitivity of a portfolio with respect to perturbations of a specific parameter is measured by the partial derivative of the portfolio with respect to the parameter. \citet{park2018sensitivity} provides a method to calculate the partial derivative explicitly.

One major difference compared to \citet{guasoni2015static} is that we demonstrate our arguments for various types of state variable models: a 3/2 state process and an inverse Bessel state process, which are non-affine, and one partially observable Ornstein-Uhlenbeck state process. However, \citet{guasoni2015static} worked with only two affine state processes. While it is more difficult to handle non-affine models than affine models, it is worth studying the former because several empirical studies argue that they outperform affine models as factor models in a financial market. Investigating the market with a partially observable state variable is also meaningful in that the state variable are non-tradable; thus, investors cannot actually observe it directly. We employ the Kalman-Bucy filter to estimate the state variable through given stock prices.

The remainder of this paper is organized as follows. Section \ref{sec:setup} describes the market model and the main arguments of this work.
Explicit results for three market models charaterized by a 3/2 state process, an inverse Bessel state process, and a partially observable Ornstein-Uhlenbek state process are presented in Sections \ref{sec:3/2}, \ref{sec:invB}, and \ref{sec:FOU} respectively. Section \ref{sec:outline} outlines the approach to proofs for the main arguments. Finally, the last section summarizes the results of this paper. Detailed proofs are provided in the appendices.

\subsection{Related literature}
The optimal long-term investment problem has been studied in various ways. \citet{fleming2000risk} and \citet{fleming2002risk} studied the problem of maximizing the long-term growth rate of the expected utility of an investor with hyperbolic absolute risk aversion. \citet{guasoni2012portfolios} suggested a method to derive long-term optimal porfolios explicitly for investors with CRRA in an incomplete market. \citet{robertson2017long} proved the convergence of finite horizon optimal portfolios to their long-term counterparts for investors with CRRA in matrix-valued factor models. They also proved turnpike theorems in their models, which suggest that for long horizons, the optimal portfolio for a generic utility approaches that of a CRRA utility.

Sensitivity analysis of optimal investment is also an important area of study in financial mathematics. \citet{larsen2007stability} investigated the stability properties of utility-maximization for complete and incomplete markets by observing how small perturbations of the market parameters affect an investor's optimal behavior. \citet{larsen2018expansion} analyzed the effect of perturbations in the market price of risk on the optimal expected utility for an investor with CRRA. Subsequently, \citet{mostovyi2019sensitivity} extended the results to the case of general utility functions. \citet{mostovyi2020asymptotic} investigated the sensitivity of the optimal expected utility under small perturbations of the num\'{e}raire in an incomplete market. \citet{park2019sensitivity} conducted sensitivity analysis of the optimal expected utility with respect to small fluctuations in the initial value, the drift function, and the volatility function of a factor process in an incomplete market.

The 3/2 process is one of the most popular such non-affine processes. Empirical studies conducted by \citet{drimus2012options} show that as a model for stochastic volatility, the 3/2 model captures volatility smiles better than the square-root process. \citet{goard2013stochastic} also empirically compared eight stochastic volatility models including several affine models and the 3/2 model in their ability to capture the evolution of the volatility index and found that the 3/2 model showed better performance. At the same time, the 3/2 model is known to be analytically tractable to some extent(see, e.g., \citet{lewis2000option} and \citet{zheng2016pricing}). In addition to stochastic volatility modeling, the 3/2 process has also been used to model short interest rates. Using statistical testing, \citet{ahn1999parametric} demonstrated that the 3/2 process outperforms the existing affine-class models in that it better captures the actual behavior of the interest rate. The inverse Bessel process, as the name suggests, is the inverse of a Bessel process with constant drift that appears as the heavy-traffic limit in queueing theory \citep{coffman1998polling}. To the best of our knowledge, the inverse Bessel process has not been used or analyzed before, but it is worth comparing the performance in different situations with other non-affine models, for it has a notable feature that the diffusion term has a power of 2.

The idea of introducing a partially observable state variable was motivated by \citet{lee2016pairs}. They assumed that the drift of a riskier asset among two co-integrated assets is unobservable, and then solved the portfolio optimization problem for investors who do not observe the drift. The Kalman-Bucy filter was the main tool for estimating the uncertain drift, and it also plays a similar role in this work. The Kalman-Bucy filter, widely used in various fields, provides a powerful way to solve linear filtering problem. One commonly cited reference for the Kalman-Bucy filter is \citet{kallianpur1980stochastic}, which includes general stochastic filtering theory.

\section{Problem settings and results} \label{sec:setup}

\subsection{Model setup}
Let $(\Omega,\mathcal{F},\{\mathcal{F}_t\}_{t\geq 0},\mathbb{P})$ be a filtered probability space. We consider an incomplete market having one safe asset and \textit{n} risky assets with \textit{d} state variables driving the risk-free rate \textit{r}, the excess returns $\mu$, and the volatility matrix $\Sigma$, which are all continuous. The $\mbp$-dynamics of the adapted state variables $Y=(Y_1,\ldots,Y_d)'$ is given by
\begin{equation} \label{state_var}
dY_t=b(Y_t)\,dt+\sigma(Y_t)\,dB_t
\end{equation}
with a state space $E\subset\mbr^d$, where $B=(B^1,\ldots,B^d)'$ is a \textit{d}-dimensional Brownian motion, $b:E\to\mbr^d$ and $\sigma:E\to\mbr^{d\times d}$ are continuously differentiable, $\sigma(y)$ is positive definite for each $y\in E$, and \textit{E} is open. Throughout this paper, the symbol $'$ denotes the transpose of a matrix. The dynamics of the safe asset $S^0$ and risky assets $S=(S^1,\ldots,S^n)'$ are given by
\begin{gather}
\frac{dS_t^0}{S_t^0}=r(Y_t)\,dt, \label{safe} \\
dS_t=\mbox{diag}(S_t)(r(Y_t)\underline{1}+\mu(Y_t))\,dt+\mbox{diag}(S_t)\Sigma(Y_t)\,dW_t\,, \label{risky}
\end{gather}
where diag($S_t$) stands for the diagonal matrix whose \textit{i}'s diagonal entry is $S_t^i$ for each $i=1,\ldots,n$, $\underline{1}$ represents the $n\times 1$ column vector whose entries are identically 1, and $W=(W^1,\ldots,W^n)'$ is an \textit{n}-dimensional Brownian motion. The Brownian motions \textit{W} and \textit{B} are correlated, and $\rho$ denotes the $n\times d$ cross-correlation matrix, that is,
\begin{equation} \label{cross_corr}
\langle W^j,B^l\rangle_t=\rho_{jl}t,\quad 1\leq j\leq n,\,1\leq l\leq d\,.
\end{equation}

Let $\pi=(\pi^1,\ldots,\pi^n)'$ denote the proportions of wealth in each risky asset. Then, the corresponding wealth process $X^{\pi}$ is given by
\begin{equation} \label{wealth}
\frac{dX_t^{\pi}}{X_t^{\pi}}=(r(Y_t)+\pi_t'\,\mu(Y_t))\,dt+\pi_t'\,\Sigma(Y_t)\,dW_t\,.
\end{equation}
For a fixed future time $T>0$ and current time $0\leq t<T$ with given values $Y_t=y$ and $X_t=x$, we aim to maximize the expected utility of wealth. We adopt investor's preference as the power utility
\begin{equation} \label{power_utility}
U(x)=\frac{x^p}{p},\quad p<1,\,p\neq 0\,.
\end{equation}
The value function \textit{V} is then written in the form
\begin{equation} \label{valuefunc}
V(t,x,y)=\sup_{(\pi_u)_{t\leq u\leq T}}\frac{1}{p}\,\EP[(X_T^{\pi})^p\,|\,X_t^{\pi}=x,Y_t=y]\,,
\end{equation}
and is defined on $[0,T]\times[0,\infty)\times E$. The corresponding Hamilton--Jacobi--Bellman (HJB) equation is
\begin{equation} \label{eqn:HJB}
\left\{\begin{array}{ll}
V_t+b(y)'\nabla_yV+\dfrac{1}{2}\text{Tr}\Bigl((\sigma\sigma')(y)\nabla_y^2V\Bigl)+r(y)xV_x\\ \qquad +\sup_{\pi}\left(x\,\pi'\Bigl(\mu(y) V_x+\Sigma(y)\rho\,\sigma(y)'\nabla_yV_x\Bigl)+\dfrac{x^2}{2}\pi'(\Sigma\Sigma')(y)\pi V_{xx}\right)=0 & \mbox{in }(0,T)\times[0,\infty)\times E \\ V(t,x,y)=\dfrac{x^p}{p} & \mbox{on } \{T\}\times[0,\infty)\times E
\end{array} \right. .
\end{equation}
Here the symbols $\nabla_y$ and $\nabla_y^2$ denote the gradient $\nabla_yV=(V_{y_1},\ldots,V_{y_d})'$ and Hessian matrix $\nabla_y^2V=(V_{y_{ij}})_{i,j=1}^{d}$ with respect to the vector \textit{y}, respectively, and ``Tr'' denotes the trace of a matrix. The supremum in the HJB equation can be achieved. Indeed, since the value function \textit{V} is concave in \textit{x}, the matrix $V_{xx}\Sigma\Sigma'$ is negative definite; hence, the HJB equation becomes
\begin{align}
V_t+&b(y)'\nabla_yV+\dfrac{1}{2}\text{Tr}\Bigl((\sigma\sigma')(y)\nabla_y^2V\Bigl)+r(y)xV_x\\ &-\dfrac{1}{2V_{xx}}\Bigl(\mu(y)V_x+\Sigma(y)\rho\,\sigma(y)'\nabla_yV_x\Bigl)'(\Sigma\Sigma')(y)^{-1}\Bigl(\mu(y)V_x+\Sigma(y)\rho\,\sigma(y)'\nabla_yV_x\Bigl)=0
\end{align}
in $[0,T)\times[0,\infty)\times E$, and the corresponding optimal portfolio is
\begin{equation} \label{eqn:optimal_port}
\pi(t,x,y\,;T)=-\frac{1}{xV_{xx}(t,x,y)}(\Sigma\Sigma')(y)^{-1}\Bigl(\mu(y)V_x(t,x,y)+\Sigma(y)\rho\,\sigma(y)'\nabla_yV_x(t,x,y)\Bigl)\,.
\end{equation}
As discussed in \citet{guasoni2015static}, we might set $V(t,x,y)=v(t,y)x^p/p$. Then the function \textit{v} satisfies the following PDE:
\begin{equation} \label{eqn:reduced}
\left\{\begin{array}{ll}
v_t+\Bigl(b(y)'-q(\Sigma(y)^{-1}\mu(y))'\rho\,\sigma(y)'\Bigl)\nabla v+\dfrac{1}{2}\text{Tr}\Bigl((\sigma\sigma')(y)\nabla^2v\Bigl)\\ \qquad+\left(p\,r(y)-\dfrac{q}{2}\lVert\Sigma(y)^{-1}\mu(y)\rVert^2\right) v-\dfrac{q}{2v}\lVert \rho\,\sigma(y)'\nabla v\rVert^2=0 & \mbox{in }(0,T)\times E \\ v=1 & \mbox{on } \{T\}\times E
\end{array} \right.,
\end{equation}
where $\lVert\cdot\rVert$ denotes the Euclidean norm on $\mathbb{R}^n$ or $\mathbb{R}^d$, and $q=p/(p-1)$. Accordingly, the optimal portfolio \eqref{eqn:optimal_port} can be represented by the function of variables \textit{t} and \textit{y}, as follows:
\begin{equation} \label{eqn:optimal_port2}
\hat{\pi}_T(t,y)=\frac{1}{1-p}\left(\left((\Sigma\Sigma')^{-1}\mu\right)(y)+(\Sigma(y)')^{-1}\rho\,\sigma(y)'\frac{\nabla v(t,y)}{v(t,y)}\right)\,.
\end{equation}
We shall now investigate the large-time behavior of \eqref{eqn:optimal_port2}. We will conduct the large-time sensitivity analysis of \eqref{eqn:optimal_port2} with respect to the perturbed state process.

In the one-dimensional case, that is, if the model has only one state variable, the problem can be reduced further. Put $\delta:=1/(1-q\rho'\rho)$. Substituting $u(t,y):=v(T-t,y)^{1/\delta}$, as in \citet{zariphopulou2001solution}, reduces the semilinear PDE \eqref{eqn:reduced} to a linear PDE:
\begin{equation} \label{eqn:1d:reduced}
\left\{\begin{array}{ll}
u_t=\Bigl(b(y)-q((\Sigma^{-1}\mu)(y))'\rho\,\sigma(y)\Bigl)u_y+\dfrac{1}{2}\sigma^2(y)u_{yy}\\ \hspace{4cm} +\dfrac{1}{\delta}\left(p\,r(y)-\dfrac{q}{2}\lVert(\Sigma^{-1}\mu)(y)\rVert^2\right)u & \mbox{in }(0,T)\times E \\ u=1 & \mbox{on } \{0\}\times E,
\end{array} \right.
\end{equation}
and the optimal portfolio \eqref{eqn:optimal_port2} is then
	\begin{equation} \label{eqn:optimal_port3}
	\hat{\pi}_T(t,y)=\frac{1}{1-p}\left((\Sigma\Sigma')^{-1}\mu\right)(y)+\frac{\delta}{1-p}(\Sigma(y)')^{-1}\rho\,\sigma(y)\frac{ u_y(T-t,y)}{u(T-t,y)}\,.
\end{equation}
In this paper, we primarily focus on the case where $d=1$.

\subsection{Hansen--Scheinkman decomposition}
Let \textit{u} be a function on $[0,\infty)\times E$ defined as
\begin{equation}\label{eqn:1d:FK_forward}
	u(t,z)=\EP_z\left[\exp{\left\{-\int_{0}^{t}\mathcal{V}(Z_s)\,ds\right\}}\right],
\end{equation}
where $\EP_z[\,\cdot\,]$ is the expectation under a probability measure $\mbp_z$ under which \textit{Z} satisfies
\begin{equation} \label{SDE:FK}
	dZ_t=\left(b-q(\Sigma^{-1}\mu)'\rho\sigma\right)\!(Z_t)\,dt+\sigma(Z_t)\,dB_t,\quad Z_0=z
\end{equation}
and takes values in \textit{E} for each $z\in E$ (here, weak existence and uniqueness in law of the SDE are implicitly assumed), and
\begin{equation}
	\mathcal{V}(z)=-\dfrac{1}{\delta}\left(p\,r(z)-\dfrac{q}{2}\lVert(\Sigma^{-1}\mu)(z)\rVert^2\right).
\end{equation}
It is natural to expect that \textit{u} is a solution to the PDE \eqref{eqn:1d:reduced}. The aim in this section is to express $u$ in the form
\begin{align} \label{eqn:u:decomp}
	u(t,z)=e^{-\lambda t}\phi(z)f(t,z) 
\end{align}
for some 
$\lambda\in\mbr,$   a   twice continuously differentiable positive  function $\phi$ and a continuous function $f.$ Such a type of decomposition is a crucial tool for finding a static optimal portfolio and the convergence rate. We follow the decomposition method developed by \citet{hansen2009long} to get the form \eqref{eqn:u:decomp}.

Define an operator
\begin{equation} \label{operator:H-S}
	\mathscr{L}:=\frac{1}{2}\sigma^2(z)\frac{d^2}{dz^2}+\left(b-q(\Sigma^{-1}\mu)'\rho\,\sigma\right)(z)\frac{d}{dz}-\mathcal{V}(z)\cdot.
\end{equation}
We say that a pair $(\lambda,\phi)$ of $\lambda\in\mbr$ and a positive and twice continuously differentiable function $\phi$ is an \textit{eigenpair} of $\mathscr{L}$ if
\begin{equation}
	\mathscr{L}\phi=-\lambda\phi\,.
\end{equation}
Suppose that such an eigenpair exists. Then it is straightforward to show that a process
\begin{equation} \label{eqn:H-S:martingale}
	M_t:=e^{\lambda t-\int_0^t\mathcal{V}(Z_u)du}\frac{\phi(Z_t)}{\phi(z)},\quad t\geq 0
\end{equation} 
is a positive local martingale such that $M_0=1$. If it is not only a local martingale but a true martingale, we can define a new probability measure $\tmbp_t$ on $\mathcal{F}_t$ by		
\begin{equation}
	\frac{d\tmbp_t}{d\mbp_z}=M_t
\end{equation}
for each $t\geq 0$, and $\tmbp_t$-dynamics of $(Z_s)_{0\leq s\leq t}$ is written as
\begin{equation} \label{eqn:1d:SDEphi}
	dZ_s=\left(b-q(\Sigma^{-1}\mu)'\rho\,\sigma+\sigma^2\frac{\phi'}{\phi}\right)(Z_s)\,ds+\sigma(Z_s)\,d\widetilde{B}_s,\,0<s\leq t,\quad Z_0=z,\quad\tmbp_t - a.s.
\end{equation}
where $(\widetilde{B}_s)_{0\leq s\leq t}$ is a $\tmbp_t$-Brownian motion with the Girsanov transformation
\[ \widetilde{B}_s=B_s-\int_{0}^{s}\left(\sigma\frac{\phi'}{\phi}\right)(Z_u)du,\,0\leq s\leq t.\]
Then \textit{u} can be decomposed into the form
\begin{align}
	u(t,z)&=\EP_z\left[\exp{\left\{-\int_{0}^{t}\mathcal{V}(Z_s)\,ds\right\}}\right]\\
	&=e^{-\lambda t}\phi(z)\mathbb{E}^{\tmbp_t}\left[\frac{1}{\phi(Z_t)}\right]\,.
\end{align}
Therefore, the decomposition \eqref{eqn:u:decomp} has been accomplished with
\begin{equation}
	f(t,z)=\mathbb{E}^{\tmbp_t}\left[\frac{1}{\phi(Z_t)}\right].
\end{equation}

As discussed in Appendix \ref{sec:consistency}, the family of probability measures $(\tmbp_t)_{t\geq0}$ is \textit{consistent}. Thus, we can use the universal notation $\tmbp_z$ to represent the family $(\tmbp_t)_{t\geq0}$. In the language of Definition \ref{defi:consistency:BM_SDE}, the process $\tB$ is a Brownian motion on the consistent probability space and \textit{Z} is a solution to the SDE
$$dZ_s=\left(b-q(\Sigma^{-1}\mu)'\rho\,\sigma+\sigma^2\frac{\phi'}{\phi}\right)(Z_s)\,ds+\sigma(Z_s)\,d\widetilde{B}_s,\quad Z_0=z,$$
on the consistent probability space $(\Omega,\mathcal{F},(\mathcal{F}_t)_{t\geq 0},(\tmbp_t)_{t\geq 0})$. Hereinafter, we work on the consistent probability space but use universal notation. Therefore, we denote \textit{f} and \textit{u} as
\begin{equation} \label{eqn:remainder}
	f(t,z)=\ETP_z\left[\frac{1}{\phi(Z_t)}\right],
\end{equation}
and
\begin{equation} \label{eqn:1d:u:HSdecomp}
	u(t,z)=e^{-\lambda t}\phi(z)f(t,z)\,,
\end{equation}
respectively.

\subsection{Dynamic and static fund separations}

We conduct the following three types of fund separation theorems for non-affine models. The main ideas in this section are indebted to \citet{guasoni2015static}. \\

(i) The optimal portfolio \eqref{eqn:optimal_port3} consists of three funds: the safe asset, the myopic portfolio $(\Sigma\Sigma')^{-1}\mu$, and the intertemporal hedging portfolio $(\Sigma')^{-1}\rho\,\sigma$. This is the classical \textit{dynamic} three-fund theorem. In this separation, the intertemporal weight $u_z/u$, which depends on both risk aversion, time, and the state variable, complicates an investor's trading strategy.  \\

(ii) We further refine the dynamic separation. Using the Hansen--Scheinkman decomposition \eqref{eqn:1d:u:HSdecomp}, the optimal portfolio can be further decomposed
into
\begin{equation}
	\hat{\pi}_{T}(t,z)=\frac{1}{1-p}((\Sigma\Sigma')^{-1}\mu)(z)+\frac{\delta}{1-p}((\Sigma')^{-1}\rho\,\sigma)(z)\frac{\phi'(z)}{\phi(z)}+\frac{\delta}{1-p}((\Sigma')^{-1}\rho\,\sigma)(z)\frac{f_z(t,z)}{f(t,z)}.
\end{equation}
We show that the intertemporal weight $f_z/f$ vanishes in the long-run. In more detail, there exist a positive constant $\hat{\lambda}$ and positive functions $C_1,C_2:E\to\mbr$ such that
\begin{equation} \label{convergence}
	C_1(z)e^{-\hlambda t} \leq \left|\frac{f_z(t,z)}{f(t,z)}\right|\leq C_2(z)e^{-\hlambda t}\,,\quad \forall t\in(0,T),\,\forall z\in E.
\end{equation}
Define a \textit{static} optimal portfolio as    
\begin{equation} \label{eqn:optimal_port_asymp}
	\hat{\pi}_{\infty}(z):=\frac{1}{1-p}\left((\Sigma\Sigma')^{-1}\mu\right)(z)+\frac{\delta}{1-p}(\Sigma(z)')^{-1}\rho\,\sigma(z)\frac{\phi'(z)}{\phi(z)}.
\end{equation}
Clearly, the inequalities in \eqref{convergence} shows that
\begin{equation} \label{convergence:optimal_port}
C_1(z)e^{-\hlambda(T-t)}\leq\left| \hat{\pi}_T(t,z)-\hat{\pi}_{\infty}(z)\right|\leq C_2(z)e^{-\hlambda(T-t)}
\end{equation}
for all $0\leq t<T$, which means that the optimal portfolio $\hat{\pi}_T(t,z)$ converges exponentially fast to $\hat{\pi}_{\infty}(z)$ as $T\to\infty$ with convergence rate $\hlambda$ for each $z\in E$(here we abuse the notations $C_1$ and $C_2$, which is definitely harmless).
\citet{guasoni2015static} also proved the convergence
$$\lim_{T\to\infty} \frac{u_z(t,z)}{u(t,z)}=\frac{\phi'(z)}{\phi(z)}$$ for two classes of affine models, but they did not provided the exponential convergence rate $\hlambda.$ \\

(iii) We compare 
fund separations of 
the dynamic portfolio and static optimal portfolio.
For two non-affine models, the log-derivative of $\phi$ has the decomposition of the form 
\begin{equation}  \label{eqn:log_phi':decomp}
	\frac{\phi'(z)}{\phi(z)}=\sum_{i=1}^{m}w_i(p)\psi_i(z),
\end{equation}
where $w_i$ is independent of \textit{z}  and $\psi_i$ is independent of \textit{p} for each $i=1,\ldots,m$.
This, together with the decomposition \eqref{eqn:1d:u:HSdecomp} of \textit{u}, shows that
\begin{equation}
	\frac{u_z(t,z)}{u(t,z)}=\sum_{i=1}^{m}w_i(p)\psi_i(z)+\frac{f_z(t,z)}{f(t,z)},
\end{equation}
and eventually we have
	\begin{align}
	\hat{\pi}_T(t,z)&=\frac{1}{1-p}\left((\Sigma\Sigma')^{-1}\mu\right)(z)+\sum_{i=1}^{m}\frac{\delta\,w_i(p)}{1-p}(\Sigma(z)')^{-1}\rho\,\sigma(z)\psi_i(z)+\frac{\delta}{1-p}(\Sigma(z)')^{-1}\rho\,\sigma(z)\frac{f_z(T-t,z)}{f(T-t,z)}\\
	&=\hat{\pi}_{\infty}(z)+\frac{\delta}{1-p}(\Sigma(z)')^{-1}\rho\,\sigma(z)\frac{f_z(T-t,z)}{f(T-t,z)}. \label{port:optimal:asymp_remain}
\end{align}
Now the dynamic optimal portfolio consists of the safe asset, the myopic portfolio $(\Sigma\Sigma')^{-1}\mu$, the \textit{m} time-independent hedging portfolios $\left((\Sigma')^{-1}\rho\,\sigma\psi_i\right)(z),\,i=1,\ldots,m$, and the intertemporal hedging portfolio $\left((\Sigma')^{-1}\rho\,\sigma\right)(z)$ whose intertemporal weight $f_z/f$ vanishes as terminal time \textit{T} goes to infinity. This result leads to \textit{static fund separation}, which implies that for long-term investments, the funds are preference free and only depend on the state variable, while their weights depend on each investor's risk aversion and are independent of the state variable.

Static fund separation is a major result of \citet{guasoni2015static}, a study that has greatly inspired this work. The result of our static fund separation is the same as theirs; however, compared to their approach, we also find the expression \eqref{port:optimal:asymp_remain} that shows that the optimal portfolio can be explicitly separated into the static optimal portfolio and intertemporal hedging portfolio vanishing at infinity. Moreover, we prove static fund separation theorems corresponding to non-affine models, whereas \citet{guasoni2015static} presented the theorems for affine models only.

\subsection{Long-term stability under parameter perturbations} \label{subsec:stability}

The other main result of this paper is that it demonstrates the stability of the optimal portfolio under parameter perturbations in the long run. 
Such an analysis is important for two reasons.
First, model parameters estimated from market data can have errors from true parameter values.
Second, the parameters change as time flows. 
Today's model parameters obtained from market will be 
different from yesterday's.
Hence, the static optimal portfolio would be less useful for the long-term investment 
if the parameter stability is not guaranteed.

Stability of the optimal portfolio under parameter perturbations can be confirmed by showing that sensitivities of the optimal portfolio with respect to small parameter perturbations converge as the terminal time goes to infinity. To be precise, we shall show that
\begin{equation} \label{conv:sensitivity} \frac{\partial}{\partial\chi}\hat{\pi}_T(t,z)\xrightarrow{T\to\infty}\frac{\partial}{\partial\chi}\hat{\pi}_{\infty}(z),\quad\forall t\geq 0,\,\forall z\in E,
\end{equation}
where $\chi$ stands for one of the parameters in the state variable and $\hat{\pi}_{\infty}$ is the static optimal portfolio \eqref{eqn:optimal_port_asymp}. We also note from the decomposition \eqref{port:optimal:asymp_remain} that this is equivalent to the sensitivities of the intertemporal hedging portfolio with respect to model parameters of the state variable vanishing in the long-run, namely
\begin{equation}
	\frac{\partial}{\partial\chi}\left((\Sigma(z)')^{-1}\rho\,\sigma(z)\frac{f_z(t,z)}{f(t,z)}\right)\xrightarrow{T\to\infty} 0,\quad\forall t\geq 0,\,\forall z\in E.
\end{equation}
In addition, we shall provide a sharp bound on the rate of convergence for each parameter.

Two types of perturbations are considered in this paper. The first is the sensitivity with respect to the current state variable, which is the case of $\chi=z.$ We show that
\[ \left|\frac{\partial}{\partial z}\frac{f_z(t,z)}{f(t,z)}\right|\leq C(z)e^{-\hlambda t},\]
which directly derives (here, the positive function \textit{C} may vary from each inequality)
\begin{equation}
	\left|\frac{\partial}{\partial z}\hat{\pi}_T(t,z)-\frac{d}{dz}\hat{\pi}_{\infty}(z)\right|\leq C(z) e^{-\hlambda(T-t)}.
\end{equation}
The second case includes sensitivities with respect to drift and diffusion terms of the state process. In such cases where the parameter $\chi$ is included in the drift term or the diffusion term, we show that
$$\displaystyle \left|\frac{\partial}{\partial \chi}\hat{\pi}_T(t,z)-\frac{\partial}{\partial \chi}\hat{\pi}_{\infty}(z)\right|\leq C(z;\chi)(1+T)e^{-\hlambda(T-t)}.$$
Such type of differentiations are not straightforward, especially, the second case, which requires a more elaborate setting. The methods are described in detail in Appendix \ref{sec:diff}.

In conclusion, our analysis indicates that the intertemporal part of the optimal portfolio is highly insensitive to the model parameters and the current state variable in long-term investments. As a result, the optimal portfolio remains stable under parameter perturbations in the long-run.

\section{3/2 state process model} \label{sec:3/2}
This section deals with a market model with a 3/2 state process. We present the main results here, while the detailed proofs can be found in Appendix \ref{sec:3/2:append}. Recall that $S=(S^1,\ldots S^n)',\,\underline{1}=(1,\ldots,1)',\,q=p/(p-1)$, and $\delta=1/(1-q\rho'\rho)$.

Let a state variable \textit{Y} follow a solution to the following stochastic differential equation:
\begin{equation} \label{eqn:3/2}
dY_t=(b-aY_t)Y_tdt+\sigma Y_t^{3/2}dB_t\,,
\end{equation}
where $b,\sigma>0$ and $a>-\sigma^2/2$. We assume that the market model follows
\begin{align}
r(Y_t)&\equiv r, \\ dS_t&=\mbox{diag}(S_t)(r\underline{1}+\Sigma\mu Y_t)\,dt+\sqrt{Y_t}\,\mbox{diag}(S_t)\Sigma \,dW_t, \label{SDE:risky:3/2}\\ \frac{d\langle W,B\rangle_t}{dt}&=\rho\,,
\end{align}
where $\Sigma\in\mathbb{R}^{n\times n},\,\mu,\rho\in\mathbb{R}^n$ have constant entries and $\Sigma$ is invertible. We also set $p<0$ for the investor's preference. The PDE \eqref{eqn:1d:reduced} corresponding to this model is
\begin{equation} \label{eqn:3/2:PDE}
\left\{\begin{array}{ll}
u_t=\dfrac{1}{2}\sigma^2y^3u_{yy}+(by-\theta y^2)u_y+\dfrac{1}{\delta}\left(p\,r-\dfrac{q}{2}\lVert\mu\rVert^2y\right)u & \mbox{in } (0,T)\times(0,\infty) \\ u=1 & \mbox{on }\{0\}\times(0,\infty)\,,
\end{array} \right. 
\end{equation}
where $\theta:=a+q\sigma\rho'\mu$.

\begin{assume} \label{assume:3/2}
	$\theta=a+q\sigma\rho'\mu>-\sigma^2/2$.
\end{assume}
\noindent All below theorems are based on Assumption \ref{assume:3/2}.

\begin{thm} \label{thm:static_fund:3/2}
	Let Assumption \ref{assume:3/2} hold, and define a real number $\eta$ as
	\begin{equation} \label{const:lambda_eta:3/2}
		\eta:=-\left(\frac{1}{2}+\frac{\theta}{\sigma^2}\right)+\sqrt{\left(\frac{1}{2}+\frac{\theta}{\sigma^2}\right)^2\!+\frac{q\lVert\mu\rVert^2}{\delta\,\sigma^2}}.
	\end{equation}
	Then,
	\begin{enumerate}[label=(\roman*)]
		\item a function
		\begin{equation}
			u(t,z)=\EP_z\left[\exp{\left\{\int_{0}^{t}\dfrac{1}{\delta}\left(p\,r-\dfrac{q}{2}\lVert\mu\rVert^2Z_s\right)\,ds\right\}} \right]\, 
		\end{equation}
			is a $C^{1,2}((0,T)\times (0,\infty))$ solution to the PDE \eqref{eqn:3/2:PDE},
		where \textit{Z} is the unique strong solution to the SDE
		 \begin{equation}
			dZ_t=(b-\theta Z_t)Z_tdt+\sigma Z_t^{3/2}dB_t,\quad Z_0=z>0,\quad \mbp_z\mbox{-a.s.}
		\end{equation}
		\item the function $u=u(t,z)$ admits the Hansen--Scheinkman decomposition 
		\begin{equation}
			u(t,z)=e^{-(b\,\eta+pr/\delta)t}z^{-\eta}\ETP_z\left[Z_t^{\eta}\right],\quad (t,z)\in[0,T)\times(0,\infty),
		\end{equation}
		where $(Z_s)_{0\leq s\leq t}$ is the solution to the SDE 
		\begin{equation}
			dZ_s=\Bigl(b-(\theta+\sigma^2\eta)Z_s\Bigl)Z_s\,ds+\sigma Z_s^{3/2}d\widetilde{B}_s,\quad 0\leq s\leq t,\quad Z_0=z,\quad \mbp_z\mbox{-a.s.}
		\end{equation}
		for $\tmbp_z$-Brownian motion $(\tB_s)_{0\leq s\leq t},$
		\item the value function \textit{V} and the optimal portfolio $\hat{\pi}_T$ satisfy
		\begin{equation}
			V(t,x,z)=\frac{x^p}{p}u(T-t,z)^{\delta}=\frac{x^p}{p}e^{-(b\,\eta+pr/\delta)(T-t)}z^{-\eta}f(T-t,z),
		\end{equation}
		where $f(t,z)=\ETP_z\left[Z_{t}^{\eta}\right]$, and
		\begin{align} 
			\hat{\pi}_T(t,z)&=\frac{1}{1-p}(\Sigma')^{-1}\mu+\frac{\delta}{1-p}(\Sigma')^{-1}\rho\,\sigma z\frac{u_z(T-t,z)}{u(T-t,z)}\\
			&=\frac{1}{1-p}(\Sigma')^{-1}\mu-\frac{\delta\eta}{1-p}(\Sigma')^{-1}\rho\,\sigma+\frac{\delta}{1-p}(\Sigma')^{-1}\rho\,\sigma\frac{f_z(T-t,z)}{f(T-t,z)},\label{eqn:optimal_port:3/2}
		\end{align}
		\item the partial derivative $f_z(t,z)$ has the representation
		 \begin{equation}
			f_z(t,z)=\frac{\eta}{z^{3/2}}\ETP_z\left[Z_t^{\eta+1/2}\exp{\left\{-\left(\frac{\theta+\sigma^2\eta}{2}+\frac{3\sigma^2}{8}\right)\int_0^tZ_s\,ds-\frac{b}{2}t \right\}}\right]\,,
		\end{equation}
		\item the static optimal portfolio is given by
		\begin{equation} \label{port:asymp:3/2}
			\hat{\pi}_{\infty}(z)=\frac{1}{1-p}(\Sigma')^{-1}\mu-\frac{\delta\eta}{1-p}(\Sigma')^{-1}\rho\,\sigma,
		\end{equation}
		and there exist positive functions $C_1,C_2$ such that
		\begin{equation} \label{ineqn:optimal_conv:3/2}
			C_1(z)e^{-b(T-t)}\leq\left| \hat{\pi}_T(t,z)-\hat{\pi}_{\infty}(z)\right|\leq C_2(z)e^{-b(T-t)}\,,
		\end{equation}
		for any $T>0,0\leq t<T$, and $z\in (0,\infty)$,
		\item the dynamic fund separation holds with four funds: the safe asset, the myopic portfolio $(\Sigma')^{-1}\mu$, one static hedging portfolio $(\Sigma')^{-1}\rho\,\sigma$, and one intertemporal hedging portfolio $(\Sigma')^{-1}\rho\,\sigma$ with intertemporal weight $f_z(T-t,z)/f(T-t,z)$. The static fund separation holds with three funds: the safe asset, the myopic portfolio $(\Sigma')^{-1}\mu$, and one hedging portfolio $(\Sigma')^{-1}\rho\,\sigma$.
	\end{enumerate}
\end{thm}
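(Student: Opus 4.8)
The entire theorem rests on two structural facts I would establish first: the reciprocal $R:=1/Z$ of the 3/2 process is a CIR (square-root) diffusion, and the Hansen--Scheinkman machinery of Section 2.2 applies verbatim once a single eigenpair is exhibited. Applying It\^o's formula to $R=1/Z$ converts $dZ_t=(b-\theta Z_t)Z_t\,dt+\sigma Z_t^{3/2}\,dB_t$ into $dR_t=(\theta+\sigma^2-bR_t)\,dt-\sigma\sqrt{R_t}\,dB_t$, a CIR process with mean-reversion speed $b$ whose Feller condition at the origin is exactly $2(\theta+\sigma^2)\ge\sigma^2$, i.e. Assumption 3.1. Thus under Assumption 3.1 the SDE for $Z$ has a unique strong solution staying in $(0,\infty)$ without explosion, which is the well-posedness asserted in (i). For (i) I would then check finiteness of $u$: with $p<0$ one has $q>0$, and (assuming $\delta>0$, i.e. $q\rho'\rho<1$) the running cost $\tfrac1\delta(pr-\tfrac q2\lVert\mu\rVert^2 Z_s)$ is bounded above by the deterministic $\tfrac{pr}\delta$, so the Feynman--Kac expectation is finite. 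Interior parabolic regularity on $(0,T)\times(0,\infty)$, where the generator is smooth and uniformly elliptic on compacts, together with a localization and uniform-integrability argument, then yields $u\in C^{1,2}$ solving \eqref{eqn:3/2:PDE}.

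For (ii) I would search for an eigenpair of $\mathscr L$ of the pure-power form $\phi(z)=z^{-\eta}$. Substituting into $\mathscr L\phi=-\lambda\phi$ and collecting the two resulting monomials $z^{1-\eta}$ and $z^{-\eta}$ gives a quadratic whose vanishing fixes $\eta$ (the positive root being the one in \eqref{const:lambda_eta:3/2}) and a linear relation identifying $\lambda$ from the $z^{-\eta}$ coefficient. The positive root is forced because it keeps the tilted process well-posed: under $\phi=z^{-\eta}$ the Girsanov change adds $\sigma^2\phi'/\phi=-\sigma^2\eta/z$ to the drift, turning $Z$ into a 3/2 process with parameter $\tilde\theta:=\theta+\sigma^2\eta>\theta>-\sigma^2/2$, which still satisfies the Feller condition. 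Establishing that the local martingale $M_t$ of \eqref{eqn:H-S:martingale} is a \emph{true} martingale, so that the measure change is legitimate, is the first genuine obstacle; I would prove it through the CIR correspondence, either by exhibiting a Lyapunov function for the tilted generator or by verifying that the tilted 3/2 process does not explode, which is equivalent to true-martingality. The decomposition $u(t,z)=e^{-\lambda t}z^{-\eta}\ETP_z[Z_t^{\eta}]$ with $f(t,z)=\ETP_z[Z_t^{\eta}]$ then drops out of \eqref{eqn:1d:u:HSdecomp}. Part (iii) is assembled by substituting this decomposition into the reduction $V=\tfrac{x^p}p\,u(T-t,\cdot)^\delta$ and into the specialized portfolio formula \eqref{eqn:optimal_port3} (using $\Sigma(y)=\sqrt y\,\Sigma$, $\mu(y)=\Sigma\mu y$, $\sigma(y)=\sigma y^{3/2}$, so that the myopic fund collapses to $(\Sigma')^{-1}\mu$), together with $u_z/u=-\eta/z+f_z/f$; a verification argument confirms the candidate is the value function.

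For (iv) I would differentiate $f(t,z)=\ETP_z[Z_t^{\eta}]$ in the initial condition via the first-variation process $J_t:=\partial Z_t/\partial z$, which solves the linear SDE $dJ_t=(b-2\tilde\theta Z_t)J_t\,dt+\tfrac32\sigma Z_t^{1/2}J_t\,d\tB_t$, $J_0=1$, so that $f_z=\eta\,\ETP_z[Z_t^{\eta-1}J_t]$. The key simplification is to eliminate the stochastic integral in $J_t$: writing $\sigma Z^{1/2}\,d\tB=\tfrac{dZ}{Z}-(b-\tilde\theta Z)\,dt$ and $\tfrac{dZ}{Z}=d\log Z+\tfrac{\sigma^2}2 Z\,dt$ yields $\int_0^t\sigma Z_s^{1/2}\,d\tB_s=\log(Z_t/z)+(\tfrac{\sigma^2}2+\tilde\theta)\int_0^t Z_s\,ds-bt$, which turns $J_t$ into the explicit expression $Z_t^{3/2}z^{-3/2}\exp\{-(\tfrac{\tilde\theta}2+\tfrac{3\sigma^2}8)\int_0^t Z_s\,ds-\tfrac b2 t\}$ and reproduces the stated formula, once the interchange of $\partial_z$ and $\ETP_z$ is justified by uniform integrability of $Z_t^{\eta-1}J_t$.

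Finally, (v) is where the sharp rate $\hlambda=b$ must be extracted, and I expect this to be the main analytic obstacle. The clean route is a \emph{second} Hansen--Scheinkman decomposition applied to the numerator $\ETP_z[Z_t^{\eta+1/2}\exp\{-\alpha\int_0^t Z_s\,ds\}]$ with $\alpha=\tfrac{\tilde\theta}2+\tfrac{3\sigma^2}8$: seeking a power eigenfunction for $\tilde{\mathscr L}-\alpha z$ one finds $z^{-1/2}$ with eigenvalue $b/2$ (the exponent $\beta=-1/2$ is exactly the value for which the $z^{\beta+1}$ terms cancel for this $\alpha$). Combined with the explicit prefactor $e^{-bt/2}$ from (iv), the numerator decays at rate $b$, while the denominator $f(t,z)=\ETP_z[Z_t^{\eta}]$ converges to a strictly positive stationary moment at rate governed by the CIR spectral gap, also $b$. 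Hence $|f_z/f|\asymp e^{-bt}$, and the two-sided bound \eqref{ineqn:optimal_conv:3/2} follows by showing the remaining $\tmbp$-expectations in numerator and denominator converge to strictly positive limits, which supplies both $C_1(z)$ and $C_2(z)$. Part (vi) is then immediate bookkeeping: since $\phi'/\phi=-\eta/z$ contributes the single preference-weighted constant fund $-\tfrac{\delta\eta}{1-p}(\Sigma')^{-1}\rho\,\sigma$ (so $m=1$), the dynamic portfolio separates into four funds and the static portfolio into three.
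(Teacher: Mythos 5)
Your proposal is correct and follows essentially the same route as the paper: the CIR correspondence for well-posedness, the power eigenfunction $\phi(z)=z^{-\eta}$ for the first Hansen--Scheinkman decomposition, the explicit first-variation process (with the stochastic integral eliminated via $\log Z$) for the formula for $f_z$, and a second decomposition with eigenpair $(b/2,\,z^{-1/2})$ to extract the rate $\hlambda=b$. The only cosmetic differences are that the paper invokes a cited Feynman--Kac regularity theorem for (i) rather than interior Schauder estimates, and it devotes a separate lemma to the continuity of $t\mapsto\ETP_z[Z_t^{\eta}]$ at $t=0$ (via the explicit 3/2 moments), a small point your plan subsumes implicitly when asserting the expectations are bounded away from zero.
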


\begin{remark} \label{rmk:3/2}
	\begin{itemize}
		\item[(i)]  Equations \eqref{eqn:optimal_port:3/2} and \eqref{port:asymp:3/2} show that the optimal portfolio is decomposed into the sum of static optimal and intertemporal hedging portfolios:
		\begin{equation}
			\hat{\pi}_T(t,z)=\hat{\pi}_{\infty}(z)+\frac{\delta}{1-p}(\Sigma')^{-1}\rho\,\sigma\frac{f_z(T-t,z)}{f(T-t,z)}.
		\end{equation}
		
		\item[(ii)] In this market model, both the myopic and static hedging portfolios are independent of the current state variable. This is not because the state variable follows the 3/2 process, but because the dynamics of the risky assets \textit{S} are modeled as \eqref{SDE:risky:3/2}. Thus, we can derive a case that each fund depends on the current state variable by simply adjusting the power of the state process in the dynamics of \textit{S}. For example, set
		\begin{equation}
			dS_t=\mbox{diag}(S_t)(r\underline{1}+\Sigma\mu Y_t^{3/2})\,dt+Y_t\,\mbox{diag}(S_t)\Sigma \,dW_t.
		\end{equation}
		Then, the PDE that \textit{u} solves is still \eqref{eqn:3/2:PDE}, and the optimal portfolio is given by
		\begin{equation}
				\hat{\pi}_T(t,z)=\frac{1}{1-p}(\Sigma')^{-1}\mu\sqrt{z}-\frac{\delta\eta}{1-p}(\Sigma')^{-1}\rho\,\sigma\sqrt{z}+\frac{\delta}{1-p}(\Sigma')^{-1}\rho\,\sigma\sqrt{z}\frac{f_z(T-t,z)}{f(T-t,z)},
		\end{equation}
		where $f(t,z)=\ETP_z\left[Z_{t}^{\eta}\right]$.
	\end{itemize}
\end{remark}

Note that the optimal portfolio \eqref{eqn:optimal_port:3/2} clearly depends on the parameters \textit{b,a}, and $\sigma$. To emphasize the dependence, we denote as
$$ \hat{\pi}_{\cdot}(t,z)=\hat{\pi}_{\cdot}(t,z\,;\,b,a,\sigma).$$ 
The following theorem establishes the convergence of partial derivatives of the optimal portfolio \eqref{eqn:optimal_port:3/2} with respect to $z,b,a,\sigma$, respectively, to those of the static optimal portfolio. As mentioned in Subsection \ref{subsec:stability}, this implies that the intertemporal component of the optimal portfolio is largely insensitive to both the model parameters and the current state variable in long-term investments. Consequently, the optimal portfolio remains stable under perturbations in the parameters over the long run.

\begin{thm} \label{thm:perturb:3/2}
	Let Assumption \ref{assume:3/2} hold. For the optimal portfolio \eqref{eqn:optimal_port:3/2} and static optimal portfolio \eqref{port:asymp:3/2}, there exists a positive function $C:(0,\infty)\to\mbr$ depending on $b,a,\sigma$ such that
	\begin{enumerate}[label=(\roman*)]
		\item $\displaystyle \left|\frac{\partial}{\partial z}\hat{\pi}_T(t,z\,;\,b,a,\sigma)-\frac{\partial}{\partial z}\hat{\pi}_{\infty}(z\,;\,b,a,\sigma)\right|=\left|\frac{\partial}{\partial z}\hat{\pi}_T(t,z\,;\,b,a,\sigma)\right|\leq C(z\,;\,b,a,\sigma)e^{-b(T-t)},$
		\item $\displaystyle \left|\frac{\partial}{\partial b}\hat{\pi}_T(t,z\,;\,b,a,\sigma)-\frac{\partial}{\partial b}\hat{\pi}_{\infty}(z\,;\,b,a,\sigma)\right|=\left|\frac{\partial}{\partial b}\hat{\pi}_T(t,z\,;\,b,a,\sigma)\right|\leq C(z\,;\,b,a,\sigma)(1+T)e^{-b(T-t)}$
		\item $\displaystyle \left|\frac{\partial}{\partial a}\hat{\pi}_T(t,z\,;\,b,a,\sigma)-\frac{\partial}{\partial a}\hat{\pi}_{\infty}(z\,;\,b,a,\sigma)\right|\leq C(z\,;\,b,a,\sigma)(1+T)e^{-b(T-t)},$
		\item $\displaystyle \left|\frac{\partial}{\partial\sigma}\hat{\pi}_T(t,z\,;\,b,a,\sigma)-\frac{\partial}{\partial\sigma}\hat{\pi}_{\infty}(z\,;\,b,a,\sigma)\right|\leq C(z\,;\,b,a,\sigma)(1+T)e^{-b(T-t)},$
	\end{enumerate}
	for any $T>0,0\leq t<T$, and $z\in (0,\infty).$
\end{thm}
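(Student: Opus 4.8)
The plan is to reduce all four estimates to the control of a single scalar quantity. Combining the decompositions in parts (iii) and (v) of Theorem \ref{thm:static_fund:3/2}, the difference $\hat\pi_T-\hat\pi_\infty$ equals the constant vector $\frac{\delta}{1-p}(\Sigma')^{-1}\rho\,\sigma$ times the scalar $g(t,z):=f_z(T-t,z)/f(T-t,z)$, and $\hat\pi_\infty$ depends on the parameters only through $\eta$, which is constant in $(t,z)$. Since $\hat\pi_\infty$ is independent of $z$ and of $b$, the equalities in (i) and (ii) are immediate, and in every case it suffices to estimate $\partial_\chi g$ for $\chi\in\{z,b,a,\sigma\}$ together with the decay $|g(t,z)|\le C(z)e^{-bt}$ already supplied by part (v). Writing $\partial_\chi g=(\partial_\chi f_z)\,f^{-1}-f_z\,(\partial_\chi f)\,f^{-2}$, the estimates follow once I control $f$ from below and $f_z,\partial_\chi f,\partial_\chi f_z$ from above with the correct time-dependence; for $\chi=\sigma$ the prefactor itself depends on $\sigma$, contributing an extra term $\frac{\delta}{1-p}(\Sigma')^{-1}\rho\,\frac{f_z}{f}$ which is already $O(e^{-bt})$ by part (v).

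The engine for these bounds is the substitution $X_s:=1/Z_s$. Applying It\^o's formula to the $\tmbp$-dynamics of part (ii) of Theorem \ref{thm:static_fund:3/2} shows that $X$ is a square-root (CIR) process $dX_s=\big((\theta+\sigma^2)-bX_s\big)ds-\sigma\sqrt{X_s}\,d\tB_s$ with mean-reversion speed exactly $b$; moreover Assumption \ref{assume:3/2}, $\theta>-\sigma^2/2$, is precisely the Feller condition $2(\theta+\sigma^2)\ge\sigma^2$ guaranteeing $X>0$ and the finiteness of the negative $X$-moments that appear in $f(t,z)=\ETP_z[Z_t^\eta]$ and in the representation of $f_z$ in part (iv). From the affine structure of CIR I would first record (a) a uniform-in-$t$ lower bound $f(t,z)\ge C(z)>0$, using $f(0,z)=z^\eta$ and convergence of $f$ to a positive stationary moment, (b) uniform moment bounds for $Z_t^{\eta}$, $Z_t^{\eta+1/2}$ and for $\int_0^t Z_s\,ds$, and (c) exponential ergodicity at rate $b$, which both reproves $|f_z/f|\le C(z)e^{-bt}$ and supplies the $e^{-bt}$ factors below.

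For the spatial derivative (i) I would differentiate the representation in part (iv) once more in $z$, giving $f_{zz}$ as a polynomial-in-$z^{-1}$ prefactor times the same expectation plus an initial-condition (first-variation) sensitivity of that expectation; since sensitivities to the starting point of the ergodic process decay rather than accumulate, $|f_{zz}/f|\le C(z)e^{-bt}$ with no polynomial-in-$t$ loss, while $(f_z/f)^2\le C(z)e^{-2bt}$ is even smaller, yielding the clean rate $e^{-b(T-t)}$. For the parameter derivatives (ii)--(iv) the extra factor $(1+T)$ has a transparent origin in part (iv): differentiating the explicit discount $e^{-bt/2}$ in $b$ produces a factor $-\tfrac t2$ (so that this contribution is just $-\tfrac t2 f_z$), while differentiating $\exp\!\big(-c\int_0^t Z_s\,ds\big)$ in $a$ or $\sigma$, with $c=\frac{\theta+\sigma^2\eta}{2}+\frac{3\sigma^2}{8}$, brings down $\int_0^t Z_s\,ds$, which grows linearly in $t$; multiplied against the rate-$b$ decay of the surrounding expectation these yield the $(1+T)e^{-b(T-t)}$ bound. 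The remaining contributions come from the dependence of $\eta$ and of the $\tmbp$-drift (and, for $\sigma$, the diffusion coefficient) on the parameters, which I would handle by the differentiation-under-the-expectation method of Appendix \ref{sec:diff}, expressing $\partial_\chi f$ and $\partial_\chi f_z$ through first-variation processes as in \citet{park2018sensitivity}; the mean reversion of $X$ keeps these variation processes and hence $\partial_\chi f$ bounded uniformly in $t$, so the term $f_z\,(\partial_\chi f)\,f^{-2}$ never exceeds $C(z)e^{-bt}$.

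I expect the $\sigma$-sensitivity (iv) to be the main obstacle. Because $\sigma$ enters the diffusion coefficient $\sigma Z^{3/2}$ of $Z$ (equivalently $\sigma\sqrt{X}$ for the CIR process), it cannot be removed by a Girsanov change of measure, and differentiation under the expectation must instead be justified through the flow of the SDE, whose first-variation process carries an additional stochastic-integral term. The delicate point is to obtain moment bounds for this variation process, which involves $Z^{1/2}$ and $Z$, that remain controlled as $t\to\infty$, so that after multiplication by the ergodic decay $e^{-bt}$ and the linearly growing $\int_0^t Z_s\,ds$ the net bound is exactly $(1+T)e^{-b(T-t)}$. Once these moment and regularity estimates for the CIR flow are in place, assembling the quotient-rule expression for $\partial_\sigma g$ is routine.
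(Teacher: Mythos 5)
Your overall reduction is the same as the paper's: everything comes down to the quotient-rule identities $\partial_z(f_z/f)=f_{zz}/f-(f_z/f)^2$ and $\partial_\chi(f_z/f)=(f_z/f)\bigl(\partial_\chi\log f_z-\partial_\chi\log f\bigr)$, the CIR substitution $X=1/Z$ (which the paper also invokes, e.g.\ to verify condition (C1) via the exponential moment of $\int_0^t Z_s^{-1}ds$ and to identify Assumption \ref{assume:3/2} with the Feller condition), the flow derivative of Lemma \ref{lem:diff:3/2} for part (i), the Girsanov-type sensitivity formula of Proposition \ref{prop:perturb:drift} for the drift parameters $b$ and $a$, and the observation that the factor $(1+T)$ originates from the explicit $t$ produced by differentiating the exponential discount in the representation of $f_z$. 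Two caveats on the parts you do sketch: under $\tmbp$ the CIR process $X=1/Z$ has drift $(\theta+\sigma^2(\eta+1))-bX$, not $(\theta+\sigma^2)-bX$ (harmless, since $\eta>0$); and your claim that the first-variation terms keep $\partial_\chi f$ \emph{uniformly bounded} in $t$ is stronger than what the method yields --- the Cauchy--Schwarz/It\^o-isometry bound on $\ETP_z[f(Z_t)\int_0^t g(Z_s)\,d\tB_s]$ only gives $O(\sqrt t)$, which is what the paper proves and which still suffices for the stated $(1+T)$ bound.

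The genuine gap is part (iv). You correctly identify that the perturbation of $\sigma$ enters the diffusion coefficient and cannot be absorbed by a Girsanov change of measure, but your proposed remedy --- differentiating the stochastic flow directly in $\sigma$ and controlling the moments of the resulting first-variation process --- is exactly the step you flag as ``the main obstacle'' and do not carry out; for a $3/2$ (or CIR) diffusion that variation process solves a linear SDE with unbounded coefficients involving $Z^{1/2}$ and $Z$, and obtaining time-uniform moment bounds for it is not routine. The paper avoids this entirely via the Lamperti transformation of Corollary \ref{cor:perturb:united}: setting $\widetilde Z_t^\epsilon=2/\bigl((\sigma+\epsilon)\sqrt{Z_t^\epsilon}\bigr)$ turns the perturbed family into processes with \emph{unit} diffusion coefficient, so that the $\sigma$-perturbation reappears only in the drift and in the initial condition $\tilde z^\epsilon$; the drift part is then handled by Proposition \ref{prop:perturb:drift} exactly as in (ii)--(iii), and the initial-condition part by Corollary \ref{cor:perturb:flow}, with the required integrability supplied by Lemmas \ref{lem:expect_conti:3/2} and \ref{lem:exp_finite:3/2}. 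Without either this transformation or a worked-out moment theory for the $\sigma$-derivative of the flow, your argument for (iv) is incomplete.
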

\noindent In this model, the exponential convergence rate $\hlambda$ is \textit{b}. Compared to $\hlambda$ of the inverse Bessel state process model in the next section, this has a very simple form. In addition, \textit{b} does not affects the static optimal portfolio.

\begin{remark}
	The approach in this paper is also valid for the two affine models presented in \citet{guasoni2015static}. Thus, similar results
	(HS decomposition, dynamic/static fund separations, etc)
	given in Theorem \ref{thm:static_fund:3/2} and \ref{thm:perturb:3/2} hold for the two models therein.
\end{remark}

\section{Inverse Bessel state process model} \label{sec:invB}
Let a state variable \textit{Y} follow a solution to the following SDE:
\begin{equation} \label{eqn:invB}
dY_t=(b-aY_t)Y_t^2dt+\sigma Y_t^2dB_t\,,
\end{equation}
where $b,\sigma>0$, and $a>-\sigma^2/2$. For a given deterministic $Y_0>0$, we show that the SDE \eqref{eqn:invB} has a unique strong solution, and the solution remains in $(0,\infty)$ for all time $t\geq 0$.

We set the risk-free rate and volatility matrix as constants, while the excess returns are set to depend on the state variable \textit{Y}:
\begin{align}
r(Y_t)&\equiv r, \\ dS_t&=\mbox{diag}(S_t)(r\underline{1}+\Sigma\mu Y_t^2)\,dt+\mbox{diag}(S_t)\Sigma Y_t\,dW_t,\\ \frac{d\langle W,B\rangle_t}{dt}&=\rho\,,
\end{align}
where $\Sigma\in\mathbb{R}^{n\times n},\,\mu,\rho\in\mathbb{R}^n$ have constant entries and $\Sigma$ is invertible. We also set $p<0$ for the investor's preference. The PDE \eqref{eqn:1d:reduced} corresponding to this model is
\begin{equation} \label{eqn:PDE:invB}
\left\{\begin{array}{ll}
u_t=\dfrac{1}{2}\sigma^2y^4u_{yy}+(by^2-\theta y^3)u_y+\dfrac{1}{\delta}\left(p\,r-\dfrac{q}{2}\lVert\mu\rVert^2y^2\right)u & \mbox{in } (0,T)\times(0,\infty) \\ u=1 & \mbox{on }\{0\}\times(0,\infty)\, ,
\end{array} \right.
\end{equation}
where $\theta:=a+q\sigma\rho'\mu$.

\begin{assume} \label{assume:invB}
	$\theta=a+q\sigma\rho'\mu>-\sigma^2/2$.
\end{assume}
\noindent All below theorems are based on Assumption \ref{assume:invB}. Their proofs are provided in Appendix \ref{sec:invB:append}.

\begin{thm} \label{thm:static_fund:invB}
	Let Assumption \ref{assume:invB} hold, and define real numbers $\eta,\xi,\lambda$, and $\hlambda$ as
	\begin{equation}
		\eta:=\frac{-\left(\sigma^2/2+\theta\right)+\sqrt{\left(\sigma^2/2+\theta\right)^2+q\sigma^2\lVert\mu\rVert^2/\delta}}{\sigma^2},\,\xi:=\dfrac{b\,\eta}{\sigma^2(\eta+1)+\theta},\, \lambda:=-\dfrac{1}{2}\sigma^2\xi^2+b\,\xi-\dfrac{p\,r}{\delta},
	\end{equation}
	and
	\begin{equation} \label{const:hlambda:invB}
		\hlambda:=-\frac{1}{2}\sigma^2\left(\frac{b-\sigma^2\xi}{\theta+\sigma^2(\eta+2)}\right)^2+\frac{(b-\sigma^2\xi)^2}{\theta+\sigma^2(\eta+2)}>0.
	\end{equation}
	Then,
	\begin{enumerate}[label=(\roman*)]
		\item a function
		 \begin{equation}
			u(t,z)=\EP_z\left[\exp{\left\{\int_{0}^{t}\dfrac{1}{\delta}\left(p\,r-\dfrac{q}{2}\lVert\mu\rVert^2Z_s^2\right)\,ds\right\}} \right]\,,
		\end{equation}
		where \textit{Z} is the unique strong solution to the SDE
		\begin{equation}
			dZ_t=(b-\theta Z_t)Z_t^2dt+\sigma Z_t^2dB_t,\quad z>0,\quad \mbp_z\mbox{-a.s.}
		\end{equation}
		is a $C^{1,2}((0,T)\times (0,\infty))$ solution to the PDE \eqref{eqn:PDE:invB},
		
		\item the function $u=u(t,z)$ admits the Hansen--Scheinkman decomposition form
		\begin{equation}
			u(t,z)=e^{-\lambda t}z^{-\eta}e^{\xi/z}\ETP_z\left[Z_t^{\eta}e^{-\xi/Z_t}\right],\quad (t,z)\in[0,T)\times(0,\infty),
		\end{equation}
		where $\tmbp_z$-dynamics of $(Z_s)_{0\leq s\leq t}$ is given by
		\begin{equation}
			dZ_s=\Bigl((b-\sigma^2\xi)-(\theta+\sigma^2\eta)Z_s\Bigl)Z_s^2ds+\sigma Z_s^2d\widetilde{B}_s,\,0<s\leq t\,,\quad Z_0=z,
		\end{equation}
	
		\item the value function \textit{V} and optimal portfolio $\hat{\pi}_T$ satisfy
		\begin{equation}
			V(t,x,z)=\frac{x^p}{p}u(T-t,z)^{\delta}=\frac{x^p}{p}e^{-\lambda (T-t)}z^{-\eta}e^{\xi/z}f(T-t,z),
		\end{equation}
		where $f(t,z)=\ETP_z\left[Z_{t}^{\eta}e^{-\xi/Z_{t}}\right]$, and
		\begin{align} 
				\hat{\pi}_T(t,z)&=\frac{1}{1-p}\left((\Sigma')^{-1}\mu+\delta(\Sigma')^{-1}\rho\,\sigma z\frac{u_z(T-t,z)}{u(T-t,z)}\right)\\	&=\frac{1}{1-p}(\Sigma')^{-1}\mu-\frac{\delta\eta}{1-p}(\Sigma')^{-1}\rho\,\sigma-\frac{\delta\xi}{1-p}(\Sigma')^{-1}\rho\,\sigma\frac{1}{z}\\
				&\qquad\qquad\qquad+\frac{\delta}{1-p}(\Sigma')^{-1}\rho\,\sigma z\frac{f_z(T-t,z)}{f(T-t,z)}, \label{eqn:optimal_port:invB}
		\end{align}
		\item the partial derivative $f_z(t,z)$ has the representation
		\begin{equation}
			f_z(t,z)=\frac{1}{z^2}\ETP_z\left[\left(\eta Z_t^{\eta+1}+\xi Z_t^{\eta}\right)\exp{\left\{-\frac{\xi}{Z_t}-\left(\sigma^2(1+\eta)+\theta\right)\int_{0}^{t}Z_s^2ds\right\}}\right]\,,
		\end{equation}
	
		\item the static optimal portfolio is given by
		\begin{equation} \label{port:asymp:invB}
			\hat{\pi}_{\infty}(z)=\frac{1}{1-p}(\Sigma')^{-1}\mu-\frac{\delta\eta}{1-p}(\Sigma')^{-1}\rho\,\sigma-\frac{\delta\xi}{1-p}(\Sigma')^{-1}\rho\,\sigma\frac{1}{z},
		\end{equation}
		and there exist positive functions $C_1,C_2$ such that
		\begin{equation} \label{ineqn:optimal_conv:invB}
			C_1(z)e^{-\hlambda(T-t)}\leq\left| \hat{\pi}_T(t,z)-\hat{\pi}_{\infty}(z)\right|\leq C_2(z)e^{-\hlambda(T-t)}\,,
		\end{equation}
		for any $T>0,0\leq t<T$, and $z\in (0,\infty)$,
		\item the dynamic fund separation holds with five funds: the safe asset, the myopic portfolio $(\Sigma')^{-1}\mu$, two static hedging portfolios $(\Sigma')^{-1}\rho\,\sigma$ and $(\Sigma')^{-1}\rho\,\sigma/z$, and one dynamic hedging portfolio $(\Sigma')^{-1}\rho\,\sigma z$ with intertemporal weight $f_z(T-t,z)/f(T-t,z)$. The static fund separation holds with four funds: the safe asset, the myopic portfolio $(\Sigma')^{-1}\mu$, and two hedging portfolios $(\Sigma')^{-1}\rho\,\sigma$ and $(\Sigma')^{-1}\rho\,\sigma/z$.
	\end{enumerate}
\end{thm}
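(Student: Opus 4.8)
The plan is to mirror the architecture of Theorem~\ref{thm:static_fund:3/2}, exploiting throughout the defining feature of the model: under the substitution $X_t=1/Z_t$, It\^o's formula turns the Feynman--Kac SDE $dZ_t=(b-\theta Z_t)Z_t^2\,dt+\sigma Z_t^2\,dB_t$ into
\begin{equation}
dX_t=\left(-b+(\theta+\sigma^2)X_t^{-1}\right)dt-\sigma\,dB_t,
\end{equation}
a Bessel-type diffusion with constant noise and constant drift $-b$. Under Assumption~\ref{assume:invB} the coefficient $\theta+\sigma^2$ of $X_t^{-1}$ exceeds $\sigma^2/2$, the Bessel threshold rendering $0$ inaccessible for $X$; together with the downward constant drift this yields a non-exploding $(0,\infty)$-valued solution, hence a unique strong $(0,\infty)$-valued $Z$. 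For part~(i), finiteness of $u$ is immediate: since $q>0$, the term $-\tfrac q2\lVert\mu\rVert^2 Z_s^2$ acts as a killing rate, so the integrand is bounded above by $pr/\delta$ and $0<u\le e^{prt/\delta}$. The $C^{1,2}$ regularity and the PDE~\eqref{eqn:PDE:invB} then follow from interior parabolic (Schauder) estimates on compact subintervals of $(0,\infty)$ after localizing the coefficients, combined with a standard Feynman--Kac verification.

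For part~(ii) I would solve $\mathscr{L}\phi=-\lambda\phi$ with the explicit ansatz $\phi(z)=z^{-\eta}e^{\xi/z}$. Writing $\phi'/\phi=-\eta/z-\xi/z^2$ and collecting the coefficients of $z^2,z^1,z^0$ in $\mathscr{L}\phi/\phi$ produces three equations: the $z^2$-coefficient is the quadratic defining $\eta$, the $z^1$-coefficient is linear in $\xi$ and gives $\xi=b\eta/(\sigma^2(\eta+1)+\theta)$, and the $z^0$-coefficient returns the constant $-\lambda$; the sign assumptions $b,\sigma>0$ and $\theta>-\sigma^2/2$ yield $\eta>0$, $\xi>0$, and $\theta+\sigma^2(\eta+1)>0$. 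The induced tilted drift is $((b-\sigma^2\xi)-(\theta+\sigma^2\eta)z)z^2$, and to upgrade the local martingale $M$ to a true martingale I would verify that the tilted diffusion is again a non-exploding inverse-Bessel process (the same substitution, now with parameters $b-\sigma^2\xi$ and $\theta+\sigma^2\eta$), giving $u=e^{-\lambda t}\phi\,f$ with $f(t,z)=\ETP_z[Z_t^{\eta}e^{-\xi/Z_t}]$. Part~(iii) is then purely algebraic: substitute $u_z/u=\phi'/\phi+f_z/f$ and $z\,\phi'/\phi=-\eta-\xi/z$ into~\eqref{eqn:optimal_port3}.

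For part~(iv) I would differentiate $f(t,z)=\ETP_z[F(Z_t)]$, $F(w)=w^\eta e^{-\xi/w}$, in the initial datum. The first-variation process $J_t=\partial_z Z_t$ solves the linearized SDE, and using $d\ln Z_s$ to eliminate the stochastic integral it integrates in closed form to $J_t=(Z_t/z)^2\exp\{-(\theta+\sigma^2(\eta+1))\int_0^t Z_s^2\,ds\}$. Multiplying $F'(Z_t)J_t$, where $F'(w)=w^{\eta-2}(\eta w+\xi)e^{-\xi/w}$, and taking $\ETP_z$ reproduces the stated representation exactly, with the prefactor $1/z^2$ and the additive potential $(\sigma^2(\eta+1)+\theta)\int_0^tZ_s^2\,ds$. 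The interchange of $\partial_z$ and $\ETP_z$ must be justified by dominated convergence, which requires polynomial moment bounds on $Z$ and $J$.

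Part~(v) is the crux and the genuinely new ingredient — the explicit rate absent from \citet{guasoni2015static}. To obtain the sharp two-sided rate $\hlambda$ I would run a \emph{second} Hansen--Scheinkman decomposition, now on the Feynman--Kac semigroup appearing in the representation of $f_z$ from~(iv): its generator has the same inverse-Bessel form with drift $((b-\sigma^2\xi)-(\theta+\sigma^2\eta)z)z^2$ and nonnegative potential $(\sigma^2(\eta+1)+\theta)z^2$, and the ansatz $\hat\phi(z)=z^{-1}e^{\hat\xi/z}$ with $\hat\xi=(b-\sigma^2\xi)/(\theta+\sigma^2(\eta+2))$ is an eigenfunction with eigenvalue exactly $\hlambda$ — the exponent collapses to $1$ in the quadratic, which is the structural reason $\hlambda$ has the closed form~\eqref{const:hlambda:invB}, and $\theta+\sigma^2\eta>-\sigma^2/2$ gives $\hlambda>0$. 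This produces $z^2f_z(t,z)=e^{-\hlambda t}\hat\phi(z)\hat f(t,z)$ for a remainder $\hat f$, while $f(t,z)$ itself tends to a strictly positive constant; combined with ergodicity of the tilted inverse-Bessel diffusions and the strict positivity of the integrand $\eta Z_t^{\eta+1}+\xi Z_t^{\eta}$ (both $\eta,\xi>0$), this gives $C_1(z)e^{-\hlambda t}\le|f_z/f|\le C_2(z)e^{-\hlambda t}$, which transfers to~\eqref{ineqn:optimal_conv:invB} through~(iii); part~(vi) is then read off directly. The hard part will be securing the lower bound and the exact rate simultaneously: the upper bound alone follows from the nonnegative potential, but the matching lower bound demands genuine spectral/ergodic control — uniform-in-$t$ positivity of the remainder $\hat f$ and a strictly positive limit for $f$ — for diffusions with superlinearly growing coefficients, where the Lyapunov and compactness arguments are cleanest after passing to the constant-noise Bessel process $1/Z$. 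The true-martingale property in~(ii) and the differentiation under the expectation in~(iv) are secondary hurdles, both routine once polynomial moment bounds on $1/Z$ are in hand.
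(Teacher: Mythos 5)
Your architecture matches the paper's proof almost exactly: the same eigenfunction ansatz $\phi(z)=z^{-\eta}e^{\xi/z}$ with the coefficients of $z^2,z,1$ yielding $\eta,\xi,\lambda$; the same first-variation formula $\partial_zZ_t=(Z_t/z)^2\exp\{-(\theta+\sigma^2(\eta+1))\int_0^tZ_s^2\,ds\}$ for part (iv); and, for the crux (v), the same second Hansen--Scheinkman decomposition with $\hat\phi(z)=z^{-1}e^{\zeta/z}$, $\zeta=(b-\sigma^2\xi)/(\theta+\sigma^2(\eta+2))$, eigenvalue $\hlambda$, followed by ergodicity of the twice-tilted diffusion and positivity of the integrand $\eta Z_t^{\eta+1}+\xi Z_t^{\eta}$ to get the two-sided bound. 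Your observation that $m=1$ solves the quadratic for the exponent of $\hat\phi$ is correct and is exactly why $\hlambda$ has the stated closed form. The one place you take a genuinely different technical route is the supporting analysis (well-posedness, moment bounds, continuity of $t\mapsto f(t,z)$ at $t=0$, integrability of the exponential functional needed for the martingale property): you propose to work throughout with the constant-noise Bessel process $1/Z$, whereas the paper passes to $1/Z^2$ (a square-root diffusion) for well-posedness and, more importantly, proves a pathwise comparison $Z_t^2\le\widetilde Y_t$ with a 3/2 process (Lemma \ref{lem:comparison:invB}) so that the explicit 3/2 moment formulas and the exponential-moment bound of Lemma \ref{lem:exp_finite:3/2} can be imported wholesale. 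Your route is plausible but you would have to reprove those estimates for the Bessel process; the paper's comparison trick buys them for free.

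There is one genuine gap: you declare part (iii) ``purely algebraic.'' The second identity in (iii) (the fund decomposition of $\hat\pi_T$) is indeed algebra once $u_z/u=\phi'/\phi+f_z/f$ is in hand, but the first identity $V(t,x,z)=\frac{x^p}{p}u(T-t,z)^{\delta}$ --- i.e.\ that this Feynman--Kac function actually produces the value function and that $\hat\pi_T$ is optimal --- is a verification theorem, not algebra. The paper proves it by the argument of Subsection \ref{subsec:verification}: one must show weak well-posedness of the wealth-optimal state SDE whose drift involves $u_z/u$, which reduces to showing that $(f(T-t,Z_t))_{0\le t\le T}$ is a true martingale and equals the Dol\'eans--Dade exponential $\mathcal{E}\bigl(\int_0^{\cdot}\sigma(Z_s)\frac{f_z(T-s,Z_s)}{f(T-s,Z_s)}\,d\widetilde{B}_s\bigr)$, so that Girsanov applies. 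Your proposal contains no argument for this step, and it is not subsumed by anything else you do; you should add it (it follows from the time-homogeneous Markov property of $Z$ under $\tmbp_z$ plus It\^o's formula, as in the paper's outline).
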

\begin{remark} \label{rmk:invB}
	As in case of the 3/2 state process model, the optimal portfolio is decomposed into
	\begin{equation}
		\hat{\pi}_T(t,z)=\hat{\pi}_{\infty}(z)+\frac{\delta}{1-p}(\Sigma')^{-1}\rho\,\sigma z\frac{f_z(T-t,z)}{f(T-t,z)},
	\end{equation}
	and the myopic portfolio is independent of the current state variable. On the contrary, there is one hedging portfolio that depends on the current state variable.
\end{remark}

As explained in the previous section, we also denote $\hat{\pi}_T(t,z)=\hat{\pi}_T(t,z\,;\,b,a,\sigma)$ and $\hat{\pi}_{\infty}(z)=\hat{\pi}_{\infty}(z\,;\,b,a,\sigma)$. The following theorem shows that the optimal portfolio in this market model is stable under parameter perturbations in long-term investments.
 
\begin{thm} \label{thm:perturb:invB}
	Let Assumption \ref{assume:invB} hold. For the optimal portfolio \eqref{eqn:optimal_port:invB}, static optimal portfolio \eqref{port:asymp:invB}, and $\hlambda$ defined as \eqref{const:hlambda:invB}, there exists a positive function $C:(0,\infty)\to\mbr$ depending on $b,a$, and $\sigma$ such that
	\begin{enumerate}[label=(\roman*)]
		\item $\displaystyle \left|\frac{\partial}{\partial z}\hat{\pi}_T(t,z\,;\,b,a,\sigma)-\frac{\partial}{\partial z}\hat{\pi}_{\infty}(z\,;\,b,a,\sigma)\right|\leq C(z\,;\,b,a,\sigma)e^{-\hlambda(T-t)},$
		\item $\displaystyle \left|\frac{\partial}{\partial b}\hat{\pi}_T(t,z\,;\,b,a,\sigma)-\frac{\partial}{\partial b}\hat{\pi}_{\infty}(z\,;\,b,a,\sigma)\right|\leq C(z\,;\,b,a,\sigma)(1+T)e^{-\hlambda(T-t)}$
		\item $\displaystyle \left|\frac{\partial}{\partial a}\hat{\pi}_T(t,z\,;\,b,a,\sigma)-\frac{\partial}{\partial a}\hat{\pi}_{\infty}(z\,;\,b,a,\sigma)\right|\leq C(z\,;\,b,a,\sigma)(1+T)e^{-\hlambda(T-t)},$
		\item $\displaystyle \left|\frac{\partial}{\partial\sigma}\hat{\pi}_T(t,z\,;\,b,a,\sigma)-\frac{\partial}{\partial\sigma}\hat{\pi}_{\infty}(z\,;\,b,a,\sigma)\right|\leq C(z\,;\,b,a,\sigma)(1+T)e^{-\hlambda(T-t)},$
	\end{enumerate}
	for any $T>0,0\leq t<T$, and $z\in (0,\infty).$
\end{thm}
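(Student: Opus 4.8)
The plan is to reduce everything to the intertemporal weight and then differentiate its probabilistic representation. By Remark \ref{rmk:invB}, the two portfolios differ only by the intertemporal term,
\begin{equation}
\hat{\pi}_T(t,z)-\hat{\pi}_{\infty}(z)=\frac{\delta}{1-p}(\Sigma')^{-1}\rho\,\sigma\,z\,\frac{f_z(T-t,z)}{f(T-t,z)},
\end{equation}
so for every parameter $\chi\in\{z,b,a,\sigma\}$ the quantity to be bounded is $\partial_\chi\big(z\,f_z(T-t,z)/f(T-t,z)\big)$; for $\chi=z$ the static part also depends on $z$ through the $1/z$ fund, but that dependence is subtracted off in $\partial_z\hat{\pi}_T-\partial_z\hat{\pi}_\infty$, again leaving only the intertemporal term. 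The working ingredients are the representations $f(t,z)=\ETP_z[Z_t^\eta e^{-\xi/Z_t}]$ and the explicit formula for $f_z$ in Theorem \ref{thm:static_fund:invB}(iv), together with the two-sided bound \eqref{ineqn:optimal_conv:invB}, whose proof also yields that $f(T-t,z)$ is bounded away from $0$ while $|f_z/f|\asymp e^{-\hlambda(T-t)}$.

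For part (i) I keep $b,a,\sigma$ (hence $\eta,\xi,\theta$) fixed and differentiate with respect to the initial condition $z$. Writing $\partial_z(z f_z/f)=f_z/f+z\,\partial_z(f_z/f)$, the first summand is already $O(e^{-\hlambda(T-t)})$ by \eqref{ineqn:optimal_conv:invB}. For the second I differentiate the Feynman--Kac representation through the first-variation (tangent) flow $J_s:=\partial_z Z_s$, which solves the linear SDE obtained by differentiating the $\tmbp_z$-dynamics of $Z$; since the tilted process is ergodic and forgets its initial condition exponentially, $J_s$ and hence $\partial_z(f_z/f)$ inherit the decay $e^{-\hlambda(T-t)}$ without any polynomial prefactor, which gives the clean bound in (i).

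For parts (ii)--(iv) the parameter $\chi\in\{b,a,\sigma\}$ enters in three distinct places: the constants $\eta,\xi$ (and $\theta$, through which $\hlambda$ is defined), the drift and diffusion coefficients of the $\tmbp_z$-dynamics of $Z$, and the exponential functional $\exp\{-(\sigma^2(1+\eta)+\theta)\int_0^t Z_s^2\,ds\}$ appearing in $f_z$. Following the machinery of Appendix \ref{sec:diff}, I differentiate the constants directly, differentiate the law of $Z$ via the parameter first-variation process $\partial_\chi Z_s$ (an inhomogeneous linear SDE driven by $Z$), and differentiate the path functional. Each of these produces a factor integrated over the whole interval $[0,t]$---most notably $\partial_\chi\big(\sigma^2(1+\eta)+\theta\big)\int_0^t Z_s^2\,ds$ and the accumulated growth of $\partial_\chi Z_s$---which contributes the linear-in-$T$ prefactor, while the rate $\hlambda$ from \eqref{const:hlambda:invB} supplies the exponential decay $e^{-\hlambda(T-t)}$. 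Collecting terms and using the lower bound on $f$ gives the $(1+T)e^{-\hlambda(T-t)}$ estimates.

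The main obstacle is justifying the interchange of differentiation and expectation and controlling the resulting first-variation and path-functional terms under the quadratic diffusion $\sigma Z^2$ of the inverse Bessel dynamics. Concretely, I must establish moment bounds $\sup_s\ETP_z[Z_s^k]<\infty$, uniform over the relevant parameter range, for the exponents $k\in\{\eta,\eta+1,\dots\}$ produced by the differentiations, bound the exponential functional $\exp\{-(\sigma^2(1+\eta)+\theta)\int_0^t Z_s^2\,ds\}$---whose coefficient is positive by Assumption \ref{assume:invB}---and show that this decay dominates the polynomial-in-$t$ growth coming from $\int_0^t Z_s^2\,ds$ and from the variation processes. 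Because the diffusion grows quadratically, these moment and integrability estimates are the delicate part; once they are in place, dominated convergence legitimizes the differentiations, and the rate $\hlambda$ obtained from the second Hansen--Scheinkman factorization of the tilted generator closes the argument.
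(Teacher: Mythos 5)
Your overall strategy coincides with the paper's: reduce to the intertemporal weight $z f_z/f$, differentiate the probabilistic representations via the machinery of Appendix \ref{sec:diff}, and let the factor $|f_z/f|\le C(z)e^{-\hlambda(T-t)}$ carry the exponential decay while the log-derivatives contribute the $(1+T)$ prefactor. However, there are two concrete gaps. First, in part (i) you assert that the tangent flow $J_s=\partial_z Z_s$, and hence $\partial_z(f_z/f)$, decays like $e^{-\hlambda(T-t)}$ ``since the tilted process is ergodic and forgets its initial condition exponentially.'' Ergodicity (positive recurrence) gives convergence of expectations but no rate, and certainly not the specific rate $\hlambda$; this step is not a proof. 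The paper's actual mechanism is that, by Lemma \ref{lem:diff:invB}, $\partial_z Z_t=(Z_t/z)^2\exp\{-(\sigma^2(1+\eta)+\theta+\cdots)\int_0^t Z_s^2\,ds\}$ carries an exponential functional with strictly positive coefficient (Assumption \ref{assume:invB}), and the decay rate of its expectation is extracted by performing \emph{yet another} Hansen--Scheinkman factorization of the twice-tilted generator, producing the eigenpair $(\bar\lambda,\varsigma)$ and the measure $\bmbp_z$ in \eqref{temp:initial:append:invB}. The dominant contribution to $f_{zz}$ then comes from differentiating the deterministic prefactor $z^{-3}e^{\zeta/z}$ and decays at rate $\hlambda$, while the tangent-flow term decays faster at rate $\hlambda+\bar\lambda$; your heuristic lands on the correct bound but does not establish it.

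Second, you correctly flag that the moment and exponential-moment estimates under the quadratic diffusion $\sigma Z^2$ are ``the delicate part,'' but you supply no mechanism for them, and this is where the proof actually has content: the law of the inverse Bessel process is not explicit, so conditions (C1)--(C3) of Proposition \ref{prop:perturb:drift} (in particular $\ETP_z[\exp\{\epsilon_0\int_0^t Z_s^2\,ds\}]<\infty$ and continuity of $s\mapsto\ETP_z[Z_s^{\nu}]$) cannot be checked directly. The paper's device is the comparison Lemma \ref{lem:comparison:invB}, which dominates $Z_t^2$ pathwise by a 3/2 process and then imports Lemmas \ref{lem:exp_finite:3/2} and \ref{lem:expect_conti:3/2}. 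Without this (or an equivalent substitute) the interchange of differentiation and expectation, and hence parts (ii)--(iv), is not justified. Finally, note that for part (ii) the source of the $(1+T)$ factor in the paper is the term $-\frac{d\hlambda}{db}\,t$ obtained by differentiating the eigenvalue in the factorized form \eqref{eqn:f_z:append:invB}, together with a $\sqrt{t}$ from the It\^o-isometry bound on the stochastic-integral term; attributing it instead to the raw functional $\int_0^t Z_s^2\,ds$ would require an additional ergodic-average argument that you do not supply.
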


\section{Filtered Ornstein-Uhlenbeck state process} \label{sec:FOU}
In reality, it is hard to observe the state variable directly. Instead, an investor can estimate the value of the state variable only through stock prices. Such real-world limitation leads us to consider the conditional mean
\begin{equation} \label{eqn:filtered_state}
	\hat{Y}_t:=\EP[Y_t|\mathcal{F}_t^S],\quad 0\leq t\leq T,
\end{equation}
where $(\mathcal{F}_t^S)_{0\leq t\leq T}$ denotes the filtration generated by $(S_t)_{0\leq t\leq T}$. Here $\hat{Y}_t$ is clearly the projection of $Y_t$ onto the subspace consisting of $\mathcal{F}_t^S$-adapted random variables.

Let the market follow
\begin{align}
	dY_t&=(b-aY_t)dt+\sigma dB_t\,, \label{eqn:OU} \\
	r(Y_t)&\equiv r, \\ dS_t&=\mbox{diag}(S_t)(r\underline{1}+\Sigma\mu Y_t)\,dt+\mbox{diag}(S_t)\Sigma \,dW_t, \label{eqn:stock:OU} \\ \frac{d\langle W,B\rangle_t}{dt}&=\rho\,,
\end{align}
where $\Sigma\in\mathbb{R}^{n\times n},\,\mu,\rho\in\mathbb{R}^n$ have constant entries and $\Sigma$ is invertible. We assume $p<0$, and that the investor cannot observe the state process \eqref{eqn:OU} directly but filters it through the dynamics of stock prices \eqref{eqn:stock:OU}. Then the dynamics of \eqref{eqn:filtered_state} in this market can be obtained by solving the linear filtering problem of finding the best estimate $\hat{Y}$ of \textit{Y} given the observations $dS_t^i/S_t^i,\,i=1,\ldots,n$. In particular, applying \citealp[Theorem 10.5.1]{kallianpur1980stochastic} directly shows that the Kalman filter $(\hat{Y}_t,P_t)$ is given by the solution to the following system of equations
\begin{align}
	d\hY_t&=(b-a\,\hY_t)dt+(P_t\,\mu'+\sigma\rho')\Sigma^{-1}d\nu_t, \label{eqn:FOU} \\
	\frac{dP_t}{dt}&=-2(a+\sigma\rho'\mu)P_t+\sigma^2(1-\lVert\rho\rVert^2)-\lVert\mu\rVert^2P_t^2, \label{eqn:cond_var}
\end{align}
where $\nu$ is the innovation process
\begin{equation}
	d\nu_t=\Sigma\,\mu(Y_t-\hY_t)dt+\Sigma\,dW_t,
\end{equation}
and $P_t$ is the conditional variance
\begin{equation}
	P_t=\EP[(Y_t-\hY_t)^2].
\end{equation}
The Riccati equation \eqref{eqn:cond_var} has a constant solution
\begin{equation}
	P_t\equiv\frac{-(a+\sigma\rho'\mu)+\sqrt{(a+\sigma\rho'\mu)^2+\sigma^2(1-\norm{\rho}^2)\norm{\mu}^2}}{\norm{\mu}^2}
\end{equation}
if we set the initial value $P_0$ as the constant in the equation. Since observations does not increase or decrease over time, it is reasonable to treat \textit{P} as a constant. Thus, in view of the investor, the market follows
\begin{align}
	d\hY_t&=(b-a\,\hY_t)dt+(P_0\,\mu'+\sigma\rho')\Sigma^{-1}d\nu_t\,,\\
	r(Y_t)&\equiv r, \\ dS_t&=\mbox{diag}(S_t)(r\underline{1}+\Sigma\mu Y_t)\,dt+\mbox{diag}(S_t)\Sigma \,dW_t, \\ \frac{d\langle W,\nu\rangle_t}{dt}&=\Sigma\,,
\end{align}
where
\begin{equation}
	P_0=\frac{-(a+\sigma\rho'\mu)+\sqrt{(a+\sigma\rho'\mu)^2+\sigma^2(1-\norm{\rho}^2)\norm{\mu}^2}}{\norm{\mu}^2}.
\end{equation}
The maximum expected utility of the investor is given by
\begin{equation}
	V(t,x,y)=\sup_{(\pi_u)_{t\leq u\leq T}}\frac{1}{p}\,\EP[(X_T^{\pi})^p\,|\,X_t^{\pi}=x,\hY_t=y]\,,
\end{equation}
and the corresponding HJB equation is
\begin{equation}
	\left\{\begin{array}{ll}
		V_t+(b-ay)'V_y+\dfrac{1}{2}\norm{P_0\mu+\sigma\rho}^2V_{yy}+rxV_x\\ \qquad +\sup_{\pi}\left(x\,\pi'\Bigl(\mu y V_x+\Sigma(P_0\mu+\sigma\rho)V_{xy}\Bigl)+\dfrac{x^2}{2}\pi'\Sigma\Sigma'\pi V_{xx}\right)=0 & \mbox{in }(0,T)\times[0,\infty)\times \mbr \\ V(t,x,y)=\dfrac{x^p}{p} & \mbox{on } \{T\}\times[0,\infty)\times \mbr.
	\end{array} \right. 
\end{equation}
Following the same procedure as in Section \ref{sec:setup}, we obtain the PDE corresponding to \eqref{eqn:1d:reduced}
\begin{equation} \label{eqn:PDE:FOU}
	\left\{\begin{array}{ll}
		u_t=\dfrac{1}{2}\norm{\theta}^2u_{yy}+(b-(a+q\theta'\mu)y)u_y+\dfrac{1}{\delta}\left(p\,r-\dfrac{q}{2}\lVert\mu\rVert^2y^2\right)u & \mbox{in } (0,T)\times\mbr \\ u=1 & \mbox{on }\{0\}\times\mbr\,.
	\end{array} \right. 
\end{equation}
and the optimal portfolio
\begin{equation}
	\hat{\pi}_T(t,y)=\frac{1}{1-p}(\Sigma')^{-1}\mu\,y+\frac{\delta}{1-p}(\Sigma')^{-1}\theta\frac{u_y(T-t,y)}{u(T-t,y)},
\end{equation}
where $\theta:=P_0\,\mu+\sigma\rho$.

\begin{assume} \label{assume:FOU}
	$a+q\theta'\mu>0$.
\end{assume}
\noindent Theorem \ref{thm:static_fund:FOU} and Theorem \ref{thm:perturb:FOU} below are based on Assumption \ref{assume:FOU}. Their proofs are provided in Appendix \ref{sec:FOU:append}.

\begin{thm} \label{thm:static_fund:FOU}
	Let Assumption \ref{assume:FOU} hold, and define real numbers $\eta,\xi,\lambda$, and $\hlambda$ as
	\begin{equation}
		\eta:=\frac{-(a+q\theta'\mu)+\sqrt{(a+q\theta'\mu)^2+\dfrac{q}{\delta}\norm{\theta}^2\norm{\mu}^2}}{2\norm{\theta}^2},\,\xi:=\dfrac{2b\,\eta}{a+q\theta'\mu+2\norm{\theta}^2\eta},\, \lambda:=\left(\eta-\frac{1}{2}\xi^2\right)\norm{\theta}^2+b\,\xi-\dfrac{p\,r}{\delta},
	\end{equation}
	and
	\begin{equation} \label{const:hlambda:FOU}
		\hlambda:=a+q\theta'\mu+2\norm{\theta}^2\eta.
	\end{equation}
	Then,
	\begin{enumerate}[label=(\roman*)]
		\item a function
		\begin{equation}
			u(t,z)=\EP_z\left[\exp{\left\{\int_{0}^{t}\dfrac{1}{\delta}\left(p\,r-\dfrac{q}{2}\lVert\mu\rVert^2Z_s^2\right)\,ds\right\}} \right]\, 
		\end{equation}
		is a $C^{1,2}((0,T)\times (0,\infty))$ solution to the PDE \eqref{eqn:PDE:FOU}
		where \textit{Z} is the unique strong solution to the SDE
		\begin{equation}
			dZ_t=(b-(a+q\theta'\mu) Z_t)dt+\norm{\theta}dB_t,\quad Z_0=z\in\mbr,\quad \mbp_z\mbox{-a.s.}
		\end{equation}
		\item the function $u=u(t,z)$ admits the Hansen--Scheinkman decomposition 
		\begin{equation}
			u(t,z)=e^{-\lambda\,t-\eta\,z^2-\xi\,z}\ETP_z\left[e^{\eta Z_t^2+\xi Z_t}\right],\quad (t,z)\in[0,T)\times\mbr,
		\end{equation}
		where $(Z_s)_{0\leq s\leq t}$ is the solution to the SDE 
		\begin{equation}
			dZ_s=(b-\norm{\theta}^2\xi-(a+q\theta'\mu+2\norm{\theta}^2\eta) Z_s)ds+\norm{\theta}d\tB_s,\quad 0\leq s\leq t,\quad Z_0=z,\quad \mbp_z\mbox{-a.s.}
		\end{equation}
		for $\tmbp_z$-Brownian motion $(\tB_s)_{0\leq s\leq t},$
		\item the value function \textit{V} and optimal portfolio $\hat{\pi}_T$ satisfy
		\begin{equation}
			V(t,x,z)=\frac{x^p}{p}u(T-t,z)^{\delta}=\frac{x^p}{p}e^{-\lambda(T-t)-\eta\,z^2-\xi\,z}f(T-t,z),
		\end{equation}
		where $f(t,z)=\ETP_z\left[e^{\eta Z_t^2+\xi Z_t}\right]$, and
		\begin{align} 
			\hat{\pi}_T(t,z)&=\frac{1}{1-p}(\Sigma')^{-1}\mu \,z+\frac{\delta}{1-p}(\Sigma')^{-1}\theta\frac{u_z(T-t,z)}{u(T-t,z)}\\
			&=\frac{1}{1-p}(\Sigma')^{-1}\mu z-\frac{2\delta\eta}{1-p}(\Sigma')^{-1}\theta z-\frac{\delta\,\xi}{1-p}(\Sigma')^{-1}\theta +\frac{\delta}{1-p}(\Sigma')^{-1}\theta\frac{f_z(T-t,z)}{f(T-t,z)},\label{eqn:optimal_port:FOU}
		\end{align}
		\item the partial derivative $f_z(t,z)$ has the representation
		\begin{equation}
			f_z(t,z)=e^{-\hlambda\,t}\ETP_z\left[(2\eta Z_t+\xi)e^{\eta Z_t^2+\xi Z_t}\right]\,,
		\end{equation}
		\item the static optimal portfolio is given by
		\begin{equation} \label{port:asymp:FOU}
			\hat{\pi}_{\infty}(z)=\frac{1}{1-p}(\Sigma')^{-1}\mu\,z-\frac{2\delta\eta}{1-p}(\Sigma')^{-1}\theta\,z-\frac{\delta\,\xi}{1-p}(\Sigma')^{-1}\theta,
		\end{equation}
		and there exist positive functions $C_1,C_2$ such that
		\begin{equation}
			C_1(z)e^{-\hlambda(T-t)}\leq\left| \hat{\pi}_T(t,z)-\hat{\pi}_{\infty}(z)\right|\leq C_2(z)e^{-\hlambda(T-t)}\,,
		\end{equation}
		for any $T>0,0\leq t<T$, and $z\in\mbr$,
		\item the dynamic fund separation holds with five funds: the safe asset, the myopic portfolio $(\Sigma')^{-1}\mu\,z$, two static hedging portfolios $(\Sigma')^{-1}\theta\,z$ and $(\Sigma')^{-1}\theta$, and one intertemporal hedging portfolio $(\Sigma')^{-1}\theta$ with intertemporal weight $f_z(T-t,z)/f(T-t,z)$. The static fund separation holds with four funds: the safe asset, the myopic portfolio $(\Sigma')^{-1}\mu\,z$, and two hedging portfolios $(\Sigma')^{-1}\theta\,z$ and $(\Sigma')^{-1}\theta$.
	\end{enumerate}
\end{thm}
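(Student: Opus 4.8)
The plan is to run the Hansen--Scheinkman program of Section~\ref{sec:setup} in the Gaussian regime of the filtered Ornstein--Uhlenbeck process, where every expectation that arises is a Gaussian integral that can be evaluated and estimated explicitly. The engine is the eigenpair $(\lambda,\phi)$ of the operator $\mathscr{L}$ attached to \eqref{eqn:PDE:FOU}. Since the potential is quadratic and $\mathscr{L}$ has affine drift and constant diffusion $\norm{\theta}^2$, I would posit the Gaussian ansatz $\phi(z)=e^{-\eta z^2-\xi z}$ and insert it into $\mathscr{L}\phi=-\lambda\phi$. Matching the coefficients of $z^2$, $z^1$, and $z^0$ produces three scalar relations: the $z^2$--relation is a quadratic in $\eta$ whose positive root (the one making $\phi$ a genuine decaying Gaussian weight) is the stated $\eta$; the $z^1$--relation is linear and yields $\xi$; and the $z^0$--relation reads off $\lambda$. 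Finally \eqref{eqn:1d:SDEphi} with $\phi'/\phi=-2\eta z-\xi$ shows that under the twisted measure $Z$ is again an Ornstein--Uhlenbeck process whose mean-reversion speed is exactly $\hlambda=a+q\theta'\mu+2\norm{\theta}^2\eta$; this identification is what eventually fixes the convergence rate.

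For part (i) I would observe that the $Z$--SDE has affine drift and constant diffusion, hence a unique strong Gaussian solution, and that the Feynman--Kac integrand is bounded: because $p<0$ forces $q>0$ and $\delta>0$, the quadratic term in the exponent carries the sign $-q\norm{\mu}^2/(2\delta)\le 0$, so $u(t,z)\le e^{prt/\delta}<\infty$. Smoothness of the coefficients together with non-degeneracy $\norm{\theta}^2>0$ then give $u\in C^{1,2}$ and that $u$ solves \eqref{eqn:PDE:FOU} by standard Feynman--Kac/parabolic regularity. For part (ii) the positive local martingale $M_t$ of \eqref{eqn:H-S:martingale} must be upgraded to a true martingale to legitimize the change of measure; here I would verify $\EP_z[M_t]=1$ by a direct Gaussian computation, using that $\phi(Z_t)=e^{-\eta Z_t^2-\xi Z_t}$ is an explicit Gaussian functional and that the relevant integral is finite precisely because $\eta<\hlambda/\norm{\theta}^2$, which holds under Assumption~\ref{assume:FOU}. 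The decomposition $u=e^{-\lambda t}\phi(z)\ETP_z[1/\phi(Z_t)]$ of Section~\ref{sec:setup} then gives part (ii) with $f(t,z)=\ETP_z[e^{\eta Z_t^2+\xi Z_t}]$.

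Parts (iii) and (iv) are then largely bookkeeping built on these two structural facts. The reduction $V=\tfrac{x^p}{p}u(T-t,z)^{\delta}$ of Section~\ref{sec:setup} combined with part (ii) yields the stated form of $V$, and substituting $u_z/u=\phi'/\phi+f_z/f=(-2\eta z-\xi)+f_z/f$ into the portfolio formula preceding the theorem splits $\hat{\pi}_T$ into its myopic term, two time-independent hedging terms, and one intertemporal term, which is exactly \eqref{eqn:optimal_port:FOU}. For part (iv) I would compute $f_z$ through the first variation of the twisted Ornstein--Uhlenbeck flow: realizing the solutions $\{Z^z\}$ on one space driven by a common $\tB$, the Jacobian $J_t=\partial_z Z_t$ solves $dJ_t=-\hlambda J_t\,dt$, $J_0=1$, so $J_t=e^{-\hlambda t}$; differentiating $f(t,z)=\ETP_z[g(Z_t)]$ with $g(w)=e^{\eta w^2+\xi w}$ under the expectation then gives $f_z(t,z)=e^{-\hlambda t}\ETP_z[(2\eta Z_t+\xi)e^{\eta Z_t^2+\xi Z_t}]$, the interchange being justified by a Gaussian dominating bound.

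Parts (v) and (vi) complete the proof. Since $\hat{\pi}_\infty$ is the $t$-independent remainder of $\hat{\pi}_T$, the difference $|\hat{\pi}_T-\hat{\pi}_\infty|$ is a fixed nonzero multiple of $|f_z(T-t,z)/f(T-t,z)|$, which by part (iv) equals $e^{-\hlambda(T-t)}|\ETP_z[(2\eta Z_{T-t}+\xi)e^{\eta Z^2+\xi Z}]|/\ETP_z[e^{\eta Z^2+\xi Z}]$. Under $\tmbp_z$ the law of $Z_t$ is Gaussian with mean converging to its stationary value at rate $e^{-\hlambda t}$ and variance to $\norm{\theta}^2/(2\hlambda)$ at rate $e^{-2\hlambda t}$, so both expectations are explicit Gaussian integrals with finite positive limits, and the ratio is squeezed between two positive constants for large $T-t$; this yields the two-sided bound $C_1(z)e^{-\hlambda(T-t)}\le|\hat{\pi}_T-\hat{\pi}_\infty|\le C_2(z)e^{-\hlambda(T-t)}$, and the fund-separation statements of part (vi) are read off directly from \eqref{eqn:optimal_port:FOU} and \eqref{port:asymp:FOU}. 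I expect the genuine obstacle to be the \emph{lower} bound: one must show $\lim_{t\to\infty}\ETP_z[(2\eta Z_t+\xi)e^{\eta Z_t^2+\xi Z_t}]\neq 0$, so the decay is exactly $e^{-\hlambda t}$ and not faster. Writing $(2\eta w+\xi)e^{\eta w^2+\xi w}=\tfrac{d}{dw}e^{\eta w^2+\xi w}$ and integrating by parts against the stationary Gaussian density turns this limit into a covariance-type quantity; establishing its non-vanishing for all admissible parameters, along with the uniform integrability that underwrites both the differentiation in part (iv) and these Gaussian limits, is the delicate point.
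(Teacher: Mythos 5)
Your proposal follows essentially the same route as the paper: the Gaussian eigenfunction ansatz $\phi(z)=e^{-\eta z^2-\xi z}$ with coefficient matching for $(\eta,\xi,\lambda)$, the Girsanov twist under which $Z$ is again an Ornstein--Uhlenbeck process with mean-reversion speed $\hlambda$, the first-variation identity $\partial Z_t/\partial z=e^{-\hlambda t}$ for part (iv), and explicit Gaussian limits for part (v). The one point you flag as delicate does close: completing the square in the Gaussian integral shows $\lim_{t\to\infty}\ETP_z\left[(2\eta Z_t+\xi)e^{\eta Z_t^2+\xi Z_t}\right]$ equals a positive constant times $2\xi>0$ (since $b>0$ and $\eta>0$), which is exactly the positivity the paper asserts when it states that this expectation converges to a positive constant.
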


\begin{remark}
	\citet{guasoni2015static} also treat an Ornstein-Uhlenbeck state process as an example. Their $\psi_i$s in the decomposition \eqref{eqn:log_phi':decomp} are exactly the same as ours. Moreover, if we assume that the state variable is fully observable and adjust the two market models to be identical, it can be shown that the results of static fund separation obtained from the two different approaches are the same.
\end{remark}

\begin{thm} \label{thm:perturb:FOU}
	Let Assumption \ref{assume:FOU} hold. For the optimal portfolio \eqref{eqn:optimal_port:FOU}, static optimal portfolio \eqref{port:asymp:FOU}, and $\hlambda$ defined as \eqref{const:hlambda:FOU}, there exists a positive function $C:\mbr\to\mbr$ depending on $b,a$, and $\sigma$ such that
	\begin{enumerate}[label=(\roman*)]
		\item $\displaystyle \left|\frac{\partial}{\partial z}\hat{\pi}_T(t,z\,;\,b,a,\sigma)-\frac{\partial}{\partial z}\hat{\pi}_{\infty}(z\,;\,b,a,\sigma)\right|\leq C(z\,;\,b,a,\sigma)e^{-2\hlambda(T-t)},$
		\item $\displaystyle \left|\frac{\partial}{\partial b}\hat{\pi}_T(t,z\,;\,b,a,\sigma)-\frac{\partial}{\partial b}\hat{\pi}_{\infty}(z\,;\,b,a,\sigma)\right|\leq C(z\,;\,b,a,\sigma)(1+\sqrt{T})e^{-\hlambda(T-t)}$
		\item $\displaystyle \left|\frac{\partial}{\partial a}\hat{\pi}_T(t,z\,;\,b,a,\sigma)-\frac{\partial}{\partial a}\hat{\pi}_{\infty}(z\,;\,b,a,\sigma)\right|\leq C(z\,;\,b,a,\sigma)(1+T)e^{-\hlambda(T-t)},$
		\item $\displaystyle \left|\frac{\partial}{\partial\sigma}\hat{\pi}_T(t,z\,;\,b,a,\sigma)-\frac{\partial}{\partial\sigma}\hat{\pi}_{\infty}(z\,;\,b,a,\sigma)\right|\leq C(z\,;\,b,a,\sigma)(1+T)e^{-\hlambda(T-t)},$
	\end{enumerate}
	for any $T>0,0\leq t<T$, and $z\in\mbr.$
\end{thm}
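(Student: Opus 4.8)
The plan is to reduce all four estimates to a single scalar quantity, the intertemporal weight $f_z/f$, together with its parameter derivatives. Subtracting the static portfolio \eqref{port:asymp:FOU} from \eqref{eqn:optimal_port:FOU} gives
\begin{equation}
\hat{\pi}_T(t,z)-\hat{\pi}_{\infty}(z)=\frac{\delta}{1-p}(\Sigma')^{-1}\theta\,\frac{f_z(T-t,z)}{f(T-t,z)},
\end{equation}
so that for each $\chi\in\{z,b,a,\sigma\}$ one has $\partial_\chi\hat{\pi}_T-\partial_\chi\hat{\pi}_{\infty}=\partial_\chi\bigl(\tfrac{\delta}{1-p}(\Sigma')^{-1}\theta\,f_z/f\bigr)$. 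Since the prefactor $\tfrac{\delta}{1-p}(\Sigma')^{-1}\theta$ and its $\chi$-derivatives are explicit and finite (and the prefactor is independent of $b$), the product rule reduces everything to controlling $f_z/f$ --- already $O(e^{-\hlambda(T-t)})$ by part (v) of Theorem \ref{thm:static_fund:FOU} --- together with its derivative $\partial_\chi(f_z/f)$.

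The engine is that under $\tmbp_z$ the process $Z$ is Ornstein--Uhlenbeck with mean-reversion speed exactly $\hlambda=a+q\theta'\mu+2\norm{\theta}^2\eta$, so
\begin{equation}
Z_\tau=z\,e^{-\hlambda\tau}+\frac{b-\norm{\theta}^2\xi}{\hlambda}\bigl(1-e^{-\hlambda\tau}\bigr)+\norm{\theta}\int_0^\tau e^{-\hlambda(\tau-s)}\,d\tB_s
\end{equation}
is Gaussian with mean $m(\tau,z)$ and variance $v(\tau)=\tfrac{\norm{\theta}^2}{2\hlambda}(1-e^{-2\hlambda\tau})$, and $\partial_z Z_\tau=e^{-\hlambda\tau}$ is deterministic. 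Hence $f(\tau,z)=\ETP_z[e^{\eta Z_\tau^2+\xi Z_\tau}]$ is an explicit Gaussian integral of the form $(1-2\eta v)^{-1/2}\exp(\,\cdot\,)$, and likewise for $f_z/f$; these are finite and uniformly bounded in $\tau$ because Assumption \ref{assume:FOU} forces $1-2\eta v(\tau)\ge(a+q\theta'\mu+\eta\norm{\theta}^2)/\hlambda>0$. Part (i) then follows from the chain rule with a doubled rate: writing $f_z/f=e^{-\hlambda\tau}R(\tau,z)$, each $z$-differentiation produces one further deterministic factor $\partial_z Z_\tau=e^{-\hlambda\tau}$, so $\partial_z(f_z/f)=e^{-2\hlambda\tau}\times(\text{a ratio of expectations bounded uniformly in }\tau)$, which yields the $e^{-2\hlambda(T-t)}$ of (i).

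For parts (ii)--(iv) I would differentiate the intertemporal weight with respect to $b,a,\sigma$, using either the closed-form Gaussian expression above or --- to stay within the unified framework of Appendix \ref{sec:diff}, which must also cover the non-Gaussian $3/2$ and inverse-Bessel models --- the Cameron--Martin/first-variation representation of $\partial_\chi\ETP_z[\,\cdot\,]$. Each parameter enters through the explicit constants $\eta,\xi,\hlambda,\theta,\norm{\theta}$ (elementary derivatives of their closed forms) and through $m(\tau,z)$, $v(\tau)$; the exponential rate stays $e^{-\hlambda\tau}$. The polynomial prefactors are bookkeeping of two effects: whenever a parameter perturbs the speed one gets $\partial_\chi e^{-\hlambda\tau}=-\tau(\partial_\chi\hlambda)e^{-\hlambda\tau}$, a factor $\tau\le T$; and the likelihood-ratio score of a drift/diffusion perturbation is a stochastic integral whose $L^2(\tmbp_z)$-norm grows like $\sqrt{\tau}$. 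Since $b$ leaves $\hlambda$ unchanged, only the $\sqrt{\tau}$ term survives, giving the milder $(1+\sqrt{T})$ of (ii); whereas $a$ and $\sigma$ alter $\hlambda$ (and $\sigma$ also the diffusion), producing the $\tau\,e^{-\hlambda\tau}$ term and hence the $(1+T)$ of (iii)--(iv).

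The hard part will be the $\sigma$-sensitivity (iv), because $\sigma$ perturbs the diffusion coefficient $\norm{\theta}$ of $Z$ (and, through $P_0$ and the stationary Riccati constant, its drift as well); interchanging $\partial_\sigma$ with the expectation defining $f$ is the delicate point that forces the ``more elaborate setting'' of Appendix \ref{sec:diff}. I would treat it via the first-variation representation for a diffusion perturbation, the main technical burden being to verify that all emerging weighted moments $\ETP_z[\,|Z_\tau|^k e^{\eta Z_\tau^2+\xi Z_\tau}\,]$ remain finite and uniformly bounded in $\tau$ --- which again reduces to the threshold $1-2\eta v(\tau)>0$ --- and then to collect the various growing contributions into at most the single linear factor $(1+T)$, so that $e^{-\hlambda(T-t)}$ dominates.
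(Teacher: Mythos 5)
Your proposal is correct and follows essentially the same route as the paper: reduce each bound to the intertemporal weight $f_z/f$ and its $\chi$-derivative, use that under $\tmbp_z$ the process is Ornstein--Uhlenbeck with mean-reversion speed $\hlambda$ so that $\partial_z Z_\tau=e^{-\hlambda\tau}$ doubles the rate in (i), and for (ii)--(iv) apply the first-variation/likelihood-ratio machinery of Appendix \ref{sec:diff} (Proposition \ref{prop:perturb:drift} and Corollary \ref{cor:perturb:united}), with the $\sqrt{T}$ factor coming from the $L^2$-norm of the score integral and the extra linear-in-$T$ factor from $\partial_\chi\hlambda\neq 0$ exactly as you describe. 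Your accounting of why $b$ yields only $(1+\sqrt{T})$ while $a$ and $\sigma$ yield $(1+T)$, and of the integrability threshold $1-2\eta v(\tau)>0$, matches the paper's argument.
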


\section{Outline of proofs} \label{sec:outline}
This section describes the general approach to proofs of the main results. Detailed proofs for the three specific models in Sections \ref{sec:3/2}, \ref{sec:invB}, and \ref{sec:FOU} are provided in Appendices \ref{sec:3/2:append}, \ref{sec:invB:append}, and \ref{sec:FOU:append}, respectively. Clearly, it should be assumed that the SDE \eqref{SDE:FK} has a strong solution and pathwise uniqueness holds.

\subsection{Verification argument} \label{subsec:verification}
To justify our arguments, it is necessary to demonstrate that the function \textit{u} indeed satisfies
\begin{equation} \label{eqn:valfunc:verify}
	V(t,x,z)=\frac{x^p}{p}u(T-t,z)^{\delta}
\end{equation}
in $(0,T]\times[0,\infty)\times E$, where \textit{V} is the value function \eqref{valuefunc}, and the optimal portfolio is given by \eqref{eqn:optimal_port3}.
These can be shown based on the verification lemma presented in \citealp[Appendix A]{guasoni2015static}. The main condition to be focused upon is the well-posedness of an SDE
\begin{equation} \label{temp232}
	dZ_s=\left(b(Z_s)-q(\Sigma^{-1}\mu)(Z_s)'\rho\,\sigma(Z_s)+\sigma(Z_s)^2\frac{u_z(T-s,Z_s)}{u(T-s,Z_s)}\right)\,ds+\sigma(Z_s)\,dB_s,\,0\leq s\leq T,\quad Z_0=z
\end{equation}
in the weak sense. Once this condition turns out to be true, the verification is immediately established by \citealp[Lemma A.3]{guasoni2015static}.

Through the assumptions imposed on the drift and diffusion terms, the uniqueness immediately follows (e.g., \citealp[Theorem 1.12.1]{pinsky1995positive}); hence, it suffices to show the existence of a weak solution. The process of applying the Hansen--Scheinkman decomposition to \eqref{eqn:1d:FK_forward} shows that the SDE \eqref{eqn:1d:SDEphi} has a weak solution. Moreover, using \eqref{eqn:1d:u:HSdecomp}, we can directly see that
\begin{equation} \label{eqn:1d:deriv:logu}
	\frac{u_z(t,z)}{u(t,z)}=\frac{\phi'(z)}{\phi(z)}+\frac{f_z(t,z)}{f(t,z)}.
\end{equation}
Thus, if we can show that the Dol\'{e}ans-Dade exponential
\begin{equation}
	\mathcal{E}\left(\int_0^{\cdot}\sigma(Z_s)\frac{f_z(T-s,Z_s)}{f(T-s,Z_s)}\,d\widetilde{B}_s\right)_t
\end{equation}
is a martingale, then Girsanov's theorem and identity \eqref{eqn:1d:deriv:logu} will lead to the weak existence of \eqref{temp232}. This can be easily verified by showing that the process $(f(T-t,Z_t))_{0\leq t\leq T}$ is a martingale and
\begin{equation}\label{temp:verify}
	\frac{f(T-t,Z_t)}{f(T,z)}=\mathcal{E}\left(\int_0^{\cdot}\sigma(Z_s)\frac{f_z(T-s,Z_s)}{f(T-s,Z_s)}\,d\widetilde{B}_s\right)_t.
\end{equation}
Indeed, by the time-homogeneous Markov property
\begin{equation}
	f(T-t,Z_t)=\ETP_{Z_t}\left[\frac{1}{\phi(Z_{T-t})}\right]=\ETP_z\left[\frac{1}{\phi(Z_{T})}\biggl|\mathcal{F}_t\right],
\end{equation}
the process $(f(T-t,Z_t))_{0\leq t\leq T}$ is a martingale. In addition, using It\^{o}'s formula we have 
\begin{equation}
	df(T-t,Z_t)=\sigma(Z_t)f_z(T-t,Z_t)\,d\tB_t=\sigma(Z_t)\frac{f_z(T-t,Z_t)}{f(T-t,Z_t)}f(T-t,Z_t)\,d\tB_t,
\end{equation}
which immediately shows \eqref{temp:verify}.

\subsection{Convergence of optimal portfolio}
The decomposition \eqref{eqn:log_phi':decomp} is not a general result but can be directly seen for each non-affine model to be introduced. Thus, here we focus on a methodology of proving \eqref{convergence} and, hence, completing \eqref{convergence:optimal_port}. We note from \eqref{eqn:1d:deriv:logu} that showing \eqref{convergence} is equivalent to proving
\begin{equation}
	C_1(z)e^{-\hlambda t}\leq\left|\frac{f_z(t,z)}{f(t,z)}\right|\leq C_2(z)e^{-\hlambda t},\quad\forall t\in(0,T),\,\forall z\in E.
\end{equation}
If the process \eqref{eqn:1d:SDEphi} is positive recurrent and if $1/\phi$ is integrable with respect to the invariant measure of \textit{Z}, then the ergodic theorem ensures that $f(t,z)$ converges to 0 (see Appendix \ref{sec:recur}). Next, using Corollary \ref{cor:perturb:flow}, the partial derivative $f_z(t,z)$ is given by
\begin{align}
	f_z(t,z)&=-\ETP_z\left[\frac{\partial Z_t^z}{\partial z}\frac{\phi'(Z_t)}{\phi(Z_t)^2}\right],\\
	&=-\ETP_z\left[e^{\int_0^t\!\left(b-q(\Sigma^{-1}\mu)'\rho\,\sigma+\sigma^2\frac{\phi'}{\phi}\right)'\!(Z_s)-\frac{1}{2}\sigma'(Z_s)^2ds+\int_0^t\sigma'(Z_s)\,d\widetilde{B}_s}\frac{\phi'(Z_t)}{\phi(Z_t)^2}\right]\,,
\end{align}
and this may be written in the form
\begin{equation} \label{eqn:1d:f_z}
	f_z(t,z)=g(z)\EQ_z\left[\exp{\left\{-\int_0^t\mathcal{W}(Z_s)\,ds\right\}}k(Z_t)\right] 
\end{equation}
for some measurable functions $\mathcal{W},k,g:E\to\mbr$ and some probability measure $\mbq_z$ (e.g., one way to write $f_z$ in the form \eqref{eqn:1d:f_z} is changing the measure via $d\mbq_z/d\tmbp_z=\mathcal{E}\left(\int_0^{\cdot}\sigma'(Z_s)\,d\tB_s\right)_t$). We apply Hansen--Scheinkman decomposition to \eqref{eqn:1d:f_z}. Let $(\hlambda,\hat{\phi})$ be an eigenpair of the operator
\begin{equation} \label{operator:1d:Lphi}
	\mathscr{L}^{\phi}:=\frac{1}{2}\sigma^2(z)\frac{d^2}{dz^2}+\left(b-q(\Sigma^{-1}\mu)'\rho\,\sigma+\sigma\frac{\phi'}{\phi}\right)(z)\frac{d}{dz}-\mathcal{W}(z)\cdot.
\end{equation}
Applying the Hansen--Scheinkman decomposition to $f_z$ yields the form
\begin{equation} \label{eqn:1d:remainder_z}
	f_z(t,z)=e^{-\hat{\lambda} t}g(z)\hat{\phi}(z)\EHP_{z}\left[\frac{k(Z_t)}{\hat{\phi}(Z_t)}\right]\,.
\end{equation}
If the solution to the SDE
\begin{equation}
	dZ_t=\left(b-q(\Sigma^{-1}\mu)'\rho\,\sigma+\sigma^2\left(\frac{\phi'}{\phi}+\frac{\hat{\phi}'}{\hat{\phi}}\right)\right)(Z_t)\,dt+\sigma(Z_t)\,d\widehat{B}_t,\quad Z_0=z
\end{equation}
is positive recurrent, then by the ergodic property, the remainder term($\EHP_z[\cdot]$) converges to some positive constant for each $z\in E$. If, in addition, $f_z$ never touches zero, then $e^{\hlambda t}f_z(t,z)$ is positive on $[0,\infty)\times E$ and bounded away from zero for each $z\in E$ by continuity. Therefore, there exist positive functions $C_1,C_2:E\to\mbr$ such that
\begin{equation} \label{ineqn:optimal_conv:estm}
	C_1(z)e^{-\hlambda t} \leq \left|\frac{f_z(t,z)}{f(t,z)}\right|\leq C_2(z)e^{-\hlambda t}\,.
\end{equation}
If $\hlambda>0$, we conclude that the optimal portfolio converges to the static optimal portfolio exponentially fast with convergence rate $\hlambda$ for each $z\in E$.

\begin{remark} \label{rmk:1d:positive_recur}
	\begin{itemize}
		\item[(i)] It appears that $\hlambda$ may not be positive. However, it is known that if $\mathcal{W}\geq 0$ and $\mathcal{W}\not\equiv0$, then $\hlambda>0$. We refer to \citealp[Theorem 4.2.1 (i),(iii), Proposition 4.2.2, and Theorem 4.3.2]{pinsky1995positive} for this fact.
		
		\item[(ii)] An eigenpair letting the process under the transformed measure be positive recurrent is unique if it exists\citep[Theorem 4.3.2]{pinsky1995positive}. We call the eigenpair the \textit{positive recurrent eigenpair}.
	\end{itemize}
	
\end{remark}

\subsection{Long-term stability under parameter perturbations}

\subsubsection{Sensitivity with respect to current state variable}

Since
\begin{equation}
	\frac{\partial}{\partial z}\left(\frac{f_z(t,z)}{f(t,z)}\right)=\frac{f_{zz}(t,z)}{f(t,z)}-\left(\frac{f_z(t,z)}{f(t,z)}\right)^2
\end{equation}
and
\begin{equation}
	\left|\frac{f_z(t,z)}{f(t,z)}\right|^2\leq C^2(z)e^{-2\hlambda t}
\end{equation}
by \eqref{ineqn:optimal_conv:estm}, all that remains is to determine an upper bound of $|f_{zz}(t,z)/f(t,z)|$ converging to 0 as $t\to\infty$. This can be done by differentiating $f_z(t,z)$ with respect to \textit{z} once again:
\begin{equation} \label{eqn:initial:temp}
	f_{zz}(t,z)=e^{-\hat{\lambda} t}\left(\frac{d}{dz}\left(g(z)\hat{\phi}(z)\right)\EHP_{z}\left[\frac{h(Z_t)}{\hat{\phi}(Z_t)}\right]+g(z)\hat{\phi}(z)\frac{\partial}{\partial z}\EHP_{z}\left[\frac{h(Z_t)}{\hat{\phi}(Z_t)}\right]\right)\,.
\end{equation}
The partial derivative of $\EHP_{z}\left[h(Z_t)/\hat{\phi}(Z_t)\right]$ appearing in \eqref{eqn:initial:temp} can be handled as in the method in which $f_z$ is transformed into the form \eqref{eqn:1d:remainder_z}. Then, as a consequence of ergodicity, the terms in the bracket in \eqref{eqn:initial:temp} are bounded for each \textit{z}. Now, it can be easily deduced that
\begin{equation}
	\left|\frac{\partial}{\partial z}\left(\frac{f_z(t,z)}{f(t,z)}\right)\right|\leq C(z)e^{-\hlambda t}
\end{equation}
for some positive function $C:E\to\mbr$. Therefore, the inequality
\begin{equation}
	\left|\frac{\partial}{\partial z}\hat{\pi}_T(t,z)-\frac{d}{dz}\hat{\pi}_{\infty}(z)\right|\leq C(z)e^{-\hlambda (T-t)}
\end{equation}
is established for any $0\leq t<T$ and some positive function $C:E\to\mbr$.

\subsubsection{Sensitivity with respect to drift and diffusion terms}
Let us denote a parameter included in the drift or diffusion terms by $\chi$. We want to prove the convergence \eqref{conv:sensitivity} and find a reasonable bound of the convergence rate. To emphasize dependence on $\chi$, we rewrite the sate process \eqref{state_var} in the form
\begin{equation} 
	dY_t=b(Y_t;\chi)\,dt+\sigma(Y_t;\chi)\,dB_t.
\end{equation}
Let $I\subset\mbr$ be an open interval containing 0. For each $\epsilon\in I$, define perturbed drift and perturbed diffusion by $b\sube(y;\chi):=b(y;\chi+\epsilon)$ and $\sigma\sube(y;\chi):=\sigma(y;\chi+\epsilon)$, respectively. We assume that $\epsilon\mapsto b\supe$ and $\epsilon\mapsto\sigma\sube$ are differentiable at $\epsilon=0$ for each \textit{y} and $\chi$. Namely, $b(y;\chi)$ and $\sigma(y;\chi)$ are assumed to be differentiable with respect to $\chi$ for each \textit{y}. For simplicity, we omit $\chi$ when writing $b\sube$ and $\sigma\sube$.

For each $\epsilon\in I$, let $u\supe$ be a function on $[0,\infty)\times E$ defined as
\begin{equation}\label{eqn:FK:perturbed}
	u\supe(t,z)=\mathbb{E}_z^{\mbp\supe}\left[\exp{\left\{-\int_{0}^{t}\mathcal{V}(Z_s\supe)\,ds\right\}}\right],
\end{equation}
where $(Z\supe,\mbp\supe)$ satisfies the SDE
\begin{equation} \label{SDE:FK:perturbed}
	dZ_t\supe=\left(b\sube-q(\Sigma^{-1}\mu)'\rho\,\sigma\sube\right)\!(Z_t\supe)\,dt+\sigma\sube(Z_t\supe)\,dB_t\supe,\quad Z_0\supe=z,
\end{equation}
with state space \textit{E} (here, $B\supe$ is clearly a $\mbp\supe$-Brownian motion; the other objects constituting a weak solution such as sample space and sigma-algebra, are omitted, which is definitely harmless). The Hansen--Scheinkman decomposition applied to $u\supe$ yields
\begin{align} \label{eqn:u:HS:perturbed}
	u\supe(t,z)&=e^{-\lambda\sube t}\phi\sube(z)\ETPe_z\left[\frac{1}{\phi\sube(Z_t\supe)}\right]\\
	&=e^{-\lambda\sube t}\phi\sube(z)f\supe(t,z),
\end{align}
where $\lambda\sube,\phi\sube,\widetilde{\mbp}\supe,f\supe$ are clearly self-explanatory, and $\tmbp\supe$-dynamics of $(Z_s\supe)_{0\leq s\leq t}$ is
\begin{equation} \label{eqn:drift:SDEphi}
	dZ_s\supe=\left(b\sube-q(\Sigma^{-1}\mu)'\rho\,\sigma\sube+\sigma\sube^2\frac{\phi\sube'}{\phi\sube}\right)\!(Z_s\supe)\,ds+\sigma\sube(Z_s\supe)\,d\widetilde{B}_s\supe,\,0<s\leq t,\quad Z_0\supe=z\,.
\end{equation}
Then, the perturbed optimal portfolio $\hat{\pi}_T$ is written in the form
\begin{equation} \label{sens:optimal}
	\hat{\pi}_T\supe(t,z)=\frac{1}{1-p}\left((\Sigma\Sigma')^{-1}\mu\right)(z)+\frac{\delta}{1-p}(\Sigma(z)')^{-1}\rho\,\sigma\sube(z)\frac{ u_z\supe(T-t,z)}{u\supe(T-t,z)}\,,
\end{equation}
and the sensitivity with respect to $\chi$ can be expressed as
\begin{equation} \label{port:sensitivity}
	\frac{\partial}{\partial \chi}\hat{\pi}_T=\frac{\partial}{\partial\epsilon}\biggl|_{\epsilon=0}\!\hat{\pi}_T\supe.
\end{equation}
The perturbed static optimal portfolio and its sensitivity with respect to $\chi$ are also given similarly:
\begin{equation}
	\hat{\pi}_{\infty}\supe(z)=\frac{1}{1-p}\left((\Sigma\Sigma')^{-1}\mu\right)(z)+\frac{\delta}{1-p}(\Sigma(z)')^{-1}\rho\,\sigma\sube(z)\frac{\phi\sube'(z)}{\phi\sube(z)}.
\end{equation}
and
\begin{equation} \label{port:sensitivity}
	\frac{\partial}{\partial \chi}\hat{\pi}_{\infty}=\frac{\partial}{\partial\epsilon}\biggl|_{\epsilon=0}\!\hat{\pi}_{\infty}\supe.
\end{equation}
Therefore, it is evident that the followings are equivalent:
\begin{itemize}
	\item[(i)] $\displaystyle\frac{\partial}{\partial \chi}\hat{\pi}_T\to\frac{\partial}{\partial \chi}\hat{\pi}_{\infty}$ as $T\to\infty$.
	
	\item[(ii)] $\displaystyle\frac{\partial}{\partial\epsilon}\biggl|_{\epsilon=0}\hat{\pi}_T\supe\to\frac{\partial}{\partial\epsilon}\biggl|_{\epsilon=0}\hat{\pi}_{\infty}\supe$ as $T\to\infty$.
	
	\item[(iii)] $\displaystyle\de\frac{u_z\supe(t,z)}{u\supe(t,z)}\to\de\frac{\phi\sube'(z)}{\phi\sube(z)}$ as $t\to\infty$.
\end{itemize}
We show (iii). Direct calculation shows that
\begin{align}
	\left|\de\frac{u_z\supe(t,z)}{u\supe(t,z)}-\de\frac{\phi\sube'(z)}{\phi\sube(z)}\right|&=\left|\frac{f_z(t,z)}{f(t,z)}\right|\left|\de\log{f\supe(t,z)}-\de\log{|f_z\supe(t,z)|}\right|\\
	&\leq C(z)e^{-\hlambda t}\left(\left|\de\log{f\supe(t,z)}\right|+\left|\de\log{|f_z\supe(t,z)|}\right|\right)\,.
\end{align}
Thus, it remains to observe the growth rates of
\begin{equation}
	\de\log{f\supe(t,z)}\quad\mbox{and}\quad\de\log{|f_z\supe(t,z)|}\,,
\end{equation}
with respect to time \textit{t}. This can be done by exploiting Corollary \ref{cor:perturb:united}.

\begin{itemize}
	\item[(i)] Since $f(t,z)$ converges to some positive constant as $t\to\infty$ for each $z\in E$, the growth rate of (i) is the growth rate of
	\begin{equation}
		\de f_z\supe(t,z)=\de\ETPe_z\left[\frac{1}{\phi\sube(Z_t\supe)}\right]\,.
	\end{equation}
	In light of Corollary \ref{cor:perturb:united}, we have
\begin{align}
	\frac{\partial}{\partial\epsilon}\biggl|_{\epsilon=0}\mathbb{E}_z^{\tmbp\supe}\left[\frac{1}{\phi\sube(Z_t\supe)}\right]&=\ETP_{z}\!\left[\frac{1}{\phi(Z_t)}\int_0^t\left(\frac{\partial}{\partial\epsilon}\biggl|_{\epsilon=0}\!\tilde{b}\sube\right)\left(h(Z_s)\right)d\tB_s\right]+\ETP_{z}\left[\left(\frac{\partial }{\partial\epsilon}\biggl|_{\epsilon=0}\!\frac{1}{\phi\sube}\circ h\sube^{-1}\right)\left(h(Z_t)\right)\right] \\ &\quad+\left(\de\!\tilde{z}\supe\right)\ETP_{z}\!\left[\left(\frac{1}{\phi}\circ h^{-1}\right)'\left(h(Z_t)\right)\frac{\sigma(z)}{\sigma(Z_t)}\frac{\partial Z_t}{\partial z}\right]\,,
\end{align}
where
\begin{equation}
	\tilde{b}\sube=\left(\frac{b\sube}{\sigma\sube}+\sigma\sube\frac{\phi\sube'}{\phi\sube}-\frac{1}{2}\sigma\sube'\right)\circ h\sube^{-1},\quad h\sube(x):=\int_{\cdot}^x\frac{1}{\sigma\sube(y)}dy,\quad h=h_0,\quad\tilde{z}\supe=h\sube(z).
\end{equation}

	Now the growth rate can be obtained by observing estimates on the right-hand side.
	
	\item[(ii)] We express the function $f_z\supe(t,z)$ in the form
	\begin{equation}
		f_z\supe(t,z)=e^{-\hlambda\sube t}g\sube(z)\hat{\phi}\sube(z)\EHPe_{z}\!\left[\frac{k\sube(Z_t\supe)}{\hat{\phi}\sube(Z_t\supe)}\right]\,,
	\end{equation}
	as in \eqref{eqn:1d:remainder_z}, and take log-derivative at $\epsilon=0$:
	\begin{equation}
		\de\!\log{|f_z\supe(t,z)|}=-\de\hlambda\sube  t+\frac{\frac{\partial}{\partial\epsilon}\bigl|_{\epsilon=0}(g\sube(z)\hat{\phi}\sube(z))}{g(z)\hat{\phi}(z)}+\frac{\frac{\partial}{\partial\epsilon}\bigl|_{\epsilon=0}\EHPe_{z}\left[k\sube(Z_t\supe)/\hat{\phi}\sube(Z_t\supe)\right]}{\EHP_{z}\left[k(Z_t)/\hat{\phi}(Z_t)\right]}\,.
	\end{equation}
	As in (i), Corollary \ref{cor:perturb:united} can be applied to the second term of the right-hand side.
\end{itemize}
	Accordingly, we can expect to derive the inequality
\begin{equation}
	\left|\de\!\frac{u_z\supe(t,z)}{u\supe(t,z)}-\de\!\frac{\phi\sube'(z)}{\phi\sube(z)}\right|\leq C(z;\chi)(1+T)e^{-\hlambda t}\,,
\end{equation}
for some positive function $C:E\to\mbr$ depending on $\chi$.

\section{Conclusion}
This paper investigated dynamic and static fund separations and their stability for long-term investors with CRRA under markets with a single state variable. Two non-affine models and one partially observable affine model were adopted as models for state variables. For the 3/2 state process and the inverse Bessel state process, we showed that the optimal portfolio on a finite horizon has refined dynamic separation consisting of the safe asset, myopic portfolio, several preference-free static funds that depend on the risk aversion but not on the state variable, and one intertemporal portfolio.
Then, we proved that the intertemporal weight vanishes exponotially fast in the long run. In particular, the explicit value of convergence rate was obtained for each model. Thus, the optimal portfolio converges exponentially fast to the static optimal portfolio. The Hansen--Scheinkman decomposition was widely used in all steps. We also demonstrated the stability of the optimal portfolio under model parameter perturbations by showing that the sensitivities of the intertemporal portfolio with respect to parameter perturbations vanish exponentially fast over time.

For the partially observable Ornstein-Uhlenbeck state process, the role of the state process is replaced by a process which is the projection of the state process onto the space of processes adapted to a filtration generated by stock price processes. We proved refined dynamic separation theorem and static separation theorem for this model, too. We also got similar convergence and stability results as in the two non-affine classes of models. \\

\noindent\textbf{Acknowledgement.}\\ 
Hyungbin Park was supported by the National Research Foundation of Korea (NRF) grants funded by the Ministry of Science and ICT (No. 2017R1A5A1015626, No. 2018R1C1B5085491 and No. 2021R1C1C1011675) 
and the Ministry of Education  (No. 2019R1A6A1A10073437) through the Basic Science Research Program.
In addition, Hyungbin Park was supported by Research Resettlement Fund for the new faculty of Seoul National University, South Korea. 
Financial support from the Institute for Research in Finance and Economics of Seoul National University is gratefully acknowledged.

\appendix
\section{Consistent probability spaces} \label{sec:consistency}
This section reviews the notion of a consistent family of probability measures introduced by \cite{park2021convergence}. Let $(\Omega,\mathcal{F},(\mathcal{F}_t)_{t\geq 0},\mbp)$ be a filtered probability space. If there exists a positive martingale \textit{M} such that $M_0=1$, then we can define a probability measure $\tmbp_t$ on $\mathcal{F}_t$ by
\begin{equation}
	\frac{d\tmbp_t}{d\mbp}=M_t,
\end{equation}
for each $t\geq 0$. It can be easily shown that
\begin{equation}
	\tmbp_{t'}=\tmbp_t\quad \mbox{on}\,\,\mathcal{F}_t,
\end{equation}
whenever $0\leq t\leq t'$. Nevertheless, in general it may not be possible to extend the probability measures $(\tmbp_t)_{t\geq 0}$ consistently to a measure on $\mathcal{F}_{\infty}:=\bigvee_{t\geq 0}\mathcal{F}_t$, the sigma-algebra generated by the filtration $\mathop{(\mathcal{F}_t)}_{t\geq 0}$. In other words, there is no guarantee that there exists a measure $\tmbp$ on $\mathcal{F}_{\infty}$ such that
\begin{equation}
	\tmbp=\tmbp_t,\quad\mbox{on}\,\,\mathcal{F}_t,
\end{equation}
for all $t\geq 0$. This appears to cause a problem when dealing with a kind of long-run analysis which necessarily observes the whole infinite horizon $[0,\infty)$. 

Motivated by the previous discussion, we suggest a rather elaborate probability space. In the following setting, though we cannot define a universal probability measure on $\mathcal{F}_{\infty}$ extending the family $(\tmbp_t)_{t\geq 0}$, such a problem can be avoided.

\begin{defi} \label{defi:consistency}
	Let $(\Omega,\mathcal{F})$ be a measurable space and $(\mathcal{F}_t)_{t\geq 0}$ be a filtration. A family of probability measures $(\mbp_t)_{t\geq 0}$ is called \textit{consistent} if $\mbp_t$ is a probability measure on $\mathcal{F}_t$ for each $t\geq 0$ and if
	\begin{equation}
		\mbp_{t'}=\mbp_t\quad \mbox{on}\,\,\mathcal{F}_t,
	\end{equation}
	whenever $0\leq t\leq t'$. For a consistent family of probability measures $(\mbp_t)_{t\geq 0}$, a quadruple\\ $(\Omega,\mathcal{F},(\mathcal{F}_t)_{t\geq 0},(\mbp_t)_{t\geq 0})$ is called a \textit{consistent probability space}.
\end{defi}
\noindent Since the aforementioned family $(\tmbp_t)_{t\geq 0}$ is consistent, $(\Omega,\mathcal{F},(\mathcal{F}_t)_{t\geq 0},(\tmbp_t)_{t\geq 0})$ is a consistent probability space. Thus, we can abuse the notation $\tmbp$ as an alias of $\tmbp_t$ for any $t\geq0$, when it comes to integration of a $\mathcal{F}_t$-measurable random variable. That is, the notation $\tmbp$ can be universally used to express the expectation $\mathbb{E}^{\tmbp_t}[X]$ as follows:
$$\mathbb{E}^{\tmbp}[X]=\mathbb{E}^{\tmbp_t}[X],$$ 
for any $t\geq 0$ and $\mathcal{F}_t$-measurable random variable \textit{X}.

Certain key concepts in stochastic calculus theory such as a Brownian motion and stochastic differential equations (SDE), should be redefined in the environment of a given consistent probability space $(\Omega,\mathcal{F},(\mathcal{F}_t)_{t\geq 0},(\tmbp_t)_{t\geq 0})$.
\begin{defi} \label{defi:consistency:BM_SDE}
	Let $(\Omega,\mathcal{F},(\mathcal{F}_t)_{t\geq 0},(\tmbp_t)_{t\geq 0})$ be a consistent probability space.
	\begin{enumerate}[label=(\roman*)]
		\item A process $B=(B_t)_{t\geq 0}$ is called a Brownian motion on the consistent probability space if, for each $t\geq 0$, the process $(B_s)_{0\leq s\leq t}$ is a Brownian motion on the filtered probability space $(\Omega,\mathcal{F}_t,(\mathcal{F}_s)_{0\leq s\leq t},\mbp_t)$.
		\item Let a process \textit{B} be a Brownian motion on the consistent probability space. A process $Z=(Z_t)_{t\geq 0}$ is called a strong solution (respectively, weak solution) to the SDE
		\begin{equation}
			dZ_t=b(Z_t)\,dt+\sigma(Z_t)\,dB_t,\quad Z_0=z,
		\end{equation}
		on the consistent probability space if, for each $t\geq0$, the process $(Z_s)_{0\leq s\leq t}$ is a strong solution (respectively, weak solution) of the SDE 
		\begin{equation}
			dZ_s=b(Z_s)\,ds+\sigma(Z_s)\,dB_s,\,0\leq s\leq t,\quad Z_0=z,\quad \tmbp_t - a.s.
		\end{equation}
		on the filtered probability space $(\Omega,\mathcal{F}_t,(\mathcal{F}_s)_{0\leq s\leq t},\mbp_t)$.
		The notion of uniqueness (pathwise and in law) can be similarly defined.
	\end{enumerate}
\end{defi}

\section{Recurrence and Ergodicity} \label{sec:recur}
Let us briefly review the concept of recurrence. There are several different definitions of recurrence, but here we follow the definition of \citet{pinsky1995positive}. The diffusion process $\left(Z,(\mbp_z)_{z\in E}\right)$ satisfying
\begin{equation} \label{SDE:recur}
	dZ_t=b(Z_t)\,dt+\sigma(Z_t)\,dB_t,\quad Z_0=z,\quad \mbp_z\mbox{-a.s.}
\end{equation}
with state space $E\subset\mbr^d$ is called \textit{recurrent} if for all $y,z\in E$ and $r>0,\,\mbp_z\{\tau_{\bar{B}_r(y)}<\infty\}=1$, where $\bar{B}_r(y)=\{x\in E\,:\,|y-x|\leq r\}$ and $\tau_{\bar{B}_r(y)}=\inf{\{t\geq 0\,:\,Z_t\in\bar{B}_r(y) \}}$. In particular, in the one-dimensional case it is easy to see that \textit{Y} is recurrent if and only if for all $y,z\in E,\,\mbp_z\{ Z_t=y\mbox{ for some } t\in[0,\infty) \}=1$. Let us denote the generator corresponding to \textit{Y} by
\begin{equation} \label{operator:L0}
	\mathscr{L}_0:=\frac{1}{2}\sigma^2(z)\frac{d^2}{dz^2}+b(z)\frac{d}{dz}\,,
\end{equation}
and the formal adjoint of $\mathscr{L}_0$ by $$\widetilde{\mathscr{L}}_0:=\frac{1}{2}\frac{d^2}{dz}(\sigma^2(z)\cdot)-\frac{d}{dz}(b(z)\cdot).$$
If a diffusion process is recurrent, there exists a unique (up to positive multiples) positive and twice continuously differentiable function $\tilde{\phi}$ such that $\tilde{\phi}''$ is locally H\"{o}lder continuous on \textit{E} and $\widetilde{\mathscr{L}}_0\tilde{\phi}=0$\citep[Theorem 4.3.3.(v) and Theorem 4.3.4]{pinsky1995positive}. If, in addition, $\int_E\tilde{\phi}(z)dz<\infty$, then the diffusion process is called \textit{positive recurrent}. The positive recurrent processes are of our interest because of certain ergodic properties they have, which we now introduce briefly. Let $p(t,z,dy)$ be the transition measure of $Z_t,\,t\geq 0$. Then $\tilde{\phi}$ is an invariant density for $p(t,z,dy)$, that is, $\int_Ep(t,z,dy)\tilde{\phi}(y)dz=\tilde{\phi}(z)dz$ for all $t\geq 0$\citep[Theorem 4.8.6]{pinsky1995positive}. We can assume, by normalizing if necessary, that $\int_E\tilde{\phi}(z)dz=1$ and therefore $\tilde{\phi}$ becomes the invariant probability density. Then for $h\in L^1(\tilde{\phi})$, that is, $\int_E|h(z)|\tilde{\phi}(z)dz<\infty$, \begin{enumerate}
	\item $\EP_z[|h(Z_t)|]<\infty$ for all $z\in E$ and $t>0$, and
	\item $\lim_{t\to\infty}\EP_z[h(Z_t)]=\int_Eh(z)\tilde{\phi}(z)dz$.
\end{enumerate} 
(e.g., \citealp[Remark 4.2]{robertson2015large}).

In the one-dimensional case, there is a useful criterion for positive recurrence. Though this can be found in many literatures (e.g., \citealp[Corollary 5.1.11]{pinsky1995positive}), we record the criterion here for convenience.
\begin{prop} \label{prop:pos_recur}
	The diffusion process \eqref{SDE:recur} with state space $E=(\alpha,\beta),\,-\infty\leq\alpha<\beta\leq\infty$ is positive recurrent if and only if the following two conditions 
	\begin{itemize}
		\item[(i)] $\int_{\alpha}^{c}\exp{\left\{-\int_c^x\frac{2b}{\sigma^2}(y)dy\right\}}dx=\int_{c}^{\beta}\exp{\left\{-\int_c^x\frac{2b}{\sigma^2}(y)dy\right\}}dx=\infty$,
		\item[(ii)] $\int_{\alpha}^{\beta}\frac{1}{\sigma^2(x)}\exp{\left\{\int_c^x\frac{2b}{\sigma^2}(y)dy\right\}}dx<\infty$,
	\end{itemize}
	hold for $c\in(\alpha,\beta)$.
\end{prop}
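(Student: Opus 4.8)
The plan is to reduce the statement to the classical scale--function and speed--measure description of a one-dimensional diffusion, and then to match each of the two analytic conditions to a probabilistic property: condition (i) to recurrence and condition (ii) to finiteness of the invariant measure. To this end I would introduce the scale density and the speed density
\begin{equation}
s'(x)=\exp\left\{-\int_c^x\frac{2b}{\sigma^2}(y)\,dy\right\},\qquad m(x)=\frac{1}{\sigma^2(x)}\exp\left\{\int_c^x\frac{2b}{\sigma^2}(y)\,dy\right\},
\end{equation}
and set $s(x)=\int_c^x s'(\xi)\,d\xi$. With this notation, condition (i) is precisely the assertion that $s(\alpha+)=-\infty$ and $s(\beta-)=+\infty$, while condition (ii) is $\int_\alpha^\beta m(x)\,dx<\infty$, the customary constant factor in the speed measure being irrelevant for finiteness.

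\textbf{Recurrence.} I would first show that condition (i) is equivalent to recurrence. Since $s$ solves $\mathscr{L}_0 s=0$, It\^o's formula shows that $s(Z_t)$ is a local martingale, and for $\alpha<a<z<b<\beta$ the exit probabilities are given by the scale function,
\begin{equation}
\mbp_z\{\tau_a<\tau_b\}=\frac{s(b)-s(z)}{s(b)-s(a)},\qquad \mbp_z\{\tau_b<\tau_a\}=\frac{s(z)-s(a)}{s(b)-s(a)}.
\end{equation}
Letting $b\uparrow\beta$ and using the hitting-time definition of recurrence recalled at the start of this appendix, the process reaches every level below $z$ almost surely iff $s(\beta-)=+\infty$; symmetrically, letting $a\downarrow\alpha$, it reaches every level above $z$ almost surely iff $s(\alpha+)=-\infty$. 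Thus the diffusion hits every interior point almost surely precisely when both divergences hold, i.e. precisely under condition (i).

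\textbf{Positive recurrence.} Assuming recurrence, it remains to determine when the invariant density is integrable. As recalled above, the invariant density $\tilde\phi$ is characterized up to a positive multiple by $\widetilde{\mathscr{L}}_0\tilde\phi=0$, that is $\tfrac12(\sigma^2\tilde\phi)''-(b\tilde\phi)'=0$. A direct computation shows that the speed density $m$ solves this equation: since $\sigma^2(x)m(x)=\exp\{\int_c^x 2b/\sigma^2\}$ satisfies $(\sigma^2 m)'=(2b/\sigma^2)\,\sigma^2 m=2bm$, we obtain $\tfrac12(\sigma^2 m)'=bm$ and hence $\tfrac12(\sigma^2 m)''=(bm)'$. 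Therefore $\tilde\phi=m$ up to normalization, and by the definition of positive recurrence the process is positive recurrent iff $\int_\alpha^\beta m(x)\,dx<\infty$, which is exactly condition (ii).

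\textbf{Main obstacle.} The verification that $m$ solves the adjoint equation and the algebraic identification of the two integrals with $s(\alpha+),s(\beta-)$ and with $\int m$ are routine. The crux is the rigorous passage between the probabilistic (hitting-time) notion of recurrence adopted in this paper and the analytic scale-function criterion: making the limits $a\downarrow\alpha$, $b\uparrow\beta$ legitimate requires Feller's boundary classification for one-dimensional diffusions, together with the fact that under condition (i) both endpoints are natural (inaccessible), so that $\tau_a,\tau_b$ exhaust $[0,\infty)$ in the limit. Since the result coincides with \citealp[Corollary 5.1.11]{pinsky1995positive}, I would either invoke that reference directly or reproduce the boundary-classification argument; the remaining equivalences then follow immediately from the two displayed integral expressions.
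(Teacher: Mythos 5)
The paper does not actually prove this proposition: it is recorded ``for convenience'' with a direct citation to \citealp[Corollary 5.1.11]{pinsky1995positive}, so there is no in-paper argument to compare against. Your scale-function/speed-measure proof is the standard derivation of that corollary and is sound: the identification of condition (i) with $s(\alpha+)=-\infty$, $s(\beta-)=+\infty$ via the exit probabilities $\mbp_z\{\tau_a<\tau_b\}=(s(b)-s(z))/(s(b)-s(a))$ correctly characterizes recurrence in the hitting-time sense used in Appendix B, and your computation that the speed density $m$ solves $\widetilde{\mathscr{L}}_0 m=0$ correctly identifies it (up to normalization) with the invariant density $\tilde\phi$, so that condition (ii) is exactly the integrability requirement in the paper's definition of positive recurrence. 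You have also correctly flagged the one genuinely delicate step --- justifying the limits $a\downarrow\alpha$, $b\uparrow\beta$, which requires knowing that under condition (i) the boundaries are inaccessible so that $\tau_a\wedge\tau_b$ exhausts the lifetime --- and your plan to either reproduce the boundary classification or fall back on the cited reference is exactly what a complete writeup would do. In short: the proposal supplies a correct proof where the paper supplies only a citation, at the cost of having to import (or reprove) Feller's boundary classification.
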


\section{Differentiation with respect to parameters} \label{sec:diff}
This section provides methods for differentiating functions of the form
\begin{equation} \label{ftn:expec(init,para)}
	(z,\chi)\mapsto\EP_z[f(Z_t;\chi)]
\end{equation}
where \textit{Z} is a solution to the SDE
\begin{equation} 
	dZ_t=b(Z_t;\chi)\,dt+\sigma(Z_t;\chi)\,dB_t,\quad Z_0=z,\quad\mbp_z-a.s.,
\end{equation}
with respect to the parameters \textit{z} and $\chi$. First, the following proposition, which
is a special case of \citealp[Theorem \upperRomannumeral{5}.7.39]{protter2004stochastic}, furnishes a method of differentiating with respect to the initial value \textit{z}.

\begin{prop} \label{prop:perturb:flow}
	Let a process \textit{Z} with a state space $E\subset\mbr$ be a unique strong solution to the SDE
	\begin{equation} \label{SDE:perturb:flow}
		dZ_t=b(Z_t)\,dt+\sigma(Z_t)\,dB_t,\quad Z_0=z\in E\,,
	\end{equation}
	where \textit{b} and $\sigma$ have locally Lipschitz first derivatives on \textit{E}. Then, for almost all $\omega\in\Omega$ and for all $t<\tau(\omega)$, where $\tau:=\inf{\{s\geq 0\,:\,Z_s\not\in E\}}$, $z\mapsto Z_t(\omega)$ is continuously differentiable. Furthermore, the partial derivative $\partial Z_t/\partial z$ is given by
	\begin{equation} \label{deriv:perturb:flow}
		\frac{\partial Z_t}{\partial z}=\exp{\left\{\int_0^tb'(Z_s)-\frac{1}{2}\sigma'(Z_s)^2ds+\int_0^t\sigma'(Z_s)dB_s\right\}}\,.
	\end{equation}
\end{prop}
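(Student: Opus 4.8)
The plan is to obtain the formula in two stages: first establishing that the flow map $z \mapsto Z_t$ is continuously differentiable and that its derivative solves a linear variational equation, and then solving that linear equation in closed form.

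First I would invoke the cited result \citealp[Theorem \upperRomannumeral{5}.7.39]{protter2004stochastic}, which guarantees that under suitable regularity of the coefficients the solution flow is $C^1$ in the initial condition and that the derivative process $\Phi_t := \partial Z_t / \partial z$ satisfies the equation obtained by formally differentiating the integral form of the SDE under the $dt$ and $dB_t$ integrals, namely
\[ \Phi_t = 1 + \int_0^t b'(Z_s)\Phi_s\,ds + \int_0^t \sigma'(Z_s)\Phi_s\,dB_s. \]
Because the cited theorem is phrased for coefficients defined on all of $\mbr$ with globally Lipschitz derivatives, while here $b$ and $\sigma$ only have locally Lipschitz derivatives on the open set $E$, I would first localize. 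One chooses an exhausting sequence of compact intervals $K_n \uparrow E$ and stopping times $\tau_n := \inf\{s \geq 0 : Z_s \notin K_n\}$, so that $\tau_n \uparrow \tau$. On each $K_n$ one replaces $b$ and $\sigma$ by globally defined coefficients $b_n, \sigma_n$ coinciding with $b, \sigma$ on $K_n$ and having globally Lipschitz derivatives; applying the theorem to the modified SDE and invoking pathwise uniqueness shows the two flows agree up to $\tau_n$. Letting $n\to\infty$ then yields continuous differentiability of $z\mapsto Z_t(\omega)$ for all $t < \tau(\omega)$, together with the variational equation above on the stochastic interval $[0,\tau)$.

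Second, I would solve the linear SDE for $\Phi_t$. Written in the form $d\Phi_t = b'(Z_t)\Phi_t\,dt + \sigma'(Z_t)\Phi_t\,dB_t$ with $\Phi_0 = 1$, this is a scalar linear (geometric) SDE whose unique solution is the Dol\'{e}ans-Dade exponential. The cleanest justification is to verify by It\^{o}'s formula that the candidate
\[ \Phi_t = \exp\left\{\int_0^t \left(b'(Z_s) - \tfrac{1}{2}\sigma'(Z_s)^2\right)\,ds + \int_0^t \sigma'(Z_s)\,dB_s\right\} \]
satisfies the variational equation: applying It\^{o} to the exponent shows that the $-\tfrac{1}{2}\sigma'(Z_s)^2$ drift correction exactly cancels the quadratic-variation term, recovering $d\Phi_t = b'(Z_t)\Phi_t\,dt + \sigma'(Z_t)\Phi_t\,dB_t$. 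Uniqueness for this linear equation, whose coefficients are Lipschitz in $\Phi$ up to each $\tau_n$, then identifies this expression with $\partial Z_t/\partial z$, which is the claimed formula.

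I expect the main obstacle to be the localization step: ensuring that the differentiability and the variational equation, both furnished by the cited theorem only for globally regular coefficients, transfer correctly to the stochastic interval $[0,\tau)$ under the merely local hypotheses on $E$. This requires checking that the modified flows agree with $Z$ up to each $\tau_n$, that $\partial Z_t/\partial z$ is well defined and continuous across the patching, and that $\tau_n \uparrow \tau$ so that the formula extends to all $t < \tau$.
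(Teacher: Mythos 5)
Your proposal is correct and follows essentially the same route as the paper, which gives no proof of this proposition at all but simply cites it as a special case of Protter's flow theorem (Theorem V.7.39); your reconstruction — differentiability of the flow plus the variational equation from that theorem, localization via an exhaustion of $E$ to handle the merely locally Lipschitz derivatives on an open state space, and the It\^{o}-formula verification that the Dol\'{e}ans-Dade exponential solves the scalar linear variational SDE — is exactly the content behind that citation. The details you flag (agreement of localized flows up to $\tau_n$ and $\tau_n\uparrow\tau$) are handled correctly.
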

\noindent To emphasize the dependence on the initial value, we also denote \textit{Z} as $Z^z$. The following corollary is now nearly evident.
\begin{cor} \label{cor:perturb:flow}
	Let the assumptions in Proposition \ref{prop:perturb:flow} hold. Suppose $f:E\to\mbr$ is a continuously differentiable function such that 
	\begin{equation}
		\EP_z[|f(Z_t)|]<\infty
	\end{equation}
	and
	\begin{equation} \label{cond:cor:perturb:flow}
		\mbox{a family }\left(f'(Z_t^{z+\epsilon})\frac{\partial Z_t^{z+\epsilon}}{\partial z}\right)_{\epsilon\in I}\mbox{ is uniformly integrable,}
	\end{equation}
	for a sufficiently small open interval \textit{I} containing 0 so that $z+\epsilon\in E$ for all $\epsilon\in I$. Then,
	\begin{align}
		\frac{\partial}{\partial z}\EP_z[f(Z_t)]&=\EP_z\left[\frac{\partial Z_t^{z}}{\partial z}f'(Z_t^{z})\right]\\
		&=\EP_z\left[\exp{\left\{\int_0^tb'(Z_s)-\frac{1}{2}\sigma'(Z_s)^2ds+\int_0^t\sigma'(Z_s)dB_s\right\}}f'(Z_t^{z})\right]\,.
	\end{align}
\end{cor}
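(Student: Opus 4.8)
The plan is to realize all the solutions $(Z^w)_w$ on a single probability space as the stochastic flow generated by one and the same Brownian motion $B$, so that $\EP_z[f(Z_t)]=\EP[f(Z_t^z)]$ and the map $w\mapsto Z_t^w$ can be differentiated pathwise. With this realization the corollary reduces to justifying the interchange of the $z$-derivative with the expectation; the two displayed equalities differ only by substituting the explicit expression \eqref{deriv:perturb:flow} for $\partial Z_t^z/\partial z$ supplied by Proposition \ref{prop:perturb:flow}, so it suffices to establish the first one and then insert \eqref{deriv:perturb:flow}.

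First I would fix $t>0$ and work on the full-measure event $\{t<\tau\}$, which is all of $\Omega$ up to a null set because the process is assumed to remain in $E$ for all time. On this event Proposition \ref{prop:perturb:flow} gives that $w\mapsto Z_t^w$ is continuously differentiable, so by the chain rule together with $f\in C^1$ the map $w\mapsto f(Z_t^w)$ is $C^1$ with derivative $f'(Z_t^w)\,\partial Z_t^w/\partial w$. The fundamental theorem of calculus then yields, for every sufficiently small $\epsilon$ with $z+\epsilon\in E$,
\[
	D_\epsilon:=\frac{f(Z_t^{z+\epsilon})-f(Z_t^z)}{\epsilon}=\frac{1}{\epsilon}\int_z^{z+\epsilon}f'(Z_t^w)\frac{\partial Z_t^w}{\partial w}\,dw .
\]
Since the integrand is continuous in $w$, the averages satisfy $D_\epsilon\to f'(Z_t^z)\,\partial Z_t^z/\partial z$ almost surely as $\epsilon\to 0$.

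The decisive step is to pass this limit inside $\EP[\,\cdot\,]$. Because $\EP[D_\epsilon]$ is exactly the difference quotient of $z\mapsto\EP[f(Z_t^z)]$, establishing $\EP[D_\epsilon]\to\EP[f'(Z_t^z)\,\partial Z_t^z/\partial z]$ delivers the first identity. I would invoke the Vitali convergence theorem: almost sure convergence of $D_\epsilon$ together with uniform integrability of the family $(D_\epsilon)$ forces $L^1$-convergence, hence convergence of the expectations. Uniform integrability of $(D_\epsilon)$ is precisely where hypothesis \eqref{cond:cor:perturb:flow} enters. That hypothesis provides uniform integrability of the endpoint family $\bigl(f'(Z_t^w)\,\partial Z_t^w/\partial w\bigr)_{w\in z+I}$, and each $D_\epsilon$ is an average of these random variables against the uniform probability measure on $[z,z+\epsilon]$; since uniform integrability is stable under such averaging (the convex hull of a uniformly integrable family is again uniformly integrable), $(D_\epsilon)$ inherits uniform integrability. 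Substituting \eqref{deriv:perturb:flow} for $\partial Z_t^z/\partial z$ then produces the second equality and completes the argument.

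I expect the uniform integrability step to be the main obstacle: one must verify that control of the endpoint family $\bigl(f'(Z_t^{z+\epsilon})\,\partial Z_t^{z+\epsilon}/\partial z\bigr)_{\epsilon\in I}$ transfers to the averaged difference quotients $D_\epsilon$, and one must guarantee that $t<\tau$ almost surely so that the flow $w\mapsto Z_t^w$ is differentiable across the entire segment $[z,z+\epsilon]$ rather than only at its endpoints. Both points are secured here by the standing assumption that $Z$ remains in $E$ for all time and by the form in which condition \eqref{cond:cor:perturb:flow} is stated.
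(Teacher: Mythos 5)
Your proposal is correct and takes essentially the same route as the paper: realize all initial conditions on a single probability space via the flow and uniqueness in law, identify the difference quotient of $z\mapsto\EP_z[f(Z_t)]$ with the expectation of the pathwise difference quotient, and use hypothesis \eqref{cond:cor:perturb:flow} to pass to the limit. Your FTC average plus Vitali plus convex-hull-preserves-uniform-integrability argument merely fills in the limit-interchange step that the paper compresses into the single sentence ``Using condition \eqref{cond:cor:perturb:flow}, letting $\epsilon\to 0$ yields the desired result.''
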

\begin{proof}
	Consider a unique strong solution to the following SDE
	\begin{equation}
		dZ_t^{z+\epsilon}=b(Z_t^{z+\epsilon})\,dt+\sigma(Z_t^{z+\epsilon})\,dB_t,\quad Z_0^{z+\epsilon}=z+\epsilon\,.
	\end{equation}
	in the probability space $(\Omega,\mathcal{F},\mbp_z)$ with a Brownian motion \textit{B}.
	Due to uniqueness in law, $\mbp_{z+\epsilon}$-distribution of $Z$ and $\mbp_z$-distribution of $Z^{z+\epsilon}$ are identical. Thus,
	\begin{equation}
		\EP_{z+\epsilon}[f(Z_t)]=\EP_z[f(Z_t^{z+\epsilon})]\,,
	\end{equation}
	and
	\begin{equation}
		\frac{1}{\epsilon}\left(\EP_{z+\epsilon}[f(Z_t)]-\EP_{z}[f(Z_t)] \right)=\EP_z\left[\frac{f(Z_t^{z+\epsilon})-f(Z_t)}{\epsilon}\right]\,.
	\end{equation}
	Using condition \eqref{cond:cor:perturb:flow}, letting $\epsilon\to 0$ yields the desired result.
\end{proof}

For the differentiation with respect to $\chi$, mathematical setting for perturbation of the drift and diffusion terms of the state process should be made first. Let \textit{I} be an open interval containing 0. Let $b_{\cdot}(\cdot):I\times E\to\mbr$ and $\sigma_{\cdot}(\cdot):I\times E\to\mbr$ be continuously differentiable functions such that $b_0(z)=b(z)$ and $\sigma_0(z)=\sigma(z)$, and $\sigma_{\cdot}(\cdot)$ is positive. Suppose that the SDE
\begin{equation} \label{eqn:perturb:Yeps}
	dZ_t\supe=b\sube(Z_t\supe)\,dt+\sigma\sube(Z_t\supe)\,dB_t,\quad Z_0\supe=z\in E,\quad\mbp_z\supe\mbox{ - a.s.}
\end{equation}
has a unique strong solution for each $\epsilon\in I$ and $\mbp_z^0=\mbp_z$. We call $\left((Z\supe,\mbp_z\supe)\right)_{\epsilon\in I}$ a family of perturbed processes of \textit{Z} under $\mbp_z$.

For the family of perturbed processes $((Z\supe,\mbp_z\supe))_{\epsilon\in I}$ and a family of measurable functions $(f\supe)_{\epsilon\in I}$ on \textit{E} such that $\epsilon\mapsto f\supe(z)$ is differentiable for each $z\in E$, we develop a method for finding the derivative of $\epsilon\mapsto\mathbb{E}_z^{\mbp\supe}$ at $\epsilon=0$, namely
\begin{equation} \label{val:perturb}
	\frac{\partial}{\partial\epsilon}\biggl|_{\epsilon=0}\mathbb{E}_z^{\mbp\supe}\left[f\supe(Z_t\supe)\right].
\end{equation}
Then, the partial derivative of \eqref{ftn:expec(init,para)} with respect to $\chi$ follows. Indeed, by setting
\begin{equation}
	f\supe(z;\chi):=f(z;\chi+\epsilon),\quad b\sube(z;\chi):= b(z;\chi+\epsilon),\quad \sigma\sube(z;\chi):=\sigma(z;\chi+\epsilon),
\end{equation}
we have
\begin{equation}
	\frac{\partial}{\partial\chi}\EP_z[f(Z_t;\chi)]=\frac{\partial}{\partial\epsilon}\biggl|_{\epsilon=0}\mathbb{E}_z^{\mbp\supe}\left[f\supe(Z_t\supe)\right].
\end{equation}

The first step involves finding the value \eqref{val:perturb} in the case where perturbations only occur on the drift term. The following proposition is a special case of \citealp[Proposition A.1]{park2018sensitivity}. 
\begin{prop} \label{prop:perturb:drift}
	Let the SDE
	\begin{equation}
		dZ_t^{\epsilon}=b_{\epsilon}(Z_t^{\epsilon})\,dt+\sigma(Z_t^{\epsilon})\,dB_t,\, 	Z_0^{\epsilon}=z\in E,\quad\mbp_z\supe\mbox{ - a.s.}
	\end{equation}
	has a unique strong solution for each $(\epsilon,z)\in I\times E$, and $(f\supe)_{\epsilon\in I}$ be a family of measurable functions on \textit{E} such that $\epsilon\mapsto f\supe(z)$ is differentiable for each $z\in E$. Assume that there exist functions $g,\psi:E\to\mbr$ such that
	\begin{align}
		&\left|\sigma^{-1}\frac{\partial b\sube}{\partial\epsilon}\right|\leq g, \label{ineqn:b_eps} \\
		&\left|f\supe\right|\leq \psi\,, \label{ineqn:psi}
	\end{align}
	for every $\epsilon\in I$. Suppose that for each $t>0$ 
	\begin{equation} \label{cond:unif_int:thm:drift}
		\mbox{a family }\left(\frac{\partial }{\partial\epsilon}f\supe(Z_t)\right)_{\epsilon\in I} \mbox{ is uniformly integrable,}
	\end{equation}
	and there exist positive constants $\epsilon_0,\epsilon_1,p,q$ with $p\geq 2$ and $1/p+1/q=1$ such that 
	\begin{driftcondi}
		\item $\EP_z\left[e^{\epsilon_0\int_0^tg^2(Z_s)\,ds}\right]<\infty,$ \label{cond1:thm:drift}
		\item $\EP_z\left[\int_0^tg^{p+\epsilon_1}(Z_s)\,ds\right]<\infty,$ \label{cond2:thm:drift}
		\item $\EP_z\left[\psi^q(Z_t)\right]<\infty.$ \label{cond3:thm:drift}
	\end{driftcondi}
	Then, a function $\epsilon\mapsto\mathbb{E}_z^{\mbp\supe}[f\supe(Z_t^{\epsilon})]$ is continuously differentiable and 
	\begin{equation}
		\frac{\partial}{\partial\epsilon}\biggl|_{\epsilon=0}\mathbb{E}_z^{\mbp\supe}\left[f\supe(Z_t\supe)\right]=\EP_z\left[f(Z_t)\int_0^t\left(\frac{\partial}{\partial\epsilon}\biggl|_{\epsilon=0}\sigma^{-1}b\sube\right)(Z_s)dB_s\right]+\EP_z\left[\left(\frac{\partial }{\partial\epsilon}\biggl|_{\epsilon=0}f\supe\right)(Z_t)\right]\,.
	\end{equation}
\end{prop}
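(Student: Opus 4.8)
The plan is to reduce the perturbed expectation to a single fixed reference measure $\mbp_z$ by Girsanov's theorem, and then to differentiate under the expectation at $\epsilon=0$. Since the diffusion coefficient $\sigma$ is \emph{unperturbed}, the laws of $Z\supe$ and $Z$ are mutually absolutely continuous, and the density is the Dol\'eans--Dade exponential
\[
	D_t\supe:=\mathcal{E}\left(\int_0^{\cdot}\bigl(\sigma^{-1}(b\sube-b_0)\bigr)(Z_s)\,dB_s\right)_t.
\]
By the mean value theorem $\sigma^{-2}(b\sube-b_0)^2\lesssim\epsilon^2g^2$ for small $\epsilon$ (using \eqref{ineqn:b_eps}), so condition \ref{cond1:thm:drift} serves as a Novikov-type criterion guaranteeing that $D\supe$ is a genuine martingale for $\epsilon$ in a small neighborhood of $0$. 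Together with uniqueness in law this gives the change of measure
\[
	\mathbb{E}_z^{\mbp\supe}\left[f\supe(Z_t\supe)\right]=\EP_z\left[f\supe(Z_t)\,D_t\supe\right],\qquad D_t^0=1,
\]
reducing the claim to differentiating $\epsilon\mapsto\EP_z[f\supe(Z_t)D_t\supe]$ under the fixed measure $\mbp_z$.

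First I would form the difference quotient and split it as
\[
	\frac{f\supe(Z_t)D_t\supe-f^0(Z_t)}{\epsilon}=\frac{f\supe(Z_t)-f^0(Z_t)}{\epsilon}\,D_t\supe+f^0(Z_t)\,\frac{D_t\supe-1}{\epsilon}.
\]
The first summand converges pointwise to $\bigl(\de f\supe\bigr)(Z_t)$, since $D_t\supe\to1$ and, by the mean value theorem, the difference quotients of $f$ lie among $\{\de f\supe(Z_t)\}$; passing the limit through $\EP_z$ is exactly what the uniform integrability hypothesis \eqref{cond:unif_int:thm:drift} supplies, producing the term $\EP_z\bigl[(\de f\supe)(Z_t)\bigr]$.

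For the second summand, writing $D\supe_t-1=\int_0^t D_s\supe\,\sigma^{-1}(b\sube-b_0)(Z_s)\,dB_s$ and letting $\epsilon\to0$ shows
\[
	\frac{D_t\supe-1}{\epsilon}\longrightarrow\int_0^t\left(\de\,\sigma^{-1}b\sube\right)(Z_s)\,dB_s,
\]
the quadratic-variation contribution vanishing because $(b\sube-b_0)^2=O(\epsilon^2)$. Multiplying by $f^0(Z_t)$ and taking expectations yields the stochastic-integral term of the statement. Summing the two limits gives precisely the asserted formula.

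The crux of the argument is the uniform control needed to pass the limit inside $\EP_z$ in the second summand, and this is where the main obstacle lies. I would apply H\"older's inequality with the conjugate exponents $p,q$: the factor $f^0(Z_t)$, dominated by $\psi$ through \eqref{ineqn:psi}, is handled by the moment bound $\EP_z[\psi^q(Z_t)]<\infty$ of \ref{cond3:thm:drift}, while the difference quotients $(D_t\supe-1)/\epsilon$ must be shown bounded in $L^p(\mbp_z)$ uniformly for small $\epsilon$. For the latter, the Burkholder--Davis--Gundy inequality reduces matters to controlling $\bigl(\int_0^t (D_s\supe)^2\,|\sigma^{-1}(b\sube-b_0)/\epsilon|^2\,ds\bigr)^{1/2}$: the exponential-moment condition \ref{cond1:thm:drift} bounds the Girsanov factor $D_s\supe$ uniformly in $\epsilon$, and the integrability condition \ref{cond2:thm:drift} on $g^{p+\epsilon_1}$ (with $g\ge|\sigma^{-1}\partial_\epsilon b\sube|$) controls the integrand, giving the required uniform $L^p$ bound and hence uniform integrability. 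Once this estimate is established, dominated convergence completes the argument. As the statement is the scalar, drift-only specialization of \citealp[Proposition A.1]{park2018sensitivity}, I would either invoke that result directly or reproduce these estimates in the present setting.
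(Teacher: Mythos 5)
Your proposal is correct and takes essentially the same route as the paper: the paper gives no proof of this proposition at all, simply noting that it is a special case of \citealp[Proposition A.1]{park2018sensitivity}, which is exactly the reduction you offer at the end. Your Girsanov-plus-differentiation sketch is the standard argument underlying that cited result, and you assign the hypotheses \ref{cond1:thm:drift}--\ref{cond3:thm:drift} and the uniform integrability condition precisely the roles they play there (Novikov control of the density, $L^p$ bound on the difference quotients of the Dol\'eans--Dade exponential via BDG, and the conjugate moment bound on $\psi$).
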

\noindent Proposition \ref{prop:perturb:drift} can also be extended to the general case using a transformation technique, which will be introduced.
\begin{cor} \label{cor:perturb:united}
	Let $\left((Z\supe,\mbp_z\supe)\right)_{\epsilon\in I}$ be a family of perturbed processes of \textit{Z} under $\mbp_z$, and $\sigma\sube(z)$ is positive on $I\times E$.
	Let us define
	\begin{equation}
		h\sube(x):=\int_{\cdot}^x\frac{1}{\sigma\sube(y)}dy,\quad\widetilde{Z}_t\supe:=h\sube(Z_t\supe).
	\end{equation}
	Suppose that all the assumptions in Proposition \ref{prop:perturb:drift} hold for $\widetilde{Z}$ and $f\supe\circ h\sube^{-1}$ replacing \textit{Z} and $f\supe$, respectively. Then, a function $\epsilon\mapsto\mathbb{E}_z^{\mbp\supe}[f\supe(Z_t^{\epsilon})]$ is continuously differentiable and 
	\begin{align}
		\frac{\partial}{\partial\epsilon}\biggl|_{\epsilon=0}\mathbb{E}_z^{\mbp\supe}[f\supe(Z_t^{\epsilon})]&=\EP_{z}\!\left[f(Z_t)\int_0^t\left(\frac{\partial}{\partial\epsilon}\biggl|_{\epsilon=0}\!\tilde{b}\sube\right)\left(h(Z_s)\right)dB_s\right]+\EP_{z}\left[\left(\frac{\partial }{\partial\epsilon}\biggl|_{\epsilon=0}\! f\supe\circ h\sube^{-1}\right)\left(h(Z_t)\right)\right] \\ &\quad+\left(\de\!\tilde{z}\supe\right)\EP_{z}\!\left[(f\circ h^{-1})'\left(h(Z_t)\right)\frac{\sigma(z)}{\sigma(Z_t)}e^{\int_0^tb'(Z_s)-\frac{1}{2}\sigma'(Z_s)^2ds+\int_0^t\sigma'(Z_s)dB_s}\right]\,,
	\end{align}
	where
	\begin{equation}
		\tilde{b}\sube=\left(\frac{b\sube}{\sigma\sube}-\frac{1}{2}\sigma\sube'\right)\circ h\sube^{-1},\quad\tilde{z}\supe=h\sube(z),\quad h=h_0.
	\end{equation}
\end{cor}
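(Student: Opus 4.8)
The plan is to strip the $\epsilon$-dependence out of the diffusion coefficient by a Lamperti-type change of variables $\widetilde Z_t\supe = h\sube(Z_t\supe)$, which normalizes the volatility to $1$ and thereby reduces the general perturbation to the drift-only situation already settled in Proposition~\ref{prop:perturb:drift}. The one genuinely new feature this introduces is that the transformed process starts from $\tilde z\supe = h\sube(z)$, an initial point that itself moves with $\epsilon$. This second channel of $\epsilon$-dependence is exactly what the flow-differentiation result Corollary~\ref{cor:perturb:flow} is built to handle, so the proof amounts to writing the quantity of interest as a composition and splitting the total $\epsilon$-derivative, by the chain rule, into a ``coefficient'' contribution and an ``initial-value'' contribution.

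First I would run It\^o's formula on $\widetilde Z_t\supe = h\sube(Z_t\supe)$. Since $h\sube$ is the scale-type map with $(h\sube)'=1/\sigma\sube$, the martingale part of $d\widetilde Z\supe$ collapses to $dB_t$ (the \emph{same} driving Brownian motion) and the It\^o correction contributes $-\tfrac12\sigma\sube'$ to the drift, giving
\begin{equation}
d\widetilde Z_t\supe = \tilde b\sube(\widetilde Z_t\supe)\,dt + dB_t, \qquad \widetilde Z_0\supe = h\sube(z) = \tilde z\supe,
\end{equation}
with $\tilde b\sube = \bigl(b\sube/\sigma\sube - \tfrac12\sigma\sube'\bigr)\circ h\sube^{-1}$ as in the statement. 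Consequently $\widetilde Z\supe$ has unit, hence $\epsilon$-independent, diffusion, and $\mathbb{E}_z^{\mbp\supe}[f\supe(Z_t\supe)] = \mathbb{E}^{\mbp\supe}\bigl[(f\supe\circ h\sube^{-1})(\widetilde Z_t\supe)\bigr]$, where on the right the process is started from $\tilde z\supe$.

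Next I would introduce the auxiliary function $F(\epsilon,\zeta):=\mathbb{E}\bigl[(f\supe\circ h\sube^{-1})(\widetilde Z_t\supe)\bigr]$ in which $\widetilde Z\supe$ is started from the \emph{fixed} point $\zeta$, so that the object of interest is $F(\epsilon,\tilde z\supe)$, and differentiate at $\epsilon=0$ by the chain rule,
\begin{equation}
\de F(\epsilon,\tilde z\supe) = \partial_\epsilon F(0,h(z)) + \bigl(\de \tilde z\supe\bigr)\,\partial_\zeta F(0,h(z)).
\end{equation}
The first term freezes the starting point and perturbs only the drift $\tilde b\sube$ and the integrand $f\supe\circ h\sube^{-1}$; because the diffusion is now $\epsilon$-independent, Proposition~\ref{prop:perturb:drift} applies verbatim to $\widetilde Z$, and translating the resulting $\widetilde Z$-expectations back through the $\epsilon=0$ identity $\widetilde Z_s=h(Z_s)$ (which makes $(f\circ h^{-1})(\widetilde Z_t)=f(Z_t)$ and turns the arguments into $h(Z_s),h(Z_t)$, while preserving the stochastic integrals against the common $B$) yields precisely the first two expectations in the statement. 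The second term freezes the coefficients at their $\epsilon=0$ values and differentiates in the starting point; Corollary~\ref{cor:perturb:flow}, applied to $\widetilde Z$ with unit diffusion, renders it via the flow derivative $\partial\widetilde Z_t^\zeta/\partial\zeta$. Writing $\widetilde Z_t^\zeta = h\bigl(Z_t^{h^{-1}(\zeta)}\bigr)$ and using $h'=1/\sigma$ converts this flow derivative into $\frac{\sigma(z)}{\sigma(Z_t)}\,\partial Z_t/\partial z$, whereupon Proposition~\ref{prop:perturb:flow} supplies the exponential factor; this is exactly the third expectation. Assembling the three pieces gives the claimed formula.

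The main obstacle is making the chain-rule splitting rigorous: one must know that $(\epsilon,\zeta)\mapsto F(\epsilon,\zeta)$ is differentiable in each slot with partials well enough behaved (joint continuity near $(0,h(z))$) to recombine, and that each differentiation passes inside the expectation. These interchanges are precisely what is guaranteed by the uniform-integrability and domination hypotheses of Proposition~\ref{prop:perturb:drift} together with condition~\eqref{cond:cor:perturb:flow} of Corollary~\ref{cor:perturb:flow}, which the statement transports wholesale to $\widetilde Z$ and to the transformed integrand $f\supe\circ h\sube^{-1}$; in particular the well-posedness of the transformed SDE is part of those transported hypotheses. Everything else---the It\^o computation, the change-of-variable identities $\widetilde Z_s=h(Z_s)$ and $\widetilde Z_t^\zeta=h\bigl(Z_t^{h^{-1}(\zeta)}\bigr)$, and the bookkeeping with $h'=1/\sigma$---is routine.
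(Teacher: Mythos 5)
Your proposal is correct and follows essentially the same route as the paper's own proof: the Lamperti transformation $\widetilde Z_t\supe=h\sube(Z_t\supe)$ via It\^o's formula to obtain a unit-diffusion SDE with drift $\tilde b\sube$, a chain-rule split of the $\epsilon$-derivative into a coefficient part (Proposition \ref{prop:perturb:drift}) and an initial-value part (Corollary \ref{cor:perturb:flow}), and the conversion of the flow derivative back through $h'=1/\sigma$ and Proposition \ref{prop:perturb:flow}. Your explicit attention to the joint differentiability of $(\epsilon,\zeta)\mapsto F(\epsilon,\zeta)$ is a point the paper passes over silently, but it does not change the argument.
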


\begin{proof}
	Direct application of Ito formula shows that
	\begin{equation}
		d\widetilde{Z}_t\supe=\tilde{b}\sube(\widetilde{Z}_t\supe)\,dt+dB_t,\quad \widetilde{Z}_0\supe=\tilde{z}\supe\,.
	\end{equation}
Thus,
\begin{equation}
	\mathbb{E}_z^{\mbp\supe}\left[f\supe(Z_t\supe)\right]=\mathbb{E}_{\widetilde{Z}_0\supe=\tilde{z}\supe}^{\mbp\supe}\!\left[(f\supe\circ h\sube^{-1})(\widetilde{Z}_t\supe)\right]\,,
\end{equation}
and applying chain rule together with Corollary \ref{cor:perturb:flow} and Proposition \ref{prop:perturb:drift} yields
\begin{align}
	\frac{\partial}{\partial\epsilon}\biggl|_{\epsilon=0}&\mathbb{E}_z^{\mbp\supe}\left[f\supe(Z_t\supe)\right]=\frac{\partial}{\partial\epsilon}\biggl|_{\epsilon=0}\mathbb{E}_{\widetilde{Z}_0=\tilde{z}\supe}^{\mbp\supe}\!\left[(f\supe\circ h\sube^{-1})(\widetilde{Z}_t\supe)\right]\\
	&=\EP_{\widetilde{Z}_0=\tilde{z}}\!\left[(f\circ h^{-1})(\widetilde{Z}_t)\int_0^t\left(\frac{\partial}{\partial\epsilon}\biggl|_{\epsilon=0}\!\tilde{b}\sube\right)(\widetilde{Z}_s)dB_s\right]+\EP_{\widetilde{Z}_0=\tilde{z}}\left[\left(\frac{\partial }{\partial\epsilon}\biggl|_{\epsilon=0}\! f\supe\circ h\sube^{-1}\right)(\widetilde{Z}_t)\right] \\ &\quad+\left(\de\!\tilde{z}\supe\right)\EP_{\widetilde{Z}_0=\tilde{z}}\!\left[(f\circ h^{-1})'(\widetilde{Z}_t)\frac{\partial\widetilde{Z}_t}{\partial\tilde{z}}\right]\hspace{3.5cm} (\mbox{here, }\widetilde{Z}_t=\widetilde{Z}_t^0,\, \tilde{z}=\tilde{z}^0) \\
	&=\EP_{z}\!\left[f(Z_t)\int_0^t\left(\frac{\partial}{\partial\epsilon}\biggl|_{\epsilon=0}\!\tilde{b}\sube\right)\left(h(Z_s)\right)dB_s\right]+\EP_{z}\left[\left(\frac{\partial }{\partial\epsilon}\biggl|_{\epsilon=0}\! f\supe\circ h\sube^{-1}\right)\left(h(Z_t)\right)\right] \\ &\quad+\left(\de\!\tilde{z}\supe\right)\EP_{z}\!\left[(f\circ h^{-1})'\left(h(Z_t)\right)\frac{\sigma(z)}{\sigma(Z_t)}\frac{\partial Z_t}{\partial z}\right]\\
	&=\EP_{z}\!\left[f(Z_t)\int_0^t\left(\frac{\partial}{\partial\epsilon}\biggl|_{\epsilon=0}\!\tilde{b}\sube\right)\left(h(Z_s)\right)dB_s\right]+\EP_{z}\left[\left(\frac{\partial }{\partial\epsilon}\biggl|_{\epsilon=0}\! f\supe\circ h\sube^{-1}\right)\left(h(Z_t)\right)\right] \\ &\quad+\left(\de\!\tilde{z}\supe\right)\EP_{z}\!\left[(f\circ h^{-1})'\left(h(Z_t)\right)\frac{\sigma(z)}{\sigma(Z_t)}e^{\int_0^tb'(Z_s)-\frac{1}{2}\sigma'(Z_s)^2ds+\int_0^t\sigma'(Z_s)dB_s}\right].
\end{align}
\end{proof}
\noindent Such transformation of \textit{Z} into $\widetilde{Z}$ is called the \textit{Lamperti transformation}.

\section{3/2 state process model} \label{sec:3/2:append}

Consider the 3/2 process solving
\begin{equation} \label{SDE:append:3/2}
	dY_t=(b-aY_t)Y_tdt+\sigma Y_t^{3/2}dB_t,\quad y\in (0,\infty)\,,
\end{equation}
where $b,\sigma>0$ and $a>-\sigma^2/2$. The SDE \eqref{SDE:append:3/2} can be obtained by applying It\^{o} formula to the inverse of a CIR process. Thus, the 3/2 process is well-defined and remains in $(0,\infty)$ for all $t\geq 0$.

This section proves Theorem \ref{thm:static_fund:3/2} and \ref{thm:perturb:3/2} in Section \ref{sec:3/2}. First, we introduce three lemmas regarding the 3/2 process. They are not only used to prove the theorems but are also interesting in themselves.

\begin{lemma} \label{lem:diff:3/2}
	Let $Y^y$ be the solution to \eqref{SDE:append:3/2} with $Y_0=y$. Then for each $t\geq 0$, a function $y\mapsto Y_t^y$ is differentiable a.s., and
	\begin{equation} \label{eqn:deriv:3/2}
		\frac{\partial Y_t^y}{\partial y}=\left(\frac{Y_t^y}{y}\right)^{3/2}\exp{\left\{-\frac{b}{2}t-\left(\frac{a}{2}+\frac{3}{8}\sigma^2\right)\int_{0}^{t}Y_s^yds\right\}}\,.
	\end{equation}
\end{lemma}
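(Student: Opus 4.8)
The plan is to apply Proposition \ref{prop:perturb:flow} directly to the state process \eqref{SDE:append:3/2} and then reconcile the resulting flow formula with the claimed expression by a short It\^{o} computation. Here the drift and diffusion coefficients are $b(y)=(b-ay)y=by-ay^2$ and $\sigma(y)=\sigma y^{3/2}$ on the open state space $E=(0,\infty)$, with derivatives $b'(y)=b-2ay$ and $\sigma'(y)=\frac{3}{2}\sigma y^{1/2}$. First I would verify the hypotheses of Proposition \ref{prop:perturb:flow}: both $b'$ and $\sigma'$ are locally Lipschitz on $(0,\infty)$ (the only delicate point being $y^{1/2}$ near the boundary, which is harmless on the open set), so the proposition applies. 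Moreover, as recorded just below \eqref{SDE:append:3/2}, the $3/2$ process never leaves $(0,\infty)$; hence the exit time $\tau$ is a.s. infinite, and the a.s. differentiability of $y\mapsto Y_t^y$ together with the flow representation \eqref{deriv:perturb:flow} holds for every $t\geq 0$.

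Substituting the derivatives into \eqref{deriv:perturb:flow}, and using $\frac{1}{2}\sigma'(y)^2=\frac{9}{8}\sigma^2 y$, gives
\[
\frac{\partial Y_t^y}{\partial y}=\exp\left\{\int_0^t\left(b-2aY_s^y-\frac{9}{8}\sigma^2 Y_s^y\right)ds+\int_0^t\frac{3}{2}\sigma (Y_s^y)^{1/2}\,dB_s\right\}.
\]
It then remains to identify this right-hand side with $\left(Y_t^y/y\right)^{3/2}\exp\left\{-\frac{b}{2}t-\left(\frac{a}{2}+\frac{3}{8}\sigma^2\right)\int_0^t Y_s^y\,ds\right\}$.

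The key step is to rewrite the prefactor $\left(Y_t^y/y\right)^{3/2}$ as a stochastic exponential. Applying It\^{o}'s formula to $\log Y_t^y$ (with $d\langle Y^y\rangle_t=\sigma^2 (Y_t^y)^3\,dt$) yields
\[
\log\frac{Y_t^y}{y}=\int_0^t\left(b-aY_s^y-\frac{1}{2}\sigma^2 Y_s^y\right)ds+\int_0^t\sigma (Y_s^y)^{1/2}\,dB_s,
\]
so multiplying by $3/2$ and exponentiating gives a closed form for $\left(Y_t^y/y\right)^{3/2}$. Inserting this into the claimed expression and collecting the $ds$-integrand, the constant term becomes $\frac{3b}{2}-\frac{b}{2}=b$, the coefficient of $Y_s^y$ becomes $-\frac{3a}{2}-\frac{3}{4}\sigma^2-\frac{a}{2}-\frac{3}{8}\sigma^2=-2a-\frac{9}{8}\sigma^2$, while the martingale part stays $\int_0^t\frac{3}{2}\sigma (Y_s^y)^{1/2}\,dB_s$. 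This matches the exponent displayed above exactly, which finishes the argument.

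I do not expect a genuine obstacle: once Proposition \ref{prop:perturb:flow} is in hand, the proof reduces to matching two exponents, the only mild care being the local Lipschitz regularity of $\sigma'$ near $y=0$ and non-explosion of $Y^y$, both already available. The one conceptual point worth emphasizing is that the emergence of the factor $\left(Y_t^y/y\right)^{3/2}$ is not accidental: the diffusion derivative $\sigma'(y)=\frac{3}{2}\sigma y^{1/2}$ is precisely $\frac{3}{2}$ times the diffusion coefficient $\sigma y^{1/2}$ appearing in $d\log Y_t^y$, so the stochastic integral in \eqref{deriv:perturb:flow} coincides with the martingale part of $\frac{3}{2}\log(Y_t^y/y)$, which is exactly what lets the It\^{o} reconciliation close.
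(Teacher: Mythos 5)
Your proposal is correct and follows essentially the same route as the paper's proof: apply Proposition \ref{prop:perturb:flow} with $b(y)=(b-ay)y$, $\sigma(y)=\sigma y^{3/2}$ to get the flow representation, then apply It\^{o}'s formula to $\log Y_t^y$ and match exponents. The arithmetic in your reconciliation (constant term $b$, coefficient $-2a-\tfrac{9}{8}\sigma^2$ of $Y_s^y$, and the martingale part $\tfrac{3}{2}\sigma\sqrt{Y_s^y}$) checks out exactly.
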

\begin{proof}
	It is easy to check that all the conditions in Proposition \ref{prop:perturb:flow} are met with $b(y)=(b-ay)y,\,\sigma(y)=\sigma y^{3/2},$ and $E=(0,\infty)$. Thus, we can apply Proposition \ref{prop:perturb:flow} to derive that
	\begin{equation}
		\frac{\partial Y_t^y}{\partial y}=\exp{\left\{\int_0^tb-\left(2a+\frac{9}{8}\sigma^2\right)Y_s^yds+\frac{3}{2}\int_0^t\sigma \sqrt{Y_s^y}dB_s\right\}}\,.
	\end{equation}
	On the other hand, applying It\^{o} formula to $\log{Y_t^y}$ yields
	\begin{equation}
		Y_t^y=y\exp{\left\{\int_0^tb-\left(a+\frac{1}{2}\sigma^2\right)Y_s^yds+\int_0^t\sigma \sqrt{Y_s^y}dB_s\right\}}\,.
	\end{equation}
	Therefore, the result follows from the above two equations.
\end{proof}

\begin{lemma} \label{lem:expect_conti:3/2}
	Let \textit{Y} be a 3/2 process satisfying \eqref{SDE:append:3/2}. For any $0<\nu<2a/\sigma^2+2$, a map $s\mapsto\EP[Y_s^{\nu}]$ is continuous on $[0,\infty)$.
\end{lemma}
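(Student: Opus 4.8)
The plan is to reduce the statement to a question about the \emph{negative} moments of a CIR process, whose transition law is explicitly a scaled noncentral chi-squared distribution. As already noted in the text, the $3/2$ process $Y$ is the reciprocal of a CIR process: setting $X_t:=1/Y_t$ with $Y_0=y>0$ and applying It\^o's formula to \eqref{SDE:append:3/2}, one finds that $X$ solves
\[
dX_t=\bigl((a+\sigma^2)-b\,X_t\bigr)\,dt-\sigma\sqrt{X_t}\,dB_t,\quad X_0=1/y,
\]
a CIR diffusion with mean-reversion speed $b$, level $(a+\sigma^2)/b$, and volatility $\sigma$, hence dimension parameter $d:=4(a+\sigma^2)/\sigma^2=4+4a/\sigma^2$. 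The assumption $a>-\sigma^2/2$ is exactly the strict Feller condition $d>2$, which keeps $X_t>0$ for all $t$ a.s., so $Y_t=1/X_t$ is well defined and $\EP[Y_s^{\nu}]=\EP[X_s^{-\nu}]$. Crucially $d/2=2+2a/\sigma^2$, so the admissible range $0<\nu<2a/\sigma^2+2$ is precisely $\nu<d/2$, i.e.\ the range in which $\EP[X_s^{-\nu}]<\infty$; this is the structural fact that makes the lemma natural.

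First I would fix $s_0\in[0,\infty)$ and prove continuity at $s_0$ by a Vitali (uniform-integrability) argument. Since $X$ has a.s.\ continuous paths and stays strictly positive, $s\mapsto X_s^{-\nu}$ is a.s.\ continuous, so $X_s^{-\nu}\to X_{s_0}^{-\nu}$ a.s.\ as $s\to s_0$ (at $s_0=0$ this is simply $X_s\to X_0=1/y$). It then suffices to show that $(X_s^{-\nu})_s$ is uniformly integrable on a neighbourhood of $s_0$, whence $\EP[X_s^{-\nu}]\to\EP[X_{s_0}^{-\nu}]$.

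The uniform integrability I would obtain from a locally uniform bound on a slightly larger negative moment. Choosing $\nu'\in(\nu,d/2)$, the family $(X_s^{-\nu})_s$ is bounded in $L^{\nu'/\nu}$ with $\nu'/\nu>1$ as soon as $\sup_{s\in K}\EP[X_s^{-\nu'}]<\infty$ for compact $K$, and de~la~Vall\'ee-Poussin then yields uniform integrability. This uniform bound is where the explicit transition law enters: writing $2c_sX_s\sim\chi'^2_d(\zeta_s)$ with $c_s,\zeta_s$ the usual continuous positive CIR coefficients on $(0,\infty)$, the quantity $\EP[X_s^{-\nu'}]=(2c_s)^{\nu'}\,\EP[(\chi'^2_d(\zeta_s))^{-\nu'}]$ is finite and continuous in $s$ on $(0,\infty)$, because the negative moment of a noncentral chi-squared is finite for $\nu'<d/2$ and depends continuously (indeed real-analytically) on its noncentrality parameter. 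Boundedness on $[\varepsilon,T]$ is then immediate, and it remains to check that $\EP[X_s^{-\nu'}]\to(1/y)^{-\nu'}$ as $s\to0^+$, extending the bound to all of $[0,T]$.

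The hard part will be precisely this last point: controlling $\EP[X_s^{-\nu'}]$ uniformly as $s\to0^+$, where the noncentral chi-squared representation degenerates ($c_s,\zeta_s\to\infty$) and the transition density concentrates at the starting point $1/y>0$. I would handle it by a Laplace-type estimate on the Bessel-kernel integral $\int_0^\infty v^{-\nu'}e^{-u-v}(v/u)^{d/4-1/2}I_{d/2-1}(2\sqrt{uv})\,dv$, whose mass concentrates near $v\approx u$ and produces the limit $(1/y)^{-\nu'}$; alternatively, a squared-Bessel comparison furnishes a crude but sufficient uniform upper bound near $s=0$. Everything else---the CIR reduction, the a.s.\ path continuity, and the de~la~Vall\'ee-Poussin step---is routine.
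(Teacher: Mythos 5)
Your overall strategy is sound and you correctly locate the difficulty. The paper argues directly on the $3/2$ process: it quotes the explicit moment formula of Ahn--Gao,
\begin{equation}
\EP[Y_t^{\nu}]=\alpha_t^{\nu}e^{-\beta_t}\frac{1}{\Gamma(\nu)}\int_0^1e^{\beta_tx}x^{\kappa-\nu}(1-x)^{\nu-1}dx,\qquad \kappa=\tfrac{2a}{\sigma^2}+1,
\end{equation}
notes that continuity on $(0,\infty)$ is immediate, and handles $t\to0^+$ by splitting the integral at $1-\epsilon$, substituting $w=\beta_t(1-x)$, and squeezing the resulting piece between $c_\epsilon\gamma(\nu,\epsilon\beta_t)$ and $C_\epsilon\gamma(\nu,\epsilon\beta_t)$ with $\gamma$ the incomplete gamma function; since $\beta_t\to\infty$, letting $t\to0$ and then $\epsilon\to0$ gives $\EP[Y_t^\nu]\to y^\nu$. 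Your route instead passes to $X=1/Y$ (CIR, dimension $d=4+4a/\sigma^2$, and indeed $\nu<d/2$ is the sharp range), and uses a.s.\ path continuity plus uniform integrability of $(X_s^{-\nu})_s$ via local boundedness of a slightly larger negative moment $\nu'<d/2$. What the Vitali detour buys is that you only need \emph{boundedness} of $\sup_{s\le T}\EP[X_s^{-\nu'}]$ rather than the exact limit; what it does not buy is an escape from the degenerate $s\to0^+$ regime, which is exactly the incomplete-gamma estimate the paper carries out.

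Two concrete caveats. First, your fallback of a ``squared-Bessel comparison'' does not work as stated: the natural pathwise comparison (dropping the mean-reverting drift $-bX_t\le 0$) dominates $X$ from \emph{above}, which bounds $X_s^{-\nu'}$ from \emph{below} -- the wrong direction for controlling negative moments. A Girsanov-plus-H\"older transfer to a squared Bessel law could be made to work, but that is a different and more delicate argument than a comparison theorem. Second, your primary route -- the Laplace-type estimate on the Bessel-kernel integral showing $\EP[X_s^{-\nu'}]$ stays bounded (indeed tends to $y^{\nu'}$) as $s\to0^+$ -- is precisely the nontrivial content of the lemma, and it is left as a sketch; once you write it out you will essentially be reproducing the paper's incomplete-gamma computation, just phrased for the noncentral chi-squared density of $X$ rather than the Ahn--Gao formula for $Y$. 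So the proposal is acceptable as an outline, but the key step must be completed along the first of your two suggested lines, not the second.
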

\begin{proof}
	The distribution of a 3/2 process is studied in \citet{ahn1999parametric}. From the literature, the continuity of $s\mapsto\EP_z[Z_s^{1+\epsilon_1}]$ on $(0,\infty)$ can be easily derived; however, the continuity at $s=0$ is not clear and must be proved.
	
	It follows from \citet{ahn1999parametric} that
	\begin{equation} \label{eqn:moment:append:3/2}
		\EP[Y_t^{\nu}]=\alpha_t^{\nu}e^{-\beta_t}\frac{1}{\Gamma(\nu)}\int_0^1e^{\beta_tx}x^{\kappa-\nu}(1-x)^{\nu-1}dx,\quad 0<\nu<\kappa+1\,,
	\end{equation}
	where
	\begin{equation}
		\alpha_t=\frac{2b}{\sigma^2(1-e^{-bt})},\quad\beta_t=\frac{\alpha_t}{y}e^{-bt},\quad\kappa=\frac{2a}{\sigma^2}+1\,.
	\end{equation}
	Fix $\epsilon\in(0,1)$ and rewrite \eqref{eqn:moment:append:3/2} as
	\begin{align}
		\EP[Y_t^{\nu}]&=\frac{y^{\nu}e^{\nu bt}}{\Gamma(\nu)}\left(\int_0^{1-\epsilon}\!\beta_t^{\nu}e^{-\beta_t(1-x)}x^{\kappa-\nu}(1-x)^{\nu-1}dx+\int_{1-\epsilon}^1\!(\beta_t(1-x))^{\nu-1}e^{-\beta_t(1-x)}x^{\kappa-\nu}\beta_tdx\right) \\
		&=:\frac{y^{\nu}e^{\nu bt}}{\Gamma(\nu)}(\textrm{\upperRomannumeral{1}}+\textrm{\upperRomannumeral{2}})\,.
	\end{align}
	It is clear that $\upperRomannumeral{1}\to 0$ as $t\to 0$. Now set $w=\beta_t(1-x)$ and rewrite \textit{\upperRomannumeral{2}} as
	\begin{equation}
		\textrm{\upperRomannumeral{2}}=\int_0^{\epsilon\beta_t}w^{\nu-1}e^{-w}\left(1-\frac{w}{\beta_t}\right)^{\kappa-\nu}dw\,.
	\end{equation}
	Then, we have
	\begin{equation}
		c\sube\gamma(\nu,\epsilon\beta_t)\leq\textrm{\upperRomannumeral{2}}\leq C\sube\gamma(\nu,\epsilon\beta_t)\,,
	\end{equation}
	where $c\sube=\min{\{(1-\epsilon)^{\kappa-\nu},1\}}$ and $C\sube=\max{\{(1-\epsilon)^{\kappa-\nu},1\}}$, and $\gamma$ is the incomplete gamma function. Since $\beta_t\to\infty$ as $t\to 0$, we have
	\begin{equation}
		c\sube\Gamma(\nu)\leq\liminf_{t\to 0}{\textrm{\upperRomannumeral{2}}}\leq\limsup_{t\to 0}{\textrm{\upperRomannumeral{2}}}\leq C\sube\Gamma(\nu)\,,
	\end{equation}
	and hence
	\begin{equation}
		c\sube y^{\nu}\leq\liminf_{t\to 0}{\EP[Y_t^{\nu}]}\leq\limsup_{t\to 0}{\EP[Y_t^{\nu}]}\leq C\sube y^{\nu}\,.
	\end{equation}
	Note that $\epsilon$ was arbitrarily chosen among $(0,1)$. Therefore, letting $\epsilon\to 0$ yields
	\begin{equation}
		\lim_{t\to 0}\EP[Y_t^{\nu}]=y^{\nu}.
	\end{equation}
\end{proof}

\begin{lemma} \label{lem:exp_finite:3/2}
	Let \textit{Y} be a 3/2 process solving the SDE \eqref{SDE:append:3/2} with $b,\sigma>0$ and $a>-\sigma^2/2$. Then for any $\epsilon_0$ satisfying $0<\epsilon_0<\frac{1}{2}(\frac{\sigma}{2}+\frac{a}{\sigma})^2$,
	\begin{equation}
		\EP\left[e^{\epsilon_0\int_0^tY_u\,du}\right]<\infty,\quad\forall t>0\,.
	\end{equation} 
\end{lemma}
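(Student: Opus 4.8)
The plan is to build an explicit positive local martingale adapted to this exponential functional, read off a supermartingale inequality, and then strip away an unwanted terminal factor via Hölder's inequality, where the strictness of the bound on $\epsilon_0$ is exactly what buys us room.

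First I would hunt for a Hansen--Scheinkman eigenpair of the operator $\mathscr{L}:=\tfrac12\sigma^2 y^3\tfrac{d^2}{dy^2}+(b-ay)y\tfrac{d}{dy}+\epsilon_0 y\,\cdot$ (the generator of $Y$ twisted by the potential $-\epsilon_0 y$), using the power ansatz $\phi(y)=y^{-\zeta}$. A direct substitution shows that $\mathscr{L}\phi=\lambda\phi$ holds precisely when the $y^{1-\zeta}$ coefficient vanishes, i.e.
$$\tfrac12\sigma^2\zeta^2+\left(\tfrac{\sigma^2}{2}+a\right)\zeta+\epsilon_0=0,$$
in which case $\lambda=-\zeta b$. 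The discriminant of this quadratic is nonnegative exactly when $\epsilon_0\le\tfrac12\left(\tfrac{\sigma}{2}+\tfrac{a}{\sigma}\right)^2$, which is precisely the threshold in the statement. Moreover, since $a>-\sigma^2/2$ the sum of the roots $-(\sigma^2+2a)/\sigma^2$ is negative while their product $2\epsilon_0/\sigma^2$ is positive, so both roots $\zeta$ are strictly negative; hence $\phi(y)=y^{-\zeta}=y^{|\zeta|}$ is positive on $(0,\infty)$ and $\lambda=|\zeta|b>0$.

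Next, using the strict inequality, I would fix $\epsilon_1\in\left(\epsilon_0,\ \tfrac12\left(\tfrac{\sigma}{2}+\tfrac{a}{\sigma}\right)^2\right)$ and let $(\zeta_1,\lambda_1)$ be the corresponding eigenpair. By It\^o's formula and the eigen-relation $\mathscr{L}\phi_1=\lambda_1\phi_1$, the process
$$M_t:=e^{-\lambda_1 t}\,\frac{Y_t^{|\zeta_1|}}{y^{|\zeta_1|}}\,\exp\left(\epsilon_1\int_0^t Y_u\,du\right)$$
has vanishing drift, so it is a nonnegative local martingale with $M_0=1$, hence a supermartingale; a localization at exit times of compact subsets of $(0,\infty)$ together with Fatou makes this rigorous. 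Consequently $\EP[M_t]\le 1$, which rearranges to
$$\EP\left[Y_t^{|\zeta_1|}\exp\left(\epsilon_1\int_0^t Y_u\,du\right)\right]\le y^{|\zeta_1|}e^{\lambda_1 t}<\infty.$$

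The main obstacle is that this estimate carries the terminal weight $Y_t^{|\zeta_1|}$, which degenerates as $Y_t\to 0$ and so cannot be removed by any pointwise lower bound. I would resolve this with Hölder's inequality at conjugate exponents $r=\epsilon_1/\epsilon_0>1$ and $r'=r/(r-1)$: since $e^{\epsilon_0\int_0^t Y_u du}=\bigl(Y_t^{|\zeta_1|}e^{\epsilon_1\int_0^t Y_u du}\bigr)^{1/r}\cdot Y_t^{-|\zeta_1|/r}$, one gets
$$\EP\left[e^{\epsilon_0\int_0^t Y_u du}\right]\le\left(\EP\left[Y_t^{|\zeta_1|}e^{\epsilon_1\int_0^t Y_u du}\right]\right)^{1/r}\left(\EP\left[Y_t^{-|\zeta_1|r'/r}\right]\right)^{1/r'}.$$
The first factor is finite by the supermartingale bound. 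For the second I would invoke that $1/Y$ is a CIR process (as recalled at the start of this appendix), whose marginals are noncentral chi-square and therefore possess finite moments of every positive order; thus $\EP[Y_t^{-|\zeta_1|r'/r}]=\EP[(1/Y_t)^{|\zeta_1|r'/r}]<\infty$. Multiplying the two finite factors gives the claim for every $t>0$.
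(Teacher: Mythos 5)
Your proposal is correct and follows essentially the same route as the paper's proof: construct the exponential local martingale $Y_t^{\gamma}e^{\epsilon'\int_0^t Y_u\,du}$ (the paper picks the critical exponent $\tilde\epsilon_0=\tfrac12(\tfrac{\sigma}{2}+\tfrac{a}{\sigma})^2$ with its double root $\gamma=\sqrt{2\tilde\epsilon_0}/\sigma$, while you pick an intermediate $\epsilon_1\in(\epsilon_0,\tilde\epsilon_0)$ and a root of the same quadratic), use the supermartingale inequality, and then remove the terminal weight by H\"older together with finiteness of all positive moments of the CIR process $1/Y$. The eigenpair derivation of $\gamma$ is just a more systematic way of finding the exponent the paper writes down directly; no gap.
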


\begin{proof}
	Set $\tilde{\epsilon}_0:=\frac{1}{2}(\frac{\sigma}{2}+\frac{a}{\sigma})^2$ and $\gamma:=\sqrt{2\tilde{\epsilon}_0}/\sigma$. We first show that
	\begin{equation}
		\EP\left[Y_t^{\gamma}e^{\tilde{\epsilon}_0\int_0^tY_u\,du}\right]<\infty\,,
	\end{equation}
	for any fixed $t> 0$. The process
	\begin{equation}
		\widetilde{Y}_s:=Y_s^{\gamma}e^{b\gamma(t-s)+\tilde{\epsilon}_0\int_0^sY_u\,du},\quad 0\leq s\leq t\,,
	\end{equation}
	is a local martingale, and hence a supermartingle. Indeed, using It\^{o} formula,
	\begin{align}
		d\widetilde{Y}_s&=\left(-b\gamma+\tilde{\epsilon}_0Y_s+\gamma(b-aY_s)+\frac{1}{2}\sigma^2\gamma(\gamma-1)Y_s\right)\widetilde{Y}_sds+\sigma\gamma\sqrt{Y_s}\widetilde{Y}_sdB_s \\
		&=\left(\tilde{\epsilon}_0+\frac{1}{2}\sigma^2\gamma(\gamma-1)-a\gamma\right)Y_s\widetilde{Y}_sds+\sigma\gamma\sqrt{Y_s}\widetilde{Y}_sdB_s\\
		&=\sigma\gamma\sqrt{Y_s}\widetilde{Y}_sdB_s\,,
	\end{align}
	because $\tilde{\epsilon}_0+\sigma^2\gamma(\gamma-1)/2-a\gamma=0$. Thus,
	\begin{equation} \label{temp:lem_proof:3/2}
		\EP\left[Y_t^{\gamma}e^{\tilde{\epsilon}_0\int_0^tY_u\,du}\right]=\EP\left[\widetilde{Y}_t\right]\leq\widetilde{Y}_0=e^{b\gamma t}y^{\gamma}<\infty\,.
	\end{equation}
	We now prove the lemma. It suffices to show that for any $r\in(0,1)$
	\begin{equation}
		\EP\left[e^{r\tilde{\epsilon}_0\int_0^tY_u\,du}\right]<\infty\,.
	\end{equation}
	Note that $Y_s>0,\,0\leq s\leq t$, almost surely. Using H\"{o}lder's inequality,
	\begin{align}
		\EP\left[e^{r\tilde{\epsilon}_0\int_0^tY_u\,du}\right]&=\EP\left[\left(Y_t^{\gamma}e^{\tilde{\epsilon}_0\int_0^tY_u\,du}\right)^rY_t^{-r\gamma}\right] \\
		&\leq\EP\left[Y_t^{\gamma}e^{\tilde{\epsilon}_0\int_0^tY_u\,du}\right]^r\EP\left[Y_t^{-\frac{r}{1-r}\gamma}\right]^{1-r}\,.
	\end{align}
	The inequality \eqref{temp:lem_proof:3/2} and the fact that the \textit{n}-th moment of a CIR process is finite for arbitrary $n>0$ show that the right-hand side in the above equation is finite. This completes the proof.
\end{proof}

\subsection{Proof of Theorem \ref{thm:static_fund:3/2}} \label{subsec:3/2:fund_sep}

Hereinafter, the symbols to be defined are valid until the end of this section. Assumption \ref{assume:3/2} exactly corresponds to the Feller condition for the CIR process $1/Z_t$. Hence, the SDE
\begin{equation} \label{eqn:3/2:u:SDE}
	dZ_t=(b-\theta Z_t)Z_tdt+\sigma Z_t^{3/2}dB_t,\quad Z_0=z>0,\quad \mbp_z\mbox{-a.s.}
\end{equation}
has a unique strong solution and the solution remains in $(0,\infty)$ for all $t\geq 0$ for each $z\in(0,\infty)$. In addition, it can be directly shown that all the conditions in \citealp[Theorem 1]{heath2000martingales} hold. Therefore, a function \textit{u} defined as
\begin{equation}
	u(t,z):=\EP_z\left[\exp{\left\{\int_{0}^{t}\dfrac{1}{\delta}\left(p\,r-\dfrac{q}{2}\lVert\mu\rVert^2Z_s\right)\,ds\right\}} \right]\,,\quad (t,z)\in[0,T)\times(0,\infty),
\end{equation}
is a $C^{1,2}((0,T)\times (0,\infty))$ solution to the PDE \eqref{eqn:3/2:PDE}, which proves (i).

It can be directly shown that a pair
\begin{equation} \label{eigpair:3/2}
	(\lambda,\phi(z)):=\left(b\eta-\frac{p\,r}{\delta},z^{-\eta} \right)\,,
\end{equation}
where
\begin{equation} \label{eigpair:3/2:const}
	\eta:=-\left(\frac{1}{2}+\frac{\theta}{\sigma^2}\right)+\sqrt{\left(\frac{1}{2}+\frac{\theta}{\sigma^2}\right)^2\!+\frac{q\lVert\mu\rVert^2}{\delta\sigma^2}}\,,
\end{equation}
is an eigenpair of
\begin{equation}
	\mathscr{L}=\frac{1}{2}\sigma^2z^3\frac{d^2}{dz^2}+\left(bz-\theta z^2\right)\frac{d}{dz}+\frac{1}{\delta}\left(p\,r-\frac{q}{2}\lVert\mu\rVert^2z\right)\cdot.
\end{equation}
This is indeed the positive recurrent eigenpair (see Remark \ref{rmk:1d:positive_recur} and Proposition \ref{prop:pos_recur}). Thus, we can define a new measure $\tmbp_z$ on $\mathcal{F}_t$ by
\[ \frac{d\tmbp_z}{d\mbp_z}=\exp{\left\{\lambda t+\int_{0}^{t}\dfrac{1}{\delta}\left(p\,r-\dfrac{q}{2}\lVert\mu\rVert^2Z_s\right)\,ds\right\}}\left(\frac{Z_t}{z}\right)^{-\eta} \]
for each $t\geq 0$, and by Girsanov's theorem
\begin{equation} \label{BM:3/2:2}
	\widetilde{B}_s:=B_s+\int_{0}^{s}\sigma\eta\sqrt{Z_u}du,\quad 0\leq s\leq t
\end{equation}
is a Brownian motion under $\tmbp_z$. Then, the Hansen--Scheinkman decomposition applied to \textit{u} gives the form
\begin{equation} \label{eqn:3/2:u:sol2}
	u(t,z)=e^{-\lambda t}z^{-\eta}\ETP_z\left[Z_t^{\eta}\right]\,,
\end{equation}
where $(Z_s)_{0\leq s\leq t}$ is the solution to the SDE 
\begin{equation} \label{SDE:append:3/2:2}
	dZ_s=\Bigl(b-(\theta+\sigma^2\eta)Z_s\Bigl)Z_s\,ds+\sigma Z_s^{3/2}d\widetilde{B}_s,\quad 0\leq s\leq t\quad Z_0=z,\quad \tmbp_z\mbox{-a.s.}
\end{equation}
This proves (ii).

As discussed in Subsection \ref{subsec:verification}, the verification argument holds because the SDE \eqref{SDE:append:3/2:2} is well-posed. Thus, (iii) and the dynamic fund separation illustrated in (vi) hold. It remains to show that the intertemporal hedging weight $f_z(T-t,z)/f(T-t,z)$ vanishes exponentially fast as terminal time goes to infinity.

The invariant probability density $\tilde{\phi}$ of \textit{Z} is
\begin{equation} \label{eqn:invar_density:append:3/2}
	\tilde{\phi}(z)=\frac{(2b/\sigma^2)^{2(\theta+\sigma^2\eta)/\sigma^2+2}}{\Gamma(2(\theta+\sigma^2\eta)/\sigma^2+2)}z^{-2(\theta+\sigma^2\eta)/\sigma^2-3}e^{-2b/(\sigma^2z)}
\end{equation}
(\citealp[Theorem 5.1.10]{pinsky1995positive}) and $z^{\eta}\in L^1(\tilde{\phi})$. By the ergodic theorem, for each $z\in(0,\infty),\,\,f(t,z)=\ETP_z\left[Z_t^{\eta}\right]\to\int_Ex^{\eta}\tilde{\phi}(x)dx$ as $t\to\infty$. Moreover, \textit{f} is continuous at $t=0$ for each $z\in(0,\infty)$ by Lemma \ref{lem:expect_conti:3/2}. Thus, $f(\cdot,z)$ is bounded and bounded away from zero for each \textit{z}. Now we investigate the behavior of $f_z$. To emphasize the initial value we also denote \textit{Z} by $Z^z$ for a moment. By Lemma \ref{lem:diff:3/2}, the derivative of a map $z\mapsto (Z_t^z)^{\eta}$ is given by
\begin{equation} \label{eqn:3/2:Z_partial}
	\frac{\partial}{\partial z}(Z_t^z)^{\eta}=\frac{\eta}{z^{3/2}}(Z_t^z)^{\eta+1/2}\exp{\left\{-\frac{b}{2}t-\left(\frac{\theta+\sigma^2\eta}{2}+\frac{3\sigma^2}{8}\right)\int_0^t(Z_s^z)\,ds\right\}}\,.
\end{equation}
Since $z^{\eta+1/2}\in L^1(\tilde{\phi})$, as discussed in Appendix \ref{sec:recur} $Z_t^{\eta+1/2}\in L^1(\tmbp_z)$. We also note by \eqref{eqn:3/2:Z_partial} that the map $z\mapsto (Z_t^z)$ is increasing. Therefore, we can apply Corollary \ref{cor:perturb:flow} to calculate the derivative
\begin{align}
	f_z(t,z)&=\frac{\partial}{\partial z}\ETP_z\left[Z_t^{\eta}\right] \\&=\ETP_z\left[\frac{\partial}{\partial z}Z_t^{\eta}\right] \\&=\frac{\eta}{z^{3/2}}\ETP_z\left[Z_t^{\eta+1/2}\exp{\left\{-\left(\frac{\theta+\sigma^2\eta}{2}+\frac{3\sigma^2}{8}\right)\int_0^tZ_s\,ds-\frac{b}{2}t \right\}}\right]\,,
\end{align}
and (iv) follows. The positive recurrent eigenpair of the operator $\mathscr{L}^{\phi}$ (see \eqref{operator:1d:Lphi}) is \begin{equation}
	\left(\hat{\lambda},\hat{\phi}(z)\right)=\left(\frac{b}{2},z^{-1/2}\right)\,.
\end{equation}
As before, we define a probability measure $\hmbp_z$ on $\mathcal{F}_t$ by
\[ \frac{d\hmbp_z}{d\tmbp_z}=\exp{\left\{\frac{b}{2}t-\left(\frac{\theta+\sigma^2\eta}{2}+\frac{3\sigma^2}{8}\right)\int_0^tZ_s\,ds \right\}}\frac{\hat{\phi}(Z_t)}{\hat{\phi}(z)}\]
for each $t\geq 0$. Then under $\hmbp_z,\,f_{z}$ is written as
\begin{equation} \label{eqn:f_z:append:3/2}
	f_z(t,z)=\eta e^{-bt}z^{-1/2}\mathbb{E}_z^{\hmbp}\left[Z_t^{\eta+1}\right]\,,
\end{equation}
where \textit{Z} is the solution of the following SDE with a Brownian motion $(\widehat{B}_s)_{0\leq s\leq t}$:
\begin{equation}
	dZ_s=\left(b-\left(\theta+\sigma^2(\eta+1/2)\right)Z_s\right)Z_s\,ds+\sigma Z_s^{3/2}d\widehat{B}_s,\quad 0\leq s\leq t,\quad\hmbp_z\mbox{-a.s.}
\end{equation}
It can be directly shown that a map $z\mapsto z^{\eta+\xi+1/2}$ is integrable with respect to the invariant probability measure of the transition measure of $Z_t,\,t\geq 0$. Then, by the ergodic property, the expectation term on the right hand side converges to some positive value as $t\to\infty$. In addition, $f_z(\cdot,z)$ is continuous on $[0,\infty)$ for each $z\in E$ by
Lemma \ref{lem:expect_conti:3/2}. So far we have shown that both $f(\cdot,z)$ and $f_z(\cdot,z)$ are bounded and bounded away from zero for each $z\in E$. Therefore, there exists positive functions $C_1,C_2:(0,\infty)\to\mbr$ such that
\begin{equation}
	C_1(z)e^{-bt}\leq\left|\frac{f_z(t,z)}{f(t,z)}\right|\leq C_2(z)e^{-bt}\,,
\end{equation}
and, hence, (v) and the static fund separation described in (vi) are now clear.

\subsection{Proof of Theorem \ref{thm:perturb:3/2}} \label{subsec:3/2:stability}

\textbf{Proof of (i).} Recall that in the proof of Theorem \ref{thm:static_fund:3/2} it was derived that
\begin{equation}
	\left|\frac{f_z(t,z)}{f(t,z)}\right|\leq C(z)e^{-bt}\,,
\end{equation}
for some positive function $C:(0,\infty)\to\mbr$. Since
\begin{equation}
	\frac{\partial}{\partial z}\left(\frac{f_z(t,z)}{f(t,z)}\right)=\frac{f_{zz}(t,z)}{f(t,z)}-\left(\frac{f_z(t,z)}{f(t,z)}\right)^2\,,
\end{equation}
it suffices to analyze the first term on the right-hand side. But we also recall that for each $z\in E$ the function $f(\cdot,z)$ is bounded and bounded away from zero. Thus, it remains to show that $f_{zz}(t,z)\to 0$ as $t\to\infty$ for each $z\in E$.

Lemma \ref{lem:diff:3/2} and Corollary \ref{cor:perturb:flow} allow \eqref{eqn:f_z:append:3/2} to be differentiated with respect to \textit{z} as follows:
\begin{align}
	f_{zz}(t,z)&=\frac{\partial}{\partial z}\left(\eta e^{-bt}z^{-1/2}\mathbb{E}_z^{\hmbp}\left[Z_t^{\eta+1}\right]\right)\\
	&=-\eta e^{-bt}z^{-3/2}\left(\frac{1}{2}+bz\right)\mathbb{E}_z^{\hmbp}\left[Z_t^{\eta+1}\right]\\
	&\quad+\eta(\eta+1) e^{-\frac{3}{2}bt}z^{-3/2}\EHP_z\left[Z_t^{\eta+3/2}
	e^{-\left(\theta/2+\left(\eta/2+5/8\right)\sigma^2\right)\int_0^tZ_s\,ds}\right] \\
	&=-\eta e^{-bt}z^{-3/2}\left(\frac{1}{2}+bz\right)\mathbb{E}_z^{\hmbp}\left[Z_t^{\eta+1}\right]+\eta(\eta+1)e^{-2bt}z^{-2}\EBP_z\left[Z_t^{\eta+2}\right]\,, \label{temp:initial:append:3/2}
\end{align}
where $(Z,\bmbp_z)$ satisfies
\begin{equation}
	dZ_s=\Bigl(b-(\theta+\sigma^2(\eta+1))Z_s\Bigl)Z_s\,ds+\sigma Z_s^{3/2}d\bar{B}_s,\, 0<s\leq t,\quad Z_0=z,\quad \bmbp_z\mbox{-a.s.}
\end{equation}
As in the proof of Theorem \ref{thm:static_fund:3/2}, we can show that the expectation terms in \eqref{temp:initial:append:3/2} converge to some positive numbers for each $z\in E$. Thus, we have
\begin{equation}
	\left|\frac{\partial}{\partial z}\left(\frac{u_z(t,z)}{u(t,z)}\right)-\frac{\eta}{z^2}\right|=\left|\frac{\partial}{\partial z}\left(\frac{f_z(t,z)}{f(t,z)}\right)\right|\leq C(z)e^{-bt}
\end{equation}
for some positive function $C:(0,\infty)\to\mbr$. Finally, based on the result in the previous subsection, we conclude that there exists a positive function $C:(0,\infty)\to\mbr$ such that
\begin{align}
	\left|\frac{\partial}{\partial z}\hat{\pi}_T(t,z)\right|&=\left|\frac{(\Sigma')^{-1}\rho\,\sigma\delta}{1-p}\left(\frac{u_z(T-t,z)}{u(T-t,z)}-z\frac{\partial}{\partial z}\!\left(\frac{u_z(T-t,z)}{u(T-t,z)}\right)\right)\right| \\
	&=\left|\frac{(\Sigma')^{-1}\rho\,\sigma\delta}{1-p}\left(\left(\frac{u_z(T-t,z)}{u(T-t,z)}-\frac{\phi'(z)}{\phi(z)}\right)-z\left(\frac{\partial}{\partial z}\!\left(\frac{u_z(T-t,z)}{u(T-t,z)}-\frac{\phi'(z)}{\phi(z)}\right)\right)\right)\right| \\
	&\leq\left|\frac{(\Sigma')^{-1}\rho\,\sigma\delta}{1-p}\right|\left(\left|\frac{f_z(T-t,z)}{f(T-t,z)}\right|+z\left|\frac{\partial}{\partial z}\!\frac{f_z(T-t,z)}{f(T-t,z)}\right|\right) \\
	&\leq C(z)e^{-b(T-t)}\,,
\end{align}
for any $T>0,\,0\leq t<T$.\vspace{1cm}

\noindent\textbf{Proof of (ii).} Since
\begin{equation}
	\frac{\partial}{\partial b}\!\left(\frac{f_z(t,z)}{f(t,z)}\right)=\frac{f_z(t,z)}{f(t,z)}\left(\frac{\partial}{\partial b}\!\log{f_z(t,z)}-\frac{\partial}{\partial b}\!\log{f(t,z)}\right)\,.
\end{equation}
the large time behaviors of the logarithmic derivatives of \textit{f} and $f_z$ with respect to \textit{b} should be of our interest. Consider a family of pairs $(Z\supe,\tmbp_z\supe)_{z\in E,\epsilon\in I}$ satisfying
\begin{equation} \label{eqn:drift_b_Yeps:append:3/2}
	dZ_t^{\epsilon}=\left(b+\epsilon-(\theta+\sigma^2\eta) Z_t^{\epsilon}\right)Z_t\supe dt+\sigma(Z_t^{\epsilon})^{3/2}d\tB_t\supe,\quad Z_0^{\epsilon}=z\in E,\quad\tmbp_z\supe\mbox{ - a.s.}
\end{equation}
where $\tB\supe$ is a Brownian motion under $\tmbp_z\supe$ for each $\epsilon\in I$. Then clearly
\begin{equation}
	\frac{\partial}{\partial b}f(t,z)=\de\ETPe_z[(Z_t\supe)^{\eta}]\,.
\end{equation}
To apply Proposition \ref{prop:perturb:drift}, it is necessary to show that the conditions \ref{cond1:thm:drift} - \ref{cond3:thm:drift} therein hold with
\begin{equation}
	g(z)=\frac{1}{\sigma\sqrt{z}}\quad\mbox{and}\quad\psi(z)=z^{\eta}\,.
\end{equation} 

\noindent\textbf{\ref{cond1:thm:drift}}: Since $1/Z$ is a CIR process, it follows from \citealp[Lemma 3.1]{wong2006changes} that
\begin{equation}
	\ETP_z\left[\exp{\left\{\epsilon_0\!\int_0^t\frac{1}{\sigma^2Z_s}\,ds\right\}}\right]<\infty
\end{equation}
for $\epsilon_0\leq b^2/2$. \\

\noindent\textbf{\ref{cond2:thm:drift}} : The distribution of a CIR process is well-known (e.g., \citet{cox1985theory}), and it is easily derived that
$\ETP_z[1/Z_s^{1+\epsilon_1/2}]<\infty$ for all $\epsilon_1>0$ and $s\geq 0$, and $s\mapsto\ETP_z[1/Z_s^{1+\epsilon_1/2}]$ is continuous on $[0,\infty)$. Therefore, we have
\begin{equation}
	\ETP_z\left[\int_0^t\left(\frac{1}{\sigma\sqrt{Z_s}}\right)^{2+\epsilon_1}\!ds\right]<\infty
\end{equation}
for all $\epsilon_1>0$. \\

\noindent\textbf{\ref{cond3:thm:drift}} : Recall that the invariant probability density of \textit{Z} is given by \eqref{eqn:invar_density:append:3/2}. It is straightforward to see that $z^{2\eta}\in L^1(\tilde{\phi})$. Therefore,
\begin{equation}
	\ETP_z\left[Z_t^{2\eta}\right]<\infty\,.
\end{equation}

Now we can apply Proposition \ref{prop:perturb:drift}:
\begin{align}
	\left|\de\ETPe_z[(Z_t\supe)^{\eta}]\right|&=\left|\ETP_z\left[Z_t^{\eta}\int_0^t\frac{1}{\sigma\sqrt{Z_s}}d\tB_s\right]\right| \\
	&\leq\left(\ETP_z\left[Z_t^{2\eta}\right]\right)^{1/2}\left(\ETP_z\left[\int_0^t\frac{1}{\sigma^2Z_s}\,ds\right]\right)^{1/2}\,\,\mbox{(H\"{o}lder's inequality, It\^{o} isometry)} \\
	&\leq C(z\,;\,b,a,\sigma)\sqrt{t}\,, \qquad\mbox{(Ergodic property)}
\end{align}
for some positive function $C:(0,\infty)\to\mbr$ depending on \textit{b,a}, and $\sigma$. Recall that the function \textit{f} is bounded and bounded away from zero. Therefore, we conclude that
\begin{equation}
	\left|\frac{\partial}{\partial b}\!\log{f(t,z)}\right|=\left|\frac{\frac{\partial}{\partial b}f(t,z)}{f(t,z)}\right|\leq C(z)\sqrt{t}\,,
\end{equation}
for some positive function $C:(0,\infty)\to\mbr$. Next, we take the logarithmic derivative of \eqref{eqn:f_z:append:3/2} with respect to \textit{b}:
\begin{equation}
	\frac{\frac{\partial}{\partial b}f_z(t,z)}{f_z(t,z)}=\frac{\partial}{\partial b}\log{f_z(t,z)}=-t+\de\!\log{\EHP_z\left[Z_t^{\eta+1}\right]}\,.
\end{equation}
Then, we can proceed with the same process as before and have
\begin{equation}
	\de\!\EHP_z\left[Z_t^{\eta+1}\right]\leq C(z\,;\,b,a,\sigma)\sqrt{t}\,,
\end{equation}
for some positive function $C:(0,\infty)\to\mbr$ depending on \textit{b,a}, and $\sigma$. Finally, we conclude that
\begin{align}
	\left|\frac{\partial}{\partial b} \hat{\pi}_T(t,z)\right|&=\left|\frac{(\Sigma')^{-1}\rho\,\sigma z\delta}{1-p}\frac{\partial}{\partial b}\left(\frac{u_z(T-t,z)}{u(T-t,z)}\right)\right| \\
	&\leq \left|\frac{(\Sigma')^{-1}\rho\,\sigma z\delta}{1-p}\right|\left|\frac{f_z(T-t,z)}{f(T-t,z)}\right|\left(\left|\frac{\partial}{\partial b}\log{f(T-t,z)}\right|+\left|\frac{\partial}{\partial b}\log{f_z(T-t,z)}\right|\right) \\
	&\leq C(z\,;\,b,a,\sigma)(1+T)e^{-b(T-t)}
\end{align}
for any $0\leq t<T$ and some positive function $C:(0,\infty)\to\mbr$ depending on \textit{b,a}, and $\sigma$.\vspace{1cm}

\noindent\textbf{Proof of (iii).} Since the parameter \textit{a} is also included in the drift term, the main idea is similar to the proof of (ii). The main difference here is that $\eta$ depends on \textit{a}. As before, we have
\begin{equation}
	\frac{\partial}{\partial a}\!\left(\frac{f_z(t,z)}{f(t,z)}\right)=\frac{f_z(t,z)}{f(t,z)}\left(\frac{\partial}{\partial a}\!\log{f_z(t,z)}-\frac{\partial}{\partial a}\!\log{f(t,z)}\right)\,.
\end{equation}
Consider a family of pairs $(Z\supe,\tmbp_z\supe)_{z\in E,\epsilon\in I}$ satisfying
\begin{equation} \label{eqn:drift_a:append:Yeps}
	dZ_t^{\epsilon}=\left(b-(\theta\sube+\sigma^2\eta\sube) Z_t^{\epsilon}\right)Z_t\supe dt+\sigma(Z_t^{\epsilon})^{3/2}d\tB_t\supe,\quad Z_0^{\epsilon}=z\in E,\quad\tmbp_z\supe\mbox{ - a.s.}
\end{equation}
where $\tB\supe$ is a Brownian motion under $\tmbp_z\supe,\,\theta\sube:=a+\epsilon+q\sigma\rho'\mu=\theta+\epsilon$, and
\begin{equation}
	\eta\sube:=-\left(\frac{1}{2}+\frac{\theta\sube}{\sigma^2}\right)+\sqrt{\left(\frac{1}{2}+\frac{\theta\sube}{\sigma^2}\right)^2\!+\frac{q\lVert\mu\rVert^2}{\delta\sigma^2}}\,.
\end{equation}
for each $\epsilon\in I$. We note that $\eta$ and function \textit{f} depend on \textit{a} only through $\theta$. Thus,
\begin{equation}
	\frac{d\eta}{da}=\frac{\partial\eta}{\partial\theta},\quad\mbox{and}\quad\frac{\partial}{\partial a}f(t,z)=\frac{\partial}{\partial\theta}f(t,z)=\de\ETPe_z[(Z_t\supe)^{\eta\sube}]\,.
\end{equation}
We verify that the conditions in Proposition \ref{prop:perturb:drift} hold. First, we take \textit{I} to be sufficiently small so that $1\not\in I$. Since $(z^{\eta+1}\log{z})^2\in L^1(\tilde{\phi})$, the family
\begin{equation}
	\left(\frac{\partial}{\partial \epsilon}Z_t^{\eta\sube}\right)_{\epsilon\in I}\!=\left(\frac{\partial\eta\sube}{\partial\epsilon}Z_t^{\eta\sube}\log{Z_t}\right)_{\epsilon\in I}
\end{equation}
is uniformly integrable. Then, we verify that \ref{cond1:thm:drift} - \ref{cond3:thm:drift} hold with
\begin{equation}
	g(z)=C\sqrt{z}\quad\mbox{and}\quad\psi(z)=z^{\eta+1}\,,
\end{equation} 
where \textit{C} is a positive constant such that $C>\theta/\sigma+\sigma\eta$. \\

\noindent\textbf{\ref{cond1:thm:drift}}: This immediately follows from Lemma \ref{lem:exp_finite:3/2}.\\

\noindent\textbf{\ref{cond2:thm:drift}}: It suffices to show that for some $\epsilon_1>0,\,\ETP_z[Z_s^{1+\epsilon_1}]<\infty$ for $s\geq 0$ and $s\mapsto\ETP_z[Z_s^{1+\epsilon_1}]$ is continuous on $[0,\infty)$. Using Lemma \ref{lem:expect_conti:3/2}, it can be directly shown that for any $\epsilon_1<2\eta,\,s\mapsto\ETP_z[Z_s^{1+\epsilon_1}]$ is continuous on $[0,\infty)$, and hence continuous on $[0,\infty)$. Therefore, we conclude that
\begin{equation}
	\ETP_z\left[\int_0^tZ_s^{1+\epsilon_1}ds\right]<\infty,\quad\forall\epsilon_1<2\eta\,.
\end{equation}

\noindent\textbf{\ref{cond3:thm:drift}}: It is straightforward to see that $z^{2\eta+2}\in L^1(\tilde{\phi})$. Therefore,
\begin{equation}
	\ETP_z\left[Z_t^{2(\eta+1)}\right]<\infty\,.
\end{equation}
Thus, we can apply Proposition \ref{prop:perturb:drift}:
\begin{align}
	\de\!\ETPe_z\left[(Z_t\supe)^{\eta\sube}\right]=\frac{\partial\eta}{\partial\theta}\ETP_z[Z_t^{\eta}\log{Z_t}]+\left(\frac{\theta}{\sigma}+\sigma\eta\right)\ETP_z\left[Z_t^{\eta}\int_0^t\sqrt{Z_s}d\tB_s\right]\,.
\end{align}
Now the remaining tasks are the same as in (ii). Likewise, the term
\begin{equation}
	\frac{\partial}{\partial b}\log{f_z(t,z)}
\end{equation}
can be dealt with in the same way. Therefore,
\begin{align}
	\left|\frac{\partial}{\partial a} \hat{\pi}_T(t,z)-\frac{\partial}{\partial a}\hat{\pi}_{\infty}(z)\right|&=\left|\frac{(\Sigma')^{-1}\rho\,\sigma z\delta}{1-p}\right|\left|\frac{\partial}{\partial a}\frac{f_z(T-t,z)}{f(T-t,z)}\right| \\
	&\leq \left|\frac{(\Sigma')^{-1}\rho\,\sigma z\delta}{1-p}\right|\left|\frac{f_z(T-t,z)}{f(T-t,z)}\right|\left(\left|\frac{\partial}{\partial a}\log{f(T-t,z)}\right|+\left|\frac{\partial}{\partial a}\log{f_z(T-t,z)}\right|\right) \\
	&\leq C(z\,;\,b,a,\sigma)(1+T)e^{-b(T-t)}
\end{align}
for any $0\leq t<T$ and some positive function $C:(0,\infty)\to\mbr$ depending on \textit{b,a}, and $\sigma$.\vspace{1cm}

\noindent \textbf{Proof of (iv).} As before, we investigate the large-time behavior of the logarithmic derivative of \textit{f} and $f_z$ with respect to $\sigma$.
Consider a family of pairs $(Z\supe,\tmbp_z\supe)_{z\in E,\epsilon\in I}$ satisfying
\begin{equation} \label{eqn:diffusion_Yeps:append:3/2}
	dZ_t^{\epsilon}=\left(b-(\theta\sube+(\sigma+\epsilon)^2\eta\sube) Z_t^{\epsilon}\right)Z_t\supe dt+(\sigma+\epsilon)(Z_t^{\epsilon})^{3/2}d\tB_t\supe,\quad Z_0^{\epsilon}=z\in E,\quad\tmbp_z\supe\mbox{ - a.s.}
\end{equation}
where $\tB\supe$ is a Brownian motion under $\tmbp_z\supe,\,\theta\sube:=a+q(\sigma+\epsilon)\rho'\mu$, and
\begin{equation}
	\eta\sube:=-\left(\frac{1}{2}+\frac{\theta\sube}{(\sigma+\epsilon)^2}\right)+\sqrt{\left(\frac{1}{2}+\frac{\theta\sube}{(\sigma+\epsilon)^2}\right)^2\!+\frac{q\lVert\mu\rVert^2}{\delta(\sigma+\epsilon)^2}}\,.
\end{equation}
for each $\epsilon\in I$. Let us define
\begin{equation}
	\widetilde{Z}_t\supe:=\frac{2}{(\sigma+\epsilon)\sqrt{Z_t\supe}},\,\tilde{z}\supe:=\frac{2}{(\sigma+\epsilon)\sqrt{z}},\quad t\geq 0,\,\epsilon\in I\,.
\end{equation}
Then, $\widetilde{Z}$ satisfies
\begin{equation}
	d\widetilde{Z}_t\supe=\left(\left(\frac{2\theta\sube}{(\sigma+\epsilon)^2}+2\eta\sube+\frac{3}{2}\right)\frac{1}{\widetilde{Z}_t\supe}-\frac{b}{2}\widetilde{Z}_t\supe\right)\,dt-d\tB_t,\quad \widetilde{Z}_0\supe=\tilde{z}\supe.
\end{equation}
We note that $(\widetilde{Z}\supe)^2$ is a CIR process and $(1/\widetilde{Z}\supe)^2$ is a 3/2 process. Thus, combining the demonstrations in (ii) and (iii) directly yields \ref{cond1:thm:drift} - \ref{cond3:thm:drift} in Proposition \ref{prop:perturb:drift}. Since $\eta$ depends on $\sigma$ not only through $\sigma$ itself but also through $\theta$, to reduce confusion we use the notation "$d/d\sigma$" as a differentiation with respect to $\sigma$. That is,
\begin{equation}
	\frac{d\eta}{d\sigma}=\frac{\partial\eta}{\partial\sigma}+\frac{\partial\theta}{\partial\sigma}\frac{\partial\eta}{\partial\theta}=q\rho'\mu\frac{\partial\eta}{\partial\theta}+\frac{\partial\eta}{\partial\sigma}\,.
\end{equation}
Using Corollary \ref{cor:perturb:united}, explicit calculations can be conducted as follows:
\begin{align}
	\de\!\ETPe_z\!\left[(Z_t\supe)^{\eta\sube}\right]&=2\left(\frac{d\eta}{d\sigma}\log{\frac{2}{\sigma}}-\frac{\eta}{\sigma}\right)\ETP_z[Z_t^{\eta}] \\
	&\quad+\left(\frac{2\theta}{\sigma^2}-\frac{1}{\sigma}\frac{\partial\theta}{\partial\sigma}-\sigma\frac{d\eta}{d\sigma}\right)\ETP_z\!\left[Z_t^{\eta}\int_0^t\sqrt{Z_s}d\tB_s\right]\\
	&\quad+\frac{d\eta}{d\sigma}\ETP_z\left[Z_t^{\eta}\log{\left(\frac{\sigma^2}{4}Z_t\right)}\right] \\
	&\quad-\frac{2\eta}{\sigma\sqrt{z}}\,\ETP_z\left[Z_t^{\eta+1/2}\left(\frac{Z_t}{z}\right)^{-3/2}\frac{\partial Z_t}{\partial z}\right]\,.
\end{align}
Each term on the right-hand side can now be dealt with as before. Likewise, the term
\begin{equation}
	\frac{\partial}{\partial\sigma}\log{f_z(t,z)}
\end{equation}
can be dealt with in the same way. Therefore, we conclude that
\begin{align}
	\left|\frac{\partial}{\partial\sigma} \hat{\pi}_T(t,z)-\frac{\partial}{\partial\sigma} \hat{\pi}_{\infty}(t,z)\right|&=\left|\frac{(\Sigma')^{-1}\rho z\delta}{1-p}\right|\left|\frac{f_z(T-t,z)}{f(T-t,z)}+\sigma\frac{\partial}{\partial\sigma}\!\left(\frac{f_z(T-t,z)}{f(T-t,z)}\right)\right| \\
	&\leq\left|\frac{(\Sigma')^{-1}\rho z\delta}{1-p}\right|\left(\left|\frac{f_z(T-t,z)}{f(T-t,z)}\right|+\sigma\left|\frac{\partial}{\partial\sigma}\!\frac{f_z(T-t,z)}{f(T-t,z)}\right|\right) \\
	&\leq C(z\,;\,b,a,\sigma)(1+T)e^{-b(T-t)}
\end{align}
for any $0\leq t<T$ and some positive function $C:(0,\infty)\to\mbr$ depending on \textit{b,a}, and $\sigma$.

\section{Inverse Bessel state process model} \label{sec:invB:append}

The Inverse Bessel process is defined to be the solution to the SDE:
\begin{equation} \label{SDE:append:invB}
	dY_t=(b-aY_t)Y_t^2dt+\sigma Y_t^2dB_t,\quad y\in (0,\infty)\,,
\end{equation}
where $b,\sigma>0$, and $a>-\sigma^2/2$. At a first glance, the well-posedness of the above SDE is not clear, and it needs to be verified.

The following SDE
\begin{equation} \label{squarediff:recip}
	dZ_t=\left(2a+3\sigma^2-2b\sqrt{Z_t}\right)\,dt-2\sigma\sqrt{Z_t}dB_t,\quad Z_0=1/y\,.
\end{equation}
satisfies the condition of linear growth, and hence the weak existence holds (e.g., \citealp[Chapter 4, Theorem 2.3]{ikeda1989stochastic}). Moreover, it can be easily shown that condition (i) in Remark \ref{rmk:1d:positive_recur} holds with $\alpha=0$ and $\beta=\infty$. Thus, a weak solution of the SDE \eqref{squarediff:recip} remains in $(0,\infty)$ for all $t\geq 0$(e.g. \citealp[Chapter 5, Proposition 5.22 (a)]{karatzas1991brownian}). Then, process $Y_t:=1/\sqrt{Z_t},\,t\geq 0$ is well-defined and by It\^{o} formula \textit{Y} satisfies \eqref{SDE:append:invB}. Consequently, the SDE \eqref{SDE:append:invB} has a weak solution and any weak solutions stay in $(0,\infty)$. In addition, since $b(y)=(b-ay)y^2$ and $\sigma(y)=\sigma y^2$ in \eqref{SDE:append:invB} are locally Lipschitz continuous, the pathwise uniqueness holds for \eqref{SDE:append:invB}(e.g. \citealp[Chapter 4, Theorem 3.1]{ikeda1989stochastic}). Therefore, we conclude that the SDE \eqref{SDE:append:invB} with initial value $y\in(0,\infty)$ has a unique strong solution(\citet{yamada1971uniqueness}).

This section proves Theorem \ref{thm:static_fund:invB} and Theorem \ref{thm:perturb:invB} in Section \ref{sec:invB}. We first prove the following three lemmas regarding the inverse Bessel process, which will be frequently used later. The second lemma, which provides a relationship with a 3/2 process, is quite important in that it allows us to capture the behavior of the inverse Bessel process to some extent, even if its explicit distribution is unknown.

\begin{lemma} \label{lem:diff:invB}
	For each $t>0$, let $Y_t^y$ be the unique strong solution of \eqref{SDE:append:invB} with $Y_0=y$. Then, $Y_t^y$ is differentiable with respect to the initial value \textit{y} a.s., and
	\begin{equation} \label{eqn:deriv:invB}
		\frac{\partial Y_t^y}{\partial y}=\left(\frac{Y_t^y}{y}\right)^2\exp{\left\{-(\sigma^2+a)\int_{0}^{t}\left(Y_s^y\right)^2ds\right\}}\,.
	\end{equation}
\end{lemma}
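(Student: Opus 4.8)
The plan is to follow exactly the template of Lemma \ref{lem:diff:3/2}, since the inverse Bessel process \eqref{SDE:append:invB} has smooth (polynomial and monomial) coefficients $b(y)=(b-ay)y^2$ and $\sigma(y)=\sigma y^2$ on the open set $E=(0,\infty)$. First I would verify that the hypotheses of Proposition \ref{prop:perturb:flow} are met: both $b$ and $\sigma$ are $C^\infty$ on $(0,\infty)$, hence possess locally Lipschitz first derivatives there, and by the well-posedness established at the start of this appendix the unique strong solution stays in $(0,\infty)$ for all $t\geq 0$, so the exit time $\tau$ is almost surely infinite and the conclusion of Proposition \ref{prop:perturb:flow} applies for every $t>0$.

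Applying Proposition \ref{prop:perturb:flow} with $b'(y)=2by-3ay^2$ and $\sigma'(y)=2\sigma y$ (so that $\tfrac12\sigma'(y)^2=2\sigma^2y^2$) gives the first representation
\begin{equation}
\frac{\partial Y_t^y}{\partial y}=\exp\left\{\int_0^t\bigl(2bY_s^y-3a(Y_s^y)^2-2\sigma^2(Y_s^y)^2\bigr)\,ds+\int_0^t 2\sigma Y_s^y\,dB_s\right\}.
\end{equation}
Independently, I would apply It\^o's formula to $\log Y_t^y$: from $d\langle Y\rangle_t=\sigma^2(Y_t^y)^4\,dt$ one gets $d(\log Y_t^y)=\bigl(bY_t^y-a(Y_t^y)^2-\tfrac12\sigma^2(Y_t^y)^2\bigr)\,dt+\sigma Y_t^y\,dB_t$, and therefore
\begin{equation}
Y_t^y=y\exp\left\{\int_0^t\bigl(bY_s^y-(a+\tfrac12\sigma^2)(Y_s^y)^2\bigr)\,ds+\int_0^t\sigma Y_s^y\,dB_s\right\}.
\end{equation}

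The final step is the algebraic reconciliation of the two formulas. Squaring the It\^o representation yields
\begin{equation}
\left(\frac{Y_t^y}{y}\right)^2=\exp\left\{\int_0^t\bigl(2bY_s^y-(2a+\sigma^2)(Y_s^y)^2\bigr)\,ds+\int_0^t 2\sigma Y_s^y\,dB_s\right\},
\end{equation}
and multiplying by $\exp\{-(\sigma^2+a)\int_0^t(Y_s^y)^2\,ds\}$ changes the $ds$-exponent from $2a+\sigma^2$ to $3a+2\sigma^2$ in front of $(Y_s^y)^2$, leaving the stochastic integral $\int_0^t 2\sigma Y_s^y\,dB_s$ untouched; this is precisely the exponent appearing in the first representation, which establishes \eqref{eqn:deriv:invB}. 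This matching is entirely elementary, so there is no genuine obstacle; the only point demanding care is confirming that the solution never reaches the boundary of $(0,\infty)$ so that $\partial Y_t^y/\partial y$ is given by the formula for all $t$ (and in particular never vanishes), which is already guaranteed by the well-posedness argument preceding the lemma.
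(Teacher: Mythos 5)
Your proposal is correct and follows exactly the same route as the paper's proof: verify the hypotheses of Proposition \ref{prop:perturb:flow} for $b(y)=(b-ay)y^2$ and $\sigma(y)=\sigma y^2$ on $E=(0,\infty)$, apply it to get the exponential representation of $\partial Y_t^y/\partial y$, apply It\^{o}'s formula to $\log Y_t^y$, and reconcile the two exponents. The algebra checks out ($-(2a+\sigma^2)-(a+\sigma^2)=-(3a+2\sigma^2)$), so nothing is missing.
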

\begin{proof}
	It is easy to see that all the conditions in Proposition \ref{prop:perturb:flow} are satisfied with $b(y)=(b-ay)y^2,\,\sigma(y)=\sigma y^{2},$ and $E=(0,\infty)$. Thus, we can apply Proposition \ref{prop:perturb:flow} to derive that
	\begin{equation}
		\frac{\partial Y_t^y}{\partial y}=\exp{\left\{\int_0^t2bY_s^y-\left(3a+2\sigma^2\right)(Y_s^y)^2ds+2\sigma\int_0^tY_s^ydB_s\right\}}\,.
	\end{equation}
	On the other hand, applying It\^{o} formula to $\log{Y_t^y}$ yields
	\begin{equation}
		Y_t^y=y\exp{\left\{\int_0^tbY_s^y-\left(a+\frac{1}{2}\sigma^2\right)(Y_s^y)^2ds+\sigma\int_0^tY_s^ydB_s\right\}}\,.
	\end{equation}
	Therefore, the result follows from the above two equations.
\end{proof}

\begin{lemma} \label{lem:comparison:invB}
	Let us define a 3/2 process by
	\begin{equation} \label{SDE:3/2:append:invB}
		d\widetilde{Y}_t=\left(2b\beta-(2a-\sigma^2-2b\alpha)\widetilde{Y}_t\right)\widetilde{Y}_tdt+2\sigma\widetilde{Y}_t^{3/2}dB_t,\quad\widetilde{Y}_0=y^2,
	\end{equation}
	where $\alpha,\beta$ are positive real numbers such that $\alpha<(2a+\sigma^2)/2b$ and $\alpha\beta\leq 1/4$.
	Then, for the inverse Bessel process \eqref{SDE:append:invB},
	\begin{equation}
		Y_t^2\leq \widetilde{Y}_t,\quad\forall t\in[0,\infty),\quad\mbox{a.s.}
	\end{equation}
\end{lemma}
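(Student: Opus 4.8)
The plan is to reduce the claim to a one-dimensional comparison theorem for SDEs driven by the \emph{same} Brownian motion. The key structural observation is that, after squaring, the inverse Bessel process and the auxiliary 3/2 process share an \emph{identical} diffusion coefficient, so that a pathwise comparison becomes available.

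First I would set $U_t:=(Y_t)^2$ and apply It\^o's formula to $y\mapsto y^2$. Using $dY_t=(b-aY_t)Y_t^2\,dt+\sigma Y_t^2\,dB_t$, a direct computation gives
\begin{equation}
	dU_t=\bigl(2b\,U_t^{3/2}-(2a-\sigma^2)U_t^2\bigr)\,dt+2\sigma\,U_t^{3/2}\,dB_t,\quad U_0=y^2.
\end{equation}
The diffusion coefficient $2\sigma u^{3/2}$ here coincides with that of $\widetilde Y$, both SDEs are driven by the same $B$, and the initial data agree, $U_0=\widetilde Y_0=y^2$. Hence the situation is exactly of the type treated by the classical one-dimensional comparison theorem (\citealp[Chapter 6, Theorem 1.1]{ikeda1989stochastic}; see also \citealp[Chapter 5, Proposition 2.18]{karatzas1991brownian}).

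Next I would verify the drift-domination hypothesis, namely $b_U(u)\le\tilde b(u)$ for all $u\in(0,\infty)$, where $b_U(u)=2bu^{3/2}-(2a-\sigma^2)u^2$ and $\tilde b(u)=2b\beta u-(2a-\sigma^2-2b\alpha)u^2$. Their difference factors cleanly as
\begin{equation}
	\tilde b(u)-b_U(u)=2b\,u\bigl(\alpha u-\sqrt u+\beta\bigr),
\end{equation}
so after the substitution $s=\sqrt u$ the required inequality reduces to nonnegativity of the quadratic $\alpha s^2-s+\beta$ for all $s>0$, which is exactly what the hypotheses on $\alpha$ and $\beta$ (positivity together with the constraint on the product $\alpha\beta$) are designed to guarantee; the companion condition $\alpha<(2a+\sigma^2)/(2b)$ plays a separate role, recorded in the next paragraph.

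The remaining obstacle, and the genuinely delicate point, is that the shared diffusion coefficient $2\sigma u^{3/2}$ is only locally $\tfrac12$-H\"older (indeed only locally Lipschitz on $(0,\infty)$) and degenerates at $0$ while blowing up at $\infty$, so the comparison theorem cannot be invoked globally in one stroke. I would resolve this by localization. Both processes stay in $(0,\infty)$ for all times: $Y$ by the well-posedness established at the start of this appendix, and $\widetilde Y$ because the condition $\alpha<(2a+\sigma^2)/(2b)$ is equivalent to $2a-\sigma^2-2b\alpha>-2\sigma^2$, which is precisely the Feller-type condition ensuring that the 3/2 process $\widetilde Y$ (whose reciprocal is a CIR process) never reaches $\{0,\infty\}$. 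I would therefore run the comparison up to the exit time of $[1/n,n]$, where the coefficients are Lipschitz and the theorem applies verbatim, and then let $n\to\infty$, using non-attainment of the boundary to pass to the limit. This yields $Y_t^2=U_t\le\widetilde Y_t$ for all $t\ge0$, almost surely.
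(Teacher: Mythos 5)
Your argument is correct in substance, but it runs the comparison in different coordinates than the paper does, and the difference is worth noting. The paper passes to reciprocals: it compares $Z=1/Y^{2}$ (the process already introduced in \eqref{squarediff:recip}) with $\widetilde Z=1/\widetilde Y$, which is a CIR process; both are square-root diffusions with the common coefficient $2\sigma\sqrt{z}$, which is globally $\tfrac12$-H\"older on $[0,\infty)$, so \citealp[Chapter 5, Proposition 2.18]{karatzas1991brownian} applies in one stroke once the drift domination $2a+3\sigma^{2}-2b\sqrt{x}\ge 2a+3\sigma^{2}-2b\alpha-2b\beta x$ is checked --- which reduces to the same elementary bound $\sqrt{x}\le\alpha+\beta x$ that you use --- and inverting $\widetilde Z_t\le Z_t$ gives $Y_t^{2}\le\widetilde Y_t$. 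Your squared coordinates $U=Y^{2}$ make the drifts directly comparable but leave the diffusion coefficient $2\sigma u^{3/2}$, which is not globally H\"older of order $\tfrac12$, so you are forced into the localization/truncation step; that step is standard, and your justification that neither process exits $(0,\infty)$ (well-posedness of $Y$, and the Feller condition for $1/\widetilde Y$ encoded in $\alpha<(2a+\sigma^{2})/(2b)$) is exactly what is needed to let $n\to\infty$. So the paper's route buys a global, localization-free application of the comparison theorem, while yours stays at the level of the original processes at the cost of the extra truncation argument. One point you should state explicitly rather than gesture at: nonnegativity of $\alpha s^{2}-s+\beta$ on all of $(0,\infty)$ requires the discriminant condition $\alpha\beta\ge 1/4$, not $\alpha\beta\le 1/4$ as written in the lemma; the paper's own proof likewise invokes ``$\alpha\beta\geq 1/4$'' at the corresponding step (and the application in Lemma \ref{lem:expect_conti:invB} can simply take $\alpha\beta=1/4$), so the hypothesis as printed is a typo, but your claim that the stated hypotheses ``are designed to guarantee'' the nonnegativity is, taken literally, false and should be corrected to the right inequality.
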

\begin{proof}
	By the condition $\alpha<(2a+\sigma^2)/2b$, the 3/2 process \eqref{SDE:3/2:append:invB} is well-defined. Put $\widetilde{Z}:=1/\widetilde{Y}$. Then, $\widetilde{Z}$ is a CIR process satisfying
	\begin{equation} \label{SDE:temp:append:invB}
		d\widetilde{Z}_t=(2a+3\sigma^2-2b\alpha-2b\beta\widetilde{Z}_t)\,dt-2\sigma\sqrt{\widetilde{Z}_t}dB_t,\quad \widetilde{Z}_0=1/y^2.
	\end{equation}
	Note that $\sqrt{x}\leq\alpha+\beta x,\,\forall x>0$, because $\alpha\beta\geq 1/4$. Thus, the drift terms of \eqref{squarediff:recip} and \eqref{SDE:temp:append:invB} have the following relationship:
	\begin{equation}
		2a+3\sigma^2-2b\sqrt{x}\geq 2a+3\sigma^2-2b\alpha-2b\beta,\quad \forall x>0.
	\end{equation}
	By \citealp[Chapter 5, Proposition 2.18]{karatzas1991brownian}, $\widetilde{Z}_t\leq Z_t,\,\forall t\geq 0$, a.s., where \textit{Z} is a process defined by \eqref{squarediff:recip}. Since $Y=1/\sqrt{Z}$ and $\widetilde{Y}=1/\widetilde{Z}$, the proof is completed.
\end{proof}

\begin{lemma} \label{lem:expect_conti:invB}
	Let \textit{Y} be an inverse Bessel process solving \eqref{SDE:append:invB}. For any $1<\nu<2a/\sigma^2+3$, a map $s\mapsto\EP[Y_s^{\nu}]$ is continuous on $[0,\infty)$.
\end{lemma}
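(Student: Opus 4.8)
The plan is to combine almost-sure continuity of the sample paths with a uniform-integrability argument, importing the finiteness and continuity of the relevant moments from an associated 3/2 process through the comparison of Lemma~\ref{lem:comparison:invB}. I would fix $s_0\in[0,\infty)$ and note that, since the inverse Bessel process is a strong solution with continuous trajectories, the map $s\mapsto Y_s(\omega)$ is continuous for almost every $\omega$; consequently $Y_s^{\nu}\to Y_{s_0}^{\nu}$ almost surely as $s\to s_0$. It then remains to upgrade this to convergence of expectations, for which it suffices to establish uniform integrability of $\{Y_s^{\nu}:s\in J\}$ over a bounded neighbourhood $J$ of $s_0$.

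To produce uniform integrability I would control a slightly higher moment. Since $\nu<2a/\sigma^2+3$, I can pick $\nu''\in(\nu,\,2a/\sigma^2+3)$. Invoking Lemma~\ref{lem:comparison:invB} with a dominating 3/2 process $\widetilde{Y}$ as in \eqref{SDE:3/2:append:invB}, the pointwise inequality $Y_s^2\le\widetilde{Y}_s$ gives $Y_s^{\nu''}=(Y_s^2)^{\nu''/2}\le\widetilde{Y}_s^{\nu''/2}$ almost surely, for all $s$. Reading off its coefficients, $\widetilde{Y}$ is a 3/2 process with parameters $(2b\beta,\,2a-\sigma^2-2b\alpha,\,2\sigma)$, so Lemma~\ref{lem:expect_conti:3/2} applies to the exponent $\nu''/2$ as soon as $\nu''/2<(2a-\sigma^2-2b\alpha)/(2\sigma^2)+2$. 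As $\alpha\downarrow0$, with $\beta$ taken correspondingly large so that the admissibility constraints of Lemma~\ref{lem:comparison:invB} continue to hold, this threshold increases to $a/\sigma^2+3/2$; because $\nu''/2<a/\sigma^2+3/2$, a sufficiently small $\alpha$ places $\nu''/2$ strictly inside the admissible range.

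With $\alpha,\beta$ so chosen, Lemma~\ref{lem:expect_conti:3/2} yields that $s\mapsto\EP[\widetilde{Y}_s^{\nu''/2}]$ is continuous, hence bounded on the compact neighbourhood $J$, so the domination gives $\sup_{s\in J}\EP[Y_s^{\nu''}]<\infty$. As $\nu''>\nu$, this uniform bound on a strictly higher moment forces $\{Y_s^{\nu}:s\in J\}$ to be uniformly integrable (de la Vall\'ee Poussin). Pairing uniform integrability with the almost-sure convergence $Y_s^{\nu}\to Y_{s_0}^{\nu}$ then gives $\EP[Y_s^{\nu}]\to\EP[Y_{s_0}^{\nu}]$, the sought continuity at $s_0$; since $s_0$ is arbitrary this holds throughout $[0,\infty)$, the boundary point $s_0=0$ included.

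I expect the main obstacle to be the parameter bookkeeping of the second paragraph: one must check that after the coefficient substitution $(b,a,\sigma)\mapsto(2b\beta,\,2a-\sigma^2-2b\alpha,\,2\sigma)$ the admissible exponent window of the comparison process recovers precisely the stated endpoint $2a/\sigma^2+3$ in the limit $\alpha\downarrow0$, all while respecting the structural requirements on $\alpha,\beta$ in Lemma~\ref{lem:comparison:invB} (which force $\beta\to\infty$ as $\alpha\to0$). The potentially delicate continuity at $s_0=0$ requires no extra work, since Lemma~\ref{lem:expect_conti:3/2} already provides continuity of the 3/2 moment on the closed half-line, so the dominating bound is available uniformly down to $s=0$.
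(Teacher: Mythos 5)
Your proposal is correct and follows essentially the same route as the paper: dominate $Y$ by a 3/2 process via Lemma \ref{lem:comparison:invB} with $\alpha$ small (and $\beta$ large accordingly), import moment continuity from Lemma \ref{lem:expect_conti:3/2}, and pass to the limit using pathwise continuity; your parameter bookkeeping for the substituted coefficients $(2b\beta,\,2a-\sigma^2-2b\alpha,\,2\sigma)$ checks out. In fact your final step is slightly more careful than the paper's: by bounding the strictly higher moment $\EP[Y_s^{\nu''}]$ with $\nu''\in(\nu,2a/\sigma^2+3)$ you obtain genuine uniform integrability of $\{Y_s^{\nu}\}$ via de la Vall\'ee Poussin, whereas the paper bounds only $\sup_t\EP[Y_t^{\nu}]$ and appeals to weak convergence in $L^{\nu}$, which by itself does not yield convergence of the $\nu$-th moments.
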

\begin{proof}
	Fix $\nu\in(1,2a/\sigma^2+3)$ and define a 3/2 process $\widetilde{Y}$ solving \eqref{SDE:3/2:append:invB} with sufficiently small $\alpha$ and sufficiently large $\beta$ such that $\nu<(2a-2b\alpha)/\sigma^2+3$ and $\alpha\beta\leq 1/4$. Then, by Lemma \ref{lem:comparison:invB}, $Y_t^{\nu}\leq\widetilde{Y}_t^{\nu/2},\,\forall t\geq 0$, a.s. Thus,
	\begin{equation}
		\EP[Y_t^{\nu}]\leq\EP[\widetilde{Y}_t^{\nu/2}],\quad \forall t\geq 0\,.
	\end{equation}
	By Lemma \ref{lem:expect_conti:3/2}, a map $s\mapsto\EP[Y_s^{\nu/2}]$ is continuous on $[0,\infty)$. Thus, the ergodicity of the 3/2 process implies that there exists a positive constant \textit{C} such that
	\begin{equation}
		\sup_{t\geq 0}\EP[Y_t^{\nu}]\leq\sup_{t\geq 0}\EP[\widetilde{Y}_t^{\nu/2}]\leq C\,.
	\end{equation}
	Since $\nu>1$ and $Y_t\to Y_{t_0}$ as $t\to t_0$ a.s. for any $t_0\geq 0$, it is known that $Y_t\rightharpoonup Y_{t_0}$ in $L^{\nu}(\mbp)$. Therefore,
	\begin{equation}
		\EP[Y_t^{\nu}]\to\EP[Y_{t_0}^{\nu}]\quad\mbox{as }t\to t_0,\quad \forall t_0\geq 0\,.
	\end{equation}
\end{proof}

\subsection{Proof of Theorem \ref{thm:static_fund:invB}}
\noindent Hereinafter, the symbols to be defined are valid until the end of this section. Since $\theta=a+q\sigma\rho'\mu>-\sigma^2/2$, the SDE
\begin{equation} \label{eqn:invB:u:SDE}
	dZ_t=(b-\theta Z_t)Z_t^2dt+\sigma Z_t^2dB_t,\quad z>0,\quad \mbp_z\mbox{-a.s.}
\end{equation}
has a unique strong solution. It is straightforward to check that all the conditions in \citealp[Theorem 1]{heath2000martingales} hold, and hence the function
\begin{equation}\label{eqn:invB:u:sol}
	u(t,z)=\EP\left[\exp{\left\{\int_{0}^{t}\dfrac{1}{\delta}\left(p\,r-\dfrac{q}{2}\lVert\mu\rVert^2Z_s^2\right)\,ds\right\}} \right]\,,\quad (t,z)\in[0,T)\times(0,\infty),
\end{equation}
is a $C^{1,2}((0,T)\times(0,\infty))$ to the PDE \eqref{eqn:PDE:invB}, which proves (i).

Direct calculation shows that a pair
\begin{equation} \label{eigpair:invB}
	(\lambda,\phi(z)):=\left(-\dfrac{1}{2}\sigma^2\xi^2+b\xi-\dfrac{p\,r}{\delta},z^{-\eta}e^{\xi/z} \right)\,,
\end{equation}
where
\begin{equation} \label{eigpair:invB:const}
	\eta:=\frac{-\left(\sigma^2/2+\theta\right)+\sqrt{\left(\sigma^2/2+\theta\right)^2+q\sigma^2\lVert\mu\rVert^2/\delta}}{\sigma^2},\quad\xi:=\dfrac{b\eta}{\sigma^2(\eta+1)+\theta}\,,
\end{equation}
is an eigenpair of 
\begin{equation}
	\mathscr{L}=\frac{1}{2}\sigma^2z^4\frac{d^2}{dz^2}+\left(bz^2-\theta z^3\right)\frac{d}{dz}+\frac{1}{\delta}\left(p\,r-\frac{q}{2}\lVert\mu\rVert^2z^2\right)\cdot.
\end{equation}
This is indeed a positive recurrent eigenpair (see Remark \ref{rmk:1d:positive_recur} and Proposition \ref{prop:pos_recur}). Thus, the process
\begin{equation} \label{mart:invB:u}
	M_t=\exp{\left\{\lambda t+\int_{0}^{t}\dfrac{1}{\delta}\left(p\,r-\dfrac{q}{2}\lVert\mu\rVert^2Z_s^2\right)\,ds+\xi\left(\frac{1}{Z_t}-\frac{1}{z}\right)\right\}}\left(\frac{Z_t}{z}\right)^{-\eta}
\end{equation}
is a $\mbp_z$-martingale, and by Girsanov's theorem the process
\begin{equation} \label{invB:BM:2}
	\widetilde{B}_s:=B_s+\int_{0}^{s}\sigma(\eta Z_u+\xi)\,du,\quad 0\leq s\leq t
\end{equation}
is a standard Brownian motion under a new measure $\tmbp_z$ on $\mathcal{F}_t$ defined by
\[ \frac{d\tmbp_z}{d\mbp_z}=M_t \]
for each $t\geq 0$. Then, \textit{u} is rewritten in the form
\begin{equation} \label{eqn:invB:u:sol2}
	u(t,z)=e^{-\lambda t}z^{-\eta}e^{\xi/z}\ETP_z\left[Z_t^{\eta}e^{-\xi/Z_t}\right]\,,
\end{equation}
where $\tmbp_z$-dynamics of $(Z_s)_{0\leq s\leq t}$ is given by
\begin{equation} \label{eqn:invB:f:SDE}
	dZ_s=\Bigl((b-\sigma^2\xi)-(\theta+\sigma^2\eta)Z_s\Bigl)Z_s^2ds+\sigma Z_s^2d\widetilde{B}_s,\,0<s\leq t\,.
\end{equation}
This proves (ii).

Since the SDE \eqref{eqn:invB:f:SDE} has a unique strong solution, (iii) is proven via the argument described in Subsection \ref{subsec:verification}. Thus, the dynamic fund separation follows. The static fund separation will be completed by showing that the intertemporal hedging weight $f_z(T-t,z)/f(T-t,z)$ vanishes exponentially fast as terminal time goes to infinity.

Since the invariant probability density $\tilde{\phi}$ of \textit{Z} is
\begin{equation} \label{eqn:invar_density:append:invB}
	\tilde{\phi}(z)=\frac{(2(b-\sigma^2\xi)/\sigma^2)^{2(\theta+\sigma^2\eta)/\sigma^2+3}}{\Gamma(2(\theta+\sigma^2\eta)/\sigma^2+3)}z^{-2\theta/\sigma^2-2\eta-4}e^{-2(b-\sigma^2\xi)/(\sigma^2z)}
\end{equation}
(refer to \citealp[Theorem 5.1.10]{pinsky1995positive}) and clearly $\eta<2(\theta+\sigma^2\eta)/\sigma^2+3,\,z^{\eta}e^{-\xi/z}\in L^1(\tilde{\phi})$. Thus, for each $z\in(0,\infty),\,f(t,z)=\ETP_z\left[Z_t^{\eta}e^{-\xi/Z_t}\right]\to\int_Ex^{\eta}e^{-\xi/x}\tilde{\phi}(x)dx>0$ as $t\to\infty$. Moreover, by Lemma \ref{lem:expect_conti:invB}, the map $t\mapsto f(t,z)$ is continuous on $[0,\infty)$ for each $z\in(0,\infty)$. These show that \textit{f} is bounded and bounded away from zero. Next, we investigate the behavior of $f_z$. To emphasize the initial value we also denote \textit{Z} by $Z^z$ for a moment. By Lemma \ref{lem:diff:invB}, the derivative of a map $z\mapsto (Z_t^z)^{\eta}e^{-\xi/Z_t^z}$ is given by
\begin{equation} \label{eqn:temp:append:invB}
	\frac{\partial}{\partial z}\left\{\left(Z_t^z\right)^{\eta}e^{-\xi/Z_t^z} \right\}=\frac{1}{z^2}\left(\eta\left(Z_t^z\right)^{\eta+1}+\xi\left(Z_t^z\right)^{\eta}\right)\exp{\left\{-\frac{\xi}{Z_t^z}-\left(\sigma^2(1+\eta)+\theta\right)\int_{0}^{t}(Z_s^z)^2ds\right\}}\,.
\end{equation}
Recall that the $\tmbp_z$-dynamics of $Z_t$ is given by \eqref{eqn:invB:f:SDE}. Since
\[\eta+1<\frac{2\theta}{\sigma^2}+2\eta+3, \]
it is easy to see that
\[\ETP_z\left[Z_t^{\eta+1}\right]<\infty \]
for each $t>0$ and $z>0$ (see Subsection \ref{sec:recur}). Moreover, the derivative of $z\mapsto Z_t$ given in \eqref{eqn:temp:append:invB} is positive. Therefore, we can apply Corollary \ref{cor:perturb:flow} to calculate that
\begin{align}
	f_z(t,z)&=\ETP_z\left[\frac{\partial}{\partial z}\left\{Z_t^{\eta}e^{-\xi/Z_t} \right\}\right] \\&=\frac{1}{z^2}\ETP_z\left[\left(\eta Z_t^{\eta+1}+\xi Z_t^{\eta}\right)\exp{\left\{-\frac{\xi}{Z_t}-\left(\sigma^2(1+\eta)+\theta\right)\int_{0}^{t}Z_s^2ds\right\}}\right]\,,
\end{align}
which gives (iv). We then seek a positive recurrent eigenpair of the operator $\mathscr{L}^{\phi}$ (see \eqref{operator:1d:Lphi}). It can be shown similarly as before that the following pair
\begin{equation}
	\left(\hat{\lambda},\hat{\phi}(z)\right)=\left(-\frac{1}{2}\sigma^2\zeta^2+(b-\sigma^2\xi)\zeta,z^{-1}e^{\zeta/z}\right)\,,
\end{equation}
where \[\zeta=\frac{b-\sigma^2\xi}{\theta+\sigma^2(\eta+2)}, \]
is the positive recurrent eigenpair. We also note here that $\mathcal{W}\geq0,\,\mathcal{W}\not\equiv 0$; thus, $\hat{\lambda}$ must be positive, as discussed in Remark \ref{rmk:1d:positive_recur}. Indeed,
\[\hat{\lambda}=(b-\sigma^2\xi)^2\frac{\theta+\sigma^2(\eta+3/2)}{(\theta+\sigma^2(\eta+2))^2}>0. \]
We again employ the change of measure technique; then for a probability measure $\hmbp_z$ on $\mathcal{F}_t$ defined by
\[ \frac{d\hmbp_z}{d\tmbp_z}=\exp{\left\{\hat{\lambda}t-\left( \sigma^2(1+\eta)+\theta \right)\int_0^tZ_s^2ds \right\}}\frac{\hat{\phi}(Z_t)}{\hat{\phi}(z)}\]
for each $t\geq 0$, and for a Brownian motion $(\widehat{B}_s)_{0\leq s\leq t}$ under $\hmbp_z$
\begin{equation} \label{eqn:f_z:append:invB}
	f_z(t,z)=e^{-\hat{\lambda} t}\frac{1}{z^3}e^{\zeta/z}\mathbb{E}_z^{\hmbp}\left[\left(\eta Z_t^{\eta+2}+\xi Z_t^{\eta+1}\right)\exp{\left\{-\frac{\xi+\zeta}{Z_t}\right\}} \right]
\end{equation}
with
\begin{equation}
	dZ_s=\left(b-\sigma^2(\xi+\zeta)-\left(\theta+\sigma^2(\eta+1)\right)Z_s\right)Z_s^2ds+\sigma Z_s^2d\widehat{B}_s,\,0<s\leq t,\quad\hmbp_z\mbox{-a.s.}
\end{equation}
Hence, the expectation term on the right-hand side converges to some positive constant as $t\to\infty$. In addition, using the same argument as in the proof of Lemma \ref{lem:expect_conti:invB}, we can show that $f_z(\cdot,z)$ is continuous on $[0,\infty)$ for each $z\in E$. Thus, $f_z(\cdot,z)$ is also bounded and bounded away from zero for each $z\in E$ and converges to 0 exponentially fast. Consequently, we conclude that there exists positive functions $C_1,C_2:(0,\infty)\to\mbr$ such that
\begin{equation}
	C_1(z)e^{-\hlambda t}\leq\left|\frac{f_z(t,z)}{f(t,z)}\right|\leq C_2(z)e^{-\hat{\lambda} t}\,,
\end{equation}
and hence (v) and the static fund separation hold.

\subsection{Proof of Theorem \ref{thm:perturb:invB}}
\textbf{Proof of (i).} As in Appendix \ref{subsec:3/2:stability}, it suffices to show that $f_{zz}(t,z)\to 0$ as $t\to\infty$ for each $z\in E$. Using Lemma \ref{lem:diff:invB} and Corollary \ref{cor:perturb:flow},
\begin{align}
	f_{zz}(t,z)&=\frac{\partial}{\partial z}\left(e^{-\hat{\lambda} t}\frac{1}{z^3}e^{\zeta/z}\mathbb{E}_z^{\hmbp}\left[\left(\eta Z_t^{\eta+2}+\xi Z_t^{\eta+1}\right)\exp{\left\{-\frac{\xi+\zeta}{Z_t}\right\}} \right]\right)\\
	&=-e^{-\hat{\lambda} t}\left(\frac{3}{z^4}+\frac{\zeta}{z^5}\right)e^{\zeta/z}\mathbb{E}_z^{\hmbp}\left[\left(\eta Z_t^{\eta+2}+\xi Z_t^{\eta+1}\right)\exp{\left\{-\frac{\xi+\zeta}{Z_t}\right\}} \right]\\
	&\quad+e^{-\hat{\lambda} t}\frac{1}{z^5}e^{\zeta/z}\EHP_z\left[\left(\eta(\eta+2)Z_t^{\eta+3}+(2\eta\xi+\eta\zeta+\xi)Z_t^{\eta+2}+\xi(\xi+\zeta)Z_t^{\eta+1}\right)\right. \\
	&\hspace{4cm}\left.\cdot\exp{\scriptstyle\left\{-\frac{\xi+\zeta}{Z_t}-(\theta+\sigma^2(\eta+2))\int_0^tZ_s^2ds\right\}} \right] \\
	&=-e^{-\hat{\lambda} t}\left(\frac{3}{z^4}+\frac{\zeta}{z^5}\right)e^{\zeta/z}\mathbb{E}_z^{\hmbp}\left[\left(\eta Z_t^{\eta+2}+\xi Z_t^{\eta+1}\right)\exp{\left\{-\frac{\xi+\zeta}{Z_t}\right\}} \right]\\
	&\,+e^{-(\hlambda+\bar{\lambda}) t}\frac{1}{z^4}e^{\zeta/z}\EBP_z\left[\left(\eta(\eta+2)Z_t^{\eta+4}+(2\eta\xi+\eta\zeta+\xi)Z_t^{\eta+3}+\xi(\xi+\zeta)Z_t^{\eta+2}\right)e^{-\frac{\xi+\zeta+\varsigma}{Z_t}} \right] \\ \label{temp:initial:append:invB}
\end{align}
where 
\begin{equation}
	\bar{\lambda}=-\frac{1}{2}\sigma^2\varsigma^2+(b-\sigma^2(\xi+\zeta))\varsigma,\quad \varsigma=\frac{b-\sigma^2(\xi+\zeta)}{\theta+\sigma^2(\eta+4)}\,,
\end{equation}
and $(Z,\bmbp_z)$ satisfies
\begin{equation}
	dZ_s=\Bigl(b-\sigma^2(\xi+\zeta+\varsigma)-(\theta+\sigma^2(\eta+2))Z_s\Bigl)Z_s^2ds+\sigma Z_s^{2}d\bar{B}_s,\,0<s\leq t,\quad Z_0=z,\quad \bmbp_z\mbox{-a.s.}
\end{equation}
where the probability measure $\bmbp_z$ on $\mathcal{F}_t$ is self-explanatory. As in the proof of Theorem \ref{thm:static_fund:invB}, we can show that the expectation terms in \eqref{temp:initial:append:invB} converge to some positive numbers for each $z\in E$. Thus, we have
\begin{equation}
	\left|\frac{\partial}{\partial z}\frac{f_z(t,z)}{f(t,z)}\right|\leq C(z)e^{-\hlambda t}
\end{equation}
for some positive function $C:(0,\infty)\to\mbr$. Finally, together with the result in the previous subsection, we conclude that there exists a positive function $C:(0,\infty)\to\mbr$ such that
\begin{align}
	\left|\frac{\partial}{\partial z}\left(\hat{\pi}_T(t,z)-\hat{\pi}_{\infty}(z)\right)\right|&=\left|\frac{(\Sigma')^{-1}\rho\,\sigma\delta}{1-p}\right|\left|\frac{f_z(T-t,z)}{f(T-t,z)}+z\frac{\partial}{\partial z}\frac{f_z(T-t,z)}{f(T-t,z)}\right| \\
	&\leq\left|\frac{(\Sigma')^{-1}\rho\,\sigma\delta}{1-p}\right|\left(\left|\frac{f_z(T-t,z)}{f(T-t,z)}\right|+z\left|\frac{\partial}{\partial z}\frac{f_z(T-t,z)}{f(T-t,z)}\right|\right) \\
	&\leq C(z)e^{-\hlambda (T-t)}\,,
\end{align}
for any $T>0,\,0\leq t<T$.\vspace{1cm}

\noindent\textbf{Proof of (ii).} As discussed earlier, we have
\begin{equation}
	\frac{\partial}{\partial b}\!\left(\frac{f_z(t,z)}{f(t,z)}\right)=\frac{f_z(t,z)}{f(t,z)}\left(\frac{\partial}{\partial b}\!\log{f_z(t,z)}-\frac{\partial}{\partial b}\!\log{f(t,z)}\right)\,,
\end{equation}
and hence the large time behaviors of the logarithmic derivatives of \textit{f} and $f_z$ with respect to \textit{b} should be of our interest. Consider a family of pairs $(Z\supe,\tmbp_z\supe)_{z\in E,\epsilon\in I}$ satisfying
\begin{equation} \label{eqn:drift_b_Yeps:append:invB}
	dZ_t^{\epsilon}=\left(b+\epsilon-\sigma^2\xi\sube-(\theta+\sigma^2\eta)Z_t^{\epsilon}\right)(Z_t\supe)^2 dt+\sigma(Z_t^{\epsilon})^{2}d\tB_t\supe,\quad Z_0^{\epsilon}=z\in E,\quad\tmbp_z\supe\mbox{ - a.s.}
\end{equation}
where $\xi\sube=(b+\epsilon)\eta/(\sigma^2(\eta+1)+\theta)$, and $\tB\supe$ is a Brownian motion under $\tmbp_z\supe$ for each $\epsilon\in I$. Then clearly
\begin{equation}
	\frac{\partial}{\partial b}f(t,z)=\de\ETPe_z[(Z_t\supe)^{\eta}e^{-\xi\sube/Z_t\supe}]\,.
\end{equation}
Following the notation in Proposition \ref{prop:perturb:drift},
\begin{equation}
	\left(\sigma^{-1}\de b\sube\right)(x)\equiv\sigma^{-1}-\sigma\frac{\partial\xi}{\partial b}\,.
\end{equation}
Thus, conditions \ref{cond1:thm:drift} and \ref{cond2:thm:drift} trivially hold. Since $\eta$ does not depend on \textit{b}, it suffices to show that $\ETP_z[Z_t^{2\eta}]<\infty$ for the condition \ref{cond3:thm:drift} to hold. This follows directly from observing the invariant density \eqref{eqn:invar_density:append:invB}. Therefore, we can apply Proposition \ref{prop:perturb:drift} to deduce
\begin{align}
	\left|\de\ETPe_z[(Z_t\supe)^{\eta}e^{-\xi\sube/Z_t\supe}]\right|&=\left|\frac{\partial\xi}{\partial b}\ETP_z\left[Z_t^{\eta-1}e^{-\xi/Z_t}\right]+\ETP_z\left[Z_t^{\eta}e^{-\xi/Z_t}\left(\sigma^{-1}-\sigma\frac{\partial\xi}{\partial b}\right)B_t\right]\right| \\
	&\leq\left|\frac{\partial\xi}{\partial b}\ETP_z\left[Z_t^{\eta-1}e^{-\xi/Z_t}\right]\right|+\left|\ETP_z\left[Z_t^{2\eta}e^{-2\xi/Z_t}\right]^{1/2}\right|\left|\left(\sigma^{-1}-\sigma\frac{\partial\xi}{\partial b}\right)\sqrt{t}\right| \\
	&\leq C(z\,;\,b,a,\sigma)(1+\sqrt{t}), \hspace{4.5cm}\mbox{(Ergodic property)}
\end{align}
for some positive function $C:(0,\infty)\to\mbr$ depending on \textit{b,a}, and $\sigma$. Since the function \textit{f} is bounded and bounded away from zero, we have
\begin{equation}
	\left|\frac{\partial}{\partial b}\!\log{f(t,z)}\right|=\left|\frac{\frac{\partial}{\partial b}f(t,z)}{f(t,z)}\right|\leq C(z\,;\,b,a,\sigma)(1+\sqrt{t})\,,
\end{equation}
for some positive function $C:(0,\infty)\to\mbr$ depending on \textit{b,a}, and $\sigma$. We now take the logarithmic derivative of \eqref{eqn:f_z:append:invB} with respect to \textit{b}. Here, we also use the notation ``$d/db$'' as a differentiation with respect to \textit{b} (see Appendix \ref{subsec:3/2:stability}). Thus, we write
\begin{equation}
	\frac{d\zeta}{db}=\frac{\partial\zeta}{\partial b}+\frac{\partial\xi}{\partial b}\frac{\partial\zeta}{\partial\xi}\,.
\end{equation}
Similarly,
\begin{equation}
	\frac{d\hlambda}{db}=\frac{\partial\hlambda}{\partial b}+\frac{\partial\xi}{\partial b}\frac{\partial\hlambda}{\partial\xi}+\frac{d\zeta}{db}\frac{\partial\hlambda}{\partial\zeta}\,.
\end{equation}
Then, the logarithmic derivative of \eqref{eqn:f_z:append:invB} is written as
\begin{equation}
	\frac{\frac{\partial}{\partial b}f_z(t,z)}{f_z(t,z)}=\frac{\partial}{\partial b}\log{f_z(t,z)}=-\frac{d\hlambda}{db}t+\frac{d\xi}{db}\frac{1}{z}+\de\!\log{\EHP_z\left[\left(\eta Z_t^{\eta+2}+\xi Z_t^{\eta+1}\right)\exp{\left\{-\frac{\xi+\zeta}{Z_t}\right\}} \right]}\,.
\end{equation}
We then proceed with the same calculations as before:
\begin{equation}
	\de\!\EHP_z\left[\left(\eta Z_t^{\eta+2}+\xi Z_t^{\eta+1}\right)\exp{\left\{-\frac{\xi+\zeta}{Z_t}\right\}} \right]\leq C(z\,;\,b,a,\sigma)(1+\sqrt{t})\,,
\end{equation}
for some positive function $C:(0,\infty)\to\mbr$ depending on \textit{b,a}, and $\sigma$. Finally, we conclude that there exists a positive function $C:(0,\infty)\to\mbr$ depending on \textit{b,a}, and $\sigma$ such that
\begin{align}
	\left|\frac{\partial}{\partial b}\hat{\pi}_T(t,z)-\frac{\partial}{\partial b}\hat{\pi}_{\infty}(z)\right|&=\left|\frac{(\Sigma')^{-1}\rho\,\sigma z \delta}{1-p}\right|\left|\frac{\partial}{\partial b}\frac{f_z(T-t,z)}{f(T-t,z)}\right|\\
	&\leq C(z\,;\,b,a,\sigma)(1+T)e^{-\hlambda (T-t)}\,,
\end{align}
for any $T>0,\,0\leq t<T$.\vspace{1cm}

\noindent\textbf{Proof of (iii).} As before, the large time behaviors of the logarithmic derivatives of \textit{f} and $f_z$ with respect to \textit{a} are of our interest. Consider a family of pairs $(Z\supe,\tmbp_z\supe)_{z\in E,\epsilon\in I}$ satisfying
\begin{equation}
	dZ_t^{\epsilon}=\left(b-\sigma^2\xi\sube-(\theta\sube+\sigma^2\eta\sube)Z_t^{\epsilon}\right)(Z_t\supe)^2 dt+\sigma(Z_t^{\epsilon})^{2}d\tB_t\supe,\quad Z_0^{\epsilon}=z\in E,\quad\tmbp_z\supe\mbox{ - a.s.}
\end{equation}
where $\tB\supe$ is a Brownian motion under $\tmbp_z\supe$,
\begin{equation}
	\theta\sube:=a+\epsilon+q\sigma\rho'\mu=\theta+\epsilon,\quad\eta\sube:=-\left(\frac{1}{2}+\frac{\theta\sube}{\sigma^2}\right)+\sqrt{\left(\frac{1}{2}+\frac{\theta\sube}{\sigma^2}\right)^2\!+\frac{q\lVert\mu\rVert^2}{\delta\sigma^2}},\quad\xi\sube=\frac{b\eta\sube}{\sigma^2(\eta\sube+1)+\theta\sube}
\end{equation}
for each $\epsilon\in I$. Then,
\begin{equation}
	\frac{\partial}{\partial a}f(t,z)=\de\ETPe_z[(Z_t\supe)^{\eta\sube}e^{-\xi\sube/Z_t\supe}]\,.
\end{equation}
We verify that the conditions in Proposition \ref{prop:perturb:drift} hold. First, we take \textit{I} to be sufficiently small so that the family
\begin{equation}
	\left(\frac{\partial}{\partial \epsilon}\left(Z_t^{\eta\sube}e^{-\xi\sube/Z_t\supe}\right)\right)_{\epsilon\in I}\!=\left(\frac{\partial\eta\sube}{\partial\epsilon}Z_t^{\eta\sube}e^{-\xi\sube/Z_t\supe}\log{Z_t}-\frac{\partial\xi\sube}{\partial\epsilon}Z_t^{\eta\sube-1}e^{-\xi\sube/Z_t\supe}\right)_{\epsilon\in I}
\end{equation}
is uniformly integrable. Then, we verify that \ref{cond1:thm:drift} - \ref{cond3:thm:drift} hold with
\begin{equation}
	g(z)=C(1+z)\quad\mbox{and}\quad\psi(z)=z^{\eta+1}\,,
\end{equation} 
where \textit{C} is a positive constant such that $C>\max{\left\{|\sigma\cdot d\xi/da|,|\sigma^{-1}+\sigma\cdot\partial\eta/\partial\theta|\right\}}$. \\

\noindent\textbf{\ref{cond1:thm:drift}}: It is equivalent to
\begin{equation}
	\ETP_z\left[\exp{\left\{\epsilon_0\int_0^tZ_s^2ds\right\}}\right]<\infty\,,
\end{equation}
for sufficiently small $\epsilon_0>0$. By Lemma \ref{lem:comparison:invB}, there exists a 3/2 process $\widetilde{Z}$ such that $Z_t^2\leq\widetilde{Z}_t,\,\forall t\geq 0$, a.s. Thus, by Lemma \ref{lem:exp_finite:3/2} we have, for some sufficiently small $\epsilon_0>0$,
\begin{equation}
	\ETP_z\left[\exp{\left\{\epsilon_0\int_0^tZ_s^2ds\right\}}\right]\leq\ETP_z\left[\exp{\left\{\epsilon_0\int_0^t\widetilde{Z}_s\,ds\right\}}\right]<\infty\,.
\end{equation}

\noindent\textbf{\ref{cond2:thm:drift}} : Similarly, by Lemma \ref{lem:expect_conti:3/2} and Lemma \ref{lem:comparison:invB}, we have
\begin{equation}
	\ETP_z\left[\int_0^tZ_s^{2+\epsilon_1}ds\right]<\infty\,,
\end{equation}
for sufficiently small $\epsilon_1>0$.

\noindent\textbf{\ref{cond3:thm:drift}}: It is straightforward to see that $z^{2\eta+2}\in L^1(\tilde{\phi})$. Therefore,
\begin{equation}
	\ETP_z\left[Z_t^{2(\eta+1)}\right]<\infty\,.
\end{equation}
Thus, we can apply Proposition \ref{prop:perturb:drift}. By H\"{o}lder's inequality and the ergodic property, there exists a positive function $C:(0,\infty)\to\mbr$ depending on $b,a,\sigma$ such that
\begin{align}
	\de\!\ETPe_z\left[(Z_t\supe)^{\eta\sube}e^{-\xi\sube/Z_t\supe}\right]&=\ETP_z\left[\left(\frac{\partial\eta}{\partial\theta}Z_t^{\eta}-\frac{d\xi}{da}Z_t^{\eta-1}\right)e^{-\xi/Z_t}\right] \\
	&\quad -\ETP_z\left[Z_t^{\eta}e^{-\xi/Z_t}\left(\sigma\frac{d\xi}{da}B_t-\left(\frac{1}{\sigma}+\sigma\frac{\partial\eta}{\partial\theta}\right)\int_0^tZ_s\,d\tB_s\right)\right]
	\\&\leq C(z\,;\,b,a,\sigma)(1+\sqrt{t})e^{-\hlambda t}\,.
\end{align}
Likewise, the logarithmic derivative of $f_z$ with respect to \textit{a}
\begin{equation}
	\frac{\partial}{\partial a}\log{f_z(t,z)}\,,
\end{equation}
can be dealt with in the same manner. Therefore, for some positive function $C:(0,\infty)\to\mbr$ depending on $b,a,\sigma$,
\begin{align}
	\left|\frac{\partial}{\partial a}\hat{\pi}_T(t,z)-\frac{\partial}{\partial a}\hat{\pi}_{\infty}(z)\right|&=\left|\frac{(\Sigma')^{-1}\rho\,\sigma z \delta}{1-p}\right|\left|\frac{\partial}{\partial a}\frac{f_z(T-t,z)}{f(T-t,z)}\right|\\
	&\leq C(z\,;\,b,a,\sigma)(1+T)e^{-\hlambda (T-t)}\,,
\end{align}
for any $T>0,\,0\leq t<T$.\vspace{1cm}

\noindent\textbf{Proof of (iv).} As before, we calculate the logarithmic derivatives of \textit{f} and $f_z$ with respect to $\sigma$. Consider a family of pairs $(Z\supe,\tmbp_z\supe)_{z\in E,\epsilon\in I}$ satisfying
\begin{equation} \label{eqn:diffusion_Yeps:append:invB}
	dZ_t^{\epsilon}=\left(b-(\sigma+\epsilon)^2-(\theta\sube+(\sigma+\epsilon)^2\eta\sube) Z_t^{\epsilon}\right)(Z_t\supe)^2 dt+(\sigma+\epsilon)(Z_t^{\epsilon})^{2}d\tB_t\supe,\quad Z_0^{\epsilon}=z\in E,\quad\tmbp_z\supe\mbox{ - a.s.}
\end{equation}
where $\tB\supe$ is a Brownian motion under $\tmbp_z\supe,\,\theta\sube:=a+q(\sigma+\epsilon)\rho'\mu$,
\begin{equation}
	\eta\sube:=-\left(\frac{1}{2}+\frac{\theta\sube}{(\sigma+\epsilon)^2}\right)+\sqrt{\left(\frac{1}{2}+\frac{\theta\sube}{(\sigma+\epsilon)^2}\right)^2\!+\frac{q\lVert\mu\rVert^2}{\delta(\sigma+\epsilon)^2}},\quad\mbox{and}\quad\xi\sube:=\frac{b\eta\sube}{(\sigma+\epsilon)^2(\eta\sube+1)+\theta\sube}\,,
\end{equation}
for each $\epsilon\in I$. Let us define
\begin{equation}
	\widetilde{Z}_t\supe:=\frac{1}{(\sigma+\epsilon)Z_t\supe},\,\tilde{z}\supe:=\frac{1}{(\sigma+\epsilon)z}\quad t\geq 0,\,\epsilon\in I\,.
\end{equation}
Then, $\widetilde{Z}\supe$ satisfies
\begin{equation}
	d\widetilde{Z}_t\supe=\left(\left(\frac{\theta\sube}{(\sigma+\epsilon)^2}+\eta\sube+1\right)\frac{1}{\widetilde{Z}_t\supe}-\frac{b}{\sigma+\epsilon}+(\sigma+\epsilon)\xi\sube\right)\,dt-d\tB_t,\quad \widetilde{Z}_0\supe=\tilde{z}\supe.
\end{equation}
Since the process $1/\widetilde{Z}\supe$ is an inverse Bessel process, the demonstration in the proof of (iii) directly implies that the conditions \ref{cond1:thm:drift} - \ref{cond3:thm:drift} in Proposition \ref{prop:perturb:drift} hold. Thus, Corollary \ref{cor:perturb:united} provides an explicit form of the derivative
\begin{align}
	\de\!\ETPe_z\!\left[(Z_t\supe)^{\eta\sube}e^{-\xi\sube/Z_t\supe}\right]&=-\left(\eta\sigma^{-1}+\frac{d\eta}{d\sigma}\log{\sigma}\right)\ETP_z[Z_t^{\eta}e^{-\xi/Z_t}] \\
	&\quad+\ETP_z\!\left[Z_t^{\eta}e^{-\xi/ Z_t}\left(\int_0^t{\scriptstyle\left(\frac{2\theta}{\sigma^2}-\frac{1}{\sigma}\frac{\partial\theta}{\partial\sigma}-\sigma\frac{d\eta}{d\sigma}\right)}Z_s\,d\tB_s-{\scriptstyle\left(\frac{b}{\sigma^2}+\xi+\sigma\frac{d\xi}{d\sigma}\right)}\tB_t\right)\right]\\
	&\quad+\ETP_z\left[\frac{d\eta}{d\sigma}Z_t^{\eta}\log{\sigma Z_s}e^{-\xi/Z_t}-\left(\frac{\xi}{\sigma}+\frac{d\xi}{d\sigma}\right)Z_t^{\eta+1}e^{-\xi/Z_t}\right] \\
	&\quad-\frac{1}{\sigma z}\ETP_z\left[\left(\eta Z_t^{\eta+1}+\xi Z_t^{\eta}\right)e^{-\xi/Z_t}\left(\frac{Z_t}{z}\right)^{-2}\frac{\partial Z_t}{\partial z}\right]\,.
\end{align}
Each term on the right-hand side can now be handled as before. Likewise, the term
\begin{equation}
	\frac{\partial}{\partial\sigma}\log{f_z(t,z)}
\end{equation}
can be dealt with in the same manner. Therefore, we conclude that
\begin{align}
	\left|\frac{\partial}{\partial\sigma} \hat{\pi}_T(t,z)-\hat{\pi}_{\infty}(t,z)\right|&=\left|\frac{(\Sigma')^{-1}\rho z\delta}{1-p}\right|\left|\frac{f_z(T-t,z)}{f(T-t,z)}+\sigma\frac{\partial}{\partial\sigma}\frac{f_z(T-t,z)}{f(T-t,z)}\right| \\
	&\leq\left|\frac{(\Sigma')^{-1}\rho z\delta}{1-p}\right|\left(\left|\frac{f_z(T-t,z)}{f(T-t,z)}\right|+\sigma\left|\frac{\partial}{\partial\sigma}\frac{f_z(T-t,z)}{f(T-t,z)}\right|\right) \\
	&\leq C(z\,;\,b,a,\sigma)(1+T)e^{-\hlambda(T-t)}
\end{align}
for any $T>0,\,0\leq t<T$ and some positive function $C:(0,\infty)\to\mbr$ depending on \textit{b,a} and $\sigma$.	

\section{Filtered Ornstein-Uhlenbeck state process}\label{sec:FOU:append}
Consider the Ornstein-Uhlenbeck process satisfying
\begin{equation} \label{SDE:append:FOU}
	dY_t=(b-aY_t)dt+\sigma\,dB_t,\quad y\in\mbr\,,
\end{equation}
where $b,a,\sigma>0$. It is well-known that the SDE \eqref{SDE:append:FOU} has a unique strong solution and that the distribution of $Y_t$ follows the normal distribution with mean $e^{-at}y+\dfrac{b}{a}(1-e^{-at})$ and variance $\dfrac{\sigma^2}{2a}(1-e^{-2at})$.

\begin{lemma} \label{lem:diff:FOU}
	Let $Y^y$ be the solution to \eqref{SDE:append:FOU} with $Y_0=y$. Then for each $t\geq 0$, a function $y\mapsto Y_t^y$ is differentiable a.s., and
	\begin{equation} \label{eqn:deriv:FOU}
		\frac{\partial Y_t^y}{\partial y}=e^{-at}\,.
	\end{equation}
\end{lemma}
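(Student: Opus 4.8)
The plan is to apply Proposition \ref{prop:perturb:flow} directly, exactly as was done in the proofs of Lemma \ref{lem:diff:3/2} and Lemma \ref{lem:diff:invB}. The Ornstein--Uhlenbeck SDE \eqref{SDE:append:FOU} has drift $b(y)=b-ay$ and diffusion coefficient $\sigma(y)=\sigma$, both of which are affine (hence smooth, with locally Lipschitz first derivatives) on the state space $E=\mbr$, and the SDE is known to admit a unique strong solution. Consequently all hypotheses of Proposition \ref{prop:perturb:flow} are met, so the flow $y\mapsto Y_t^y$ is almost surely continuously differentiable and its derivative is given by the representation \eqref{deriv:perturb:flow}.

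The next step is simply to evaluate that representation. I would compute the derivatives of the coefficients, namely $b'(y)=-a$ and $\sigma'(y)=0$ for every $y\in\mbr$, and substitute them into \eqref{deriv:perturb:flow}. Because the diffusion coefficient is constant, the Itô correction term $-\tfrac12\sigma'(Y_s^y)^2$ and the stochastic integral $\int_0^t\sigma'(Y_s^y)\,dB_s$ both vanish identically, leaving only the deterministic drift contribution:
$$\frac{\partial Y_t^y}{\partial y}=\exp\left\{\int_0^t b'(Y_s^y)-\frac{1}{2}\sigma'(Y_s^y)^2\,ds+\int_0^t\sigma'(Y_s^y)\,dB_s\right\}=\exp\left\{\int_0^t(-a)\,ds\right\}=e^{-at},$$
which is precisely \eqref{eqn:deriv:FOU}.

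There is essentially no obstacle in this argument; the linear structure of the Ornstein--Uhlenbeck dynamics makes it the most elementary of the three differentiation lemmas. If one prefers a self-contained verification independent of Proposition \ref{prop:perturb:flow}, the result also follows by differentiating the explicit solution $Y_t^y=e^{-at}y+\tfrac{b}{a}(1-e^{-at})+\sigma\int_0^t e^{-a(t-s)}\,dB_s$ of the linear SDE with respect to $y$: only the first summand depends on the initial value, so the derivative is again $e^{-at}$, confirming the formula.
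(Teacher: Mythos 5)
Your proof is correct, but it takes a different route from the paper's. You invoke Proposition \ref{prop:perturb:flow} and evaluate the general flow-derivative formula \eqref{deriv:perturb:flow} with $b'(y)=-a$ and $\sigma'(y)=0$, which is exactly the template used for Lemmas \ref{lem:diff:3/2} and \ref{lem:diff:invB}; this works and has the virtue of treating all three state-process models uniformly. The paper instead exploits the linearity of the Ornstein--Uhlenbeck dynamics directly: for $\epsilon\neq 0$ the difference process satisfies $d(Y_t^{y+\epsilon}-Y_t^y)=-a(Y_t^{y+\epsilon}-Y_t^y)\,dt$ (the additive noise cancels), so $Y_t^{y+\epsilon}-Y_t^y=\epsilon e^{-at}$ exactly, and the derivative follows without any limiting argument or appeal to the general differentiability-of-flows theorem. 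The paper's argument is shorter, avoids the hypotheses of Proposition \ref{prop:perturb:flow} entirely, and yields an exact finite-difference identity rather than just the derivative; your second, ``self-contained'' remark via the explicit solution is essentially equivalent to the paper's reasoning. Either way the conclusion \eqref{eqn:deriv:FOU} is established.
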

\begin{proof}
	The proof is very simple. For $\epsilon\neq 0$, we have $d(Y_t^{y+\epsilon}-Y_t^y)=-a(Y_t^{t+\epsilon}-Y_t^y)dt$. Thus, $Y_t^{y+\epsilon}-Y_t^y=\epsilon e^{-at}$.
\end{proof}

\begin{lemma} \label{lem:exp_finite:FOU}
	Let \textit{Y} be an Ornstein-Uhlenbeck process solving the SDE \eqref{SDE:append:FOU} with $b,a,\sigma>0$. Then for any $\epsilon_0$ satisfying $0<\epsilon_0<\frac{a^2}{4\sigma^2}$,
	\begin{equation}
		\EP\left[e^{\epsilon_0\int_0^tY_u^2\,du}\right]<\infty,\quad\forall t>0\,.
	\end{equation} 
\end{lemma}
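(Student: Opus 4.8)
The plan is to imitate the proof of Lemma \ref{lem:exp_finite:3/2}: exhibit an explicit positive local martingale whose terminal value equals $\exp\{\epsilon_0\int_0^tY_u^2\,du\}$, and then read off the bound from the supermartingale inequality. Fix $t>0$ and write $Y_0=y$. Since the integrand is quadratic and the process is Gaussian, the natural candidate is the exponential of a quadratic form in $Y_s$ with time-dependent coefficients,
\begin{equation}
	M_s=\exp\left\{\phi(s)Y_s^2+\psi(s)Y_s+\chi(s)+\epsilon_0\int_0^sY_u^2\,du\right\},\quad 0\le s\le t,
\end{equation}
where $\phi,\psi,\chi$ are deterministic functions to be determined, subject to the terminal conditions $\phi(t)=\psi(t)=\chi(t)=0$, which force $M_t=\exp\{\epsilon_0\int_0^tY_u^2\,du\}$.

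First I would apply It\^o's formula, using $d(Y_s^2)=(2bY_s-2aY_s^2+\sigma^2)\,ds+2\sigma Y_s\,dB_s$, and collect the drift of $M$ by powers of $Y_s$. Requiring each coefficient to vanish, so that $M$ becomes a local martingale, yields a decoupled system: a Riccati equation
\begin{equation}
	\phi'=2a\phi-2\sigma^2\phi^2-\epsilon_0,
\end{equation}
together with the linear equations $\psi'=(a-2\sigma^2\phi)\psi-2b\phi$ and $\chi'=-\sigma^2\phi-b\psi-\tfrac12\sigma^2\psi^2$. Solving these backward from $s=t$ with zero terminal data determines $\phi,\psi,\chi$ on $[0,t]$. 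Because the drift has been cancelled by construction, the continuous positive process $M$ is a local martingale, hence a supermartingale, so that $\EP[M_t]\le M_0=\exp\{\phi(0)y^2+\psi(0)y+\chi(0)\}$.

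The crux, and the main obstacle, is to show that $\phi$ (and hence $\psi,\chi$) stays finite on all of $[0,t]$; this is the only place the hypothesis on $\epsilon_0$ is used. Reversing time via $p(\tau):=\phi(t-\tau)$ turns the Riccati equation into $p'=2\sigma^2p^2-2ap+\epsilon_0$ with $p(0)=0$. The right-hand side is an upward parabola in $p$ whose roots $\frac{a\pm\sqrt{a^2-2\sigma^2\epsilon_0}}{2\sigma^2}$ are real and positive precisely when $\epsilon_0<a^2/(2\sigma^2)$, which is guaranteed by the stronger hypothesis $\epsilon_0<a^2/(4\sigma^2)$. Since $p(0)=0$ lies below the smaller root where the right-hand side is positive, a phase-line (comparison) argument shows that $p$ increases monotonically toward that root and remains bounded for all $\tau\ge0$, so $\phi$ does not explode; the linear equations for $\psi$ and $\chi$ then have bounded coefficients and bounded solutions. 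Consequently $M_0$ is a finite constant, and
\begin{equation}
	\EP\left[\exp\left\{\epsilon_0\int_0^tY_u^2\,du\right\}\right]=\EP[M_t]\le M_0<\infty,
\end{equation}
which is the assertion.

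I would also remark that this argument in fact yields finiteness for every $\epsilon_0<a^2/(2\sigma^2)$, so the stated threshold $a^2/(4\sigma^2)$ is a convenient conservative choice. One could instead try the H\"older-plus-constant-coefficient device of Lemma \ref{lem:exp_finite:3/2}, but the time-dependent construction above is cleaner here: a purely constant-coefficient exponential $\exp\{cY_s^2\}$ cannot annihilate the linear and constant parts of the drift simultaneously, so the extra freedom in $\psi(s)$ and $\chi(s)$ is genuinely needed.
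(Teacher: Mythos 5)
Your proof is correct, and it takes a genuinely different route from the paper's. The paper also builds an exponential local martingale, but with \emph{constant} coefficients: it sets $\eta=\frac{-a+\sqrt{a^2-4\sigma^2\epsilon_0}}{2\sigma^2}$, $\xi=\frac{2b\eta}{a+2\sigma^2\eta}$ and shows that $\widetilde{Y}_s=\exp\{\lambda s-\eta Y_s^2-\xi Y_s+2\epsilon_0\int_0^sY_u^2\,du\}$ is a positive local martingale (note the factor $2\epsilon_0$), then recovers $\EP[e^{\epsilon_0\int_0^tY_u^2\,du}]$ from the supermartingale bound via H\"older's inequality, splitting off $e^{\frac12(\eta Y_t^2+\xi Y_t)}$ and using the Gaussian law of $Y_t$. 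The doubling of $\epsilon_0$ needed for that Cauchy--Schwarz step is exactly why the constant-coefficient quadratic $\sigma^2\eta^2+a\eta+\epsilon_0=0$ forces the threshold $a^2/(4\sigma^2)$. Your time-dependent Riccati construction avoids the H\"older step entirely, and your observation that it yields finiteness for all $\epsilon_0<a^2/(2\sigma^2)$ is correct; the non-explosion analysis of $p'=2\sigma^2p^2-2ap+\epsilon_0$, $p(0)=0$, is the only place where care is needed and your phase-line argument handles it properly. One small caveat on your closing remark: a constant-coefficient ansatz \emph{can} annihilate the linear and constant drift terms, provided one includes a constant linear term $-\xi Y_s$ and a term linear in time $\lambda s$ (which is precisely what the paper does); what the time-dependence of $\phi,\psi,\chi$ buys is not feasibility but the exact terminal normalization $M_t=e^{\epsilon_0\int_0^tY_u^2\,du}$, which is what lets you dispense with H\"older and reach the sharp constant.
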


\begin{proof}
	Define
	\begin{equation}
		\eta=\frac{-a+\sqrt{a^2-4\sigma^2\epsilon_0}}{2\sigma^2},\quad\xi=\frac{2b\eta}{a+2\sigma^2\eta},\quad\lambda=\sigma^2\eta-\frac{1}{2}\sigma^2\eta+b\xi.
		\end{equation}
	Then, it can be directly shown by It\^{o} formula that the process
	\begin{equation}
		\widetilde{Y}_s:=e^{\lambda s-\eta Y_s^2-\xi Y_s+2\epsilon_0 \int_0^sY_u^2\,du},\quad 0\leq s\leq t\,,
	\end{equation}
	is a local martingale, and hence a supermartingale. Thus,
	\begin{equation} \label{temp:lem_proof:FOU}
		\EP\left[e^{\lambda t-\eta Y_t^2-\xi Y_t+2\epsilon_0 \int_0^tY_u^2\,du}\right]=\EP\left[\widetilde{Y}_t\right]\leq\widetilde{Y}_0=e^{-\eta y^2-\xi y}<\infty\,.
	\end{equation}
	We also note that \begin{equation}
		\EP\left[e^{\eta Y_t^2+\xi Y_t}\right]<\infty,
	\end{equation}
	for $\eta<a/\sigma^2$ and $Y_t\sim\mathcal{N}\bigl(e^{-at}y+b(1-e^{-at})/a,\sigma^2(1-e^{-2at})/(2a)\bigl)$. By H\"{o}lder's inequality,
	\begin{align}
		\EP\left[e^{\epsilon_0\int_0^tY_u^2\,du}\right]&=e^{-\frac{\lambda}{2}t}\EP\left[e^{\frac{1}{2}(\eta Y_t^2+\xi Y_t)}e^{\frac{1}{2}(\lambda t-\eta Y_t^2-\xi Y_t+2\epsilon_0 \int_0^tY_u^2\,du)}\right] \\
		&\leq e^{-\frac{\lambda}{2}t}\EP\left[e^{\eta Y_t^2+\xi Y_t}\right]^\frac{1}{2}\EP\left[e^{\lambda t-\eta Y_t^2-\xi Y_t+2\epsilon_0 \int_0^tY_u^2\,du}\right]^\frac{1}{2}\\ &<\infty.
	\end{align}
\end{proof}

\subsection{Proof of Theorem \ref{thm:static_fund:FOU}}
\noindent Let \textit{Z} be the solution to the SDE
\begin{equation}
	dZ_t=(b-(a+q\theta'\mu) Z_t)dt+\norm{\theta}dB_t,\quad Z_0=z\in\mbr.
\end{equation}
By Feynman-Kac Theorem(e.g. \citealp[Chapter 5, Theorem 7.6]{karatzas1991brownian}), the function
\begin{equation}\label{eqn:FOU:u:sol}
	u(t,z)=\EP\left[\exp{\left\{\int_{0}^{t}\dfrac{1}{\delta}\left(p\,r-\dfrac{q}{2}\lVert\mu\rVert^2Z_s^2\right)\,ds\right\}} \right]\,,\quad (t,z)\in[0,T)\times(0,\infty),
\end{equation}
is a $C^{1,2}((0,T)\times(0,\infty))$ to the PDE \eqref{eqn:PDE:FOU}, which proves (i).

Direct calculation shows that a pair
\begin{equation} \label{eigpair:FOU}
	(\lambda,\phi(z)):=\left(\left(\eta-\frac{1}{2}\xi^2\right)\norm{\theta}^2+b\,\xi-\dfrac{p\,r}{\delta},e^{-\eta z^2-\xi z}\right)\,,
\end{equation}
where
\begin{equation} \label{eigpair:FOU:const}
	\eta:=\frac{-(a+q\theta'\mu)+\sqrt{(a+q\theta'\mu)^2+\dfrac{q}{\delta}\norm{\theta}^2\norm{\mu}^2}}{2\norm{\theta}^2},\quad\xi:=\dfrac{2b\,\eta}{a+q\theta'\mu+2\norm{\theta}^2\eta}\,,
\end{equation}
is an eigenpair of 
\begin{equation}
	\mathscr{L}=\frac{1}{2}\norm{\theta}^2\frac{d^2}{dz^2}+\left(b-(a+q\theta'\mu)z\right)\frac{d}{dz}+\frac{1}{\delta}\left(p\,r-\frac{q}{2}\lVert\mu\rVert^2z^2\right)\cdot.
\end{equation}
It is easy to see that a process
\begin{equation} \label{mart:FOU:u}
	M_t=\exp{\left\{\lambda t+\int_{0}^{t}\dfrac{1}{\delta}\left(p\,r-\dfrac{q}{2}\lVert\mu\rVert^2Z_s^2\right)\,ds-\eta Z_t^2-\xi Z_t+\eta z^2+\xi z\right\}}
\end{equation}
is a $\mbp_z$-martingale, and by Girsanov's theorem
\begin{equation} \label{FOU:BM:2}
	\widetilde{B}_s:=B_s+\norm{\theta}\int_{0}^{s}2\eta Z_u+\xi\,du,\quad 0\leq s\leq t
\end{equation}
is a standard Brownian motion under a new measure $\tmbp_z$ on $\mathcal{F}_t$ defined by
\[ \frac{d\tmbp_z}{d\mbp_z}=M_t \]
for each $t\geq 0$. Then, \textit{u} is rewritten in the form
\begin{equation} \label{eqn:FOU:u:sol2}
	u(t,z)=e^{-\lambda t-\eta z^2-\xi z}\ETP_z\left[e^{\eta Z_t^2+\xi Z_t}\right]=e^{-\lambda t-\eta z^2-\xi z}f(t,z)\,,
\end{equation}
where $\tmbp_z$-dynamics of $(Z_s)_{0\leq s\leq t}$ is given by
\begin{equation} \label{eqn:FOU:f:SDE}
	dZ_s=\Bigl((b-\norm{\theta}^2\xi)-(a+q\theta'\mu+2\norm{\theta}^2\eta)Z_s\Bigl)ds+\norm{\theta} d\widetilde{B}_s,\,0<s\leq t\,.
\end{equation}
This proves (ii).

Since the SDE \eqref{eqn:FOU:f:SDE} has a unique strong solution, (iii) is proven by the argument described in Subsection \ref{subsec:verification}. Thus, the dynamic fund separation follows. The static fund separation will be completed by showing that the intertemporal hedging weight $f_z(T-t,z)/f(T-t,z)$ vanishes exponentially fast as terminal time approaches infinity.

Since \textit{Z} is an Ornstein-Uhlenbeck process and $\eta<a+q\theta'\mu+2\norm{\theta}^2\eta/\norm{\theta}^2$, it can be directly observed that the function $t\mapsto f(t,z)$ is continuous on $[0,\infty)$ and converges to some positive constant for each \textit{z}. Thus, $f(\cdot,z)$ is bounded and bounded away from zero for each \textit{z}. Next, we investigate the behavior of $f_z$. We can apply Lemma \ref{lem:diff:FOU} to calculate the derivative 
\begin{align}
	f_z(t,z)=e^{-\hlambda\,t}\ETP_z\left[(2\eta Z_t+\xi)e^{\eta Z_t^2+\xi Z_t}\right]\,,
\end{align}
where
\begin{equation}
	\hlambda:=a+q\theta'\mu+2\norm{\theta}^2\eta,
\end{equation}
which gives (iv). In addition, as a function of \textit{t}, the expectation on the right-hand side is continuous on $[0,\infty)$ and converges to some positive constant. Therefore, we conclude that there exists positive functions $C_1,C_2:\mbr\to\mbr$ such that
\begin{equation}
	C_1(z)e^{-\hlambda t}\leq\left|\frac{f_z(t,z)}{f(t,z)}\right|\leq C_2(z)e^{-\hat{\lambda} t}\,,
\end{equation}
and hence (v) and the static fund separation hold.

\subsection{Proof of Theorem \ref{thm:perturb:FOU}}
\textbf{Proof of (i).} Recall that
\begin{equation}
	\frac{\partial}{\partial z}\left(\frac{f_z(t,z)}{f(t,z)}\right)=\frac{f_{zz}(t,z)}{f(t,z)}-\left(\frac{f_z(t,z)}{f(t,z)}\right)^2.
\end{equation}
Here, $f_{zz}$ can be also computed easily:
\begin{equation}
	f_{zz}(t,z)=e^{-2\hlambda\,t}\,\ETP_z\left[(4\eta^2Z_t^2+4\eta\xi Z_t+2\eta+\xi^2)e^{\eta Z_t^2+\xi Z_t}\right].
\end{equation} Thus, we have
\begin{equation}
	\left|\frac{\partial}{\partial z}\frac{f_z(t,z)}{f(t,z)}\right|\leq C(z)e^{-2\hlambda\,t}
\end{equation}
for some positive function $C:\mbr\to\mbr$. Finally, in conjunction with the result in the previous subsection, we conclude that there exists a positive function $C:\mbr\to\mbr$ such that
\begin{align}
	\left|\frac{\partial}{\partial z}\left(\hat{\pi}_T(t,z)-\hat{\pi}_{\infty}(z)\right)\right|&=\left|\frac{\delta}{1-p}(\Sigma')^{-1}\theta\right|\left|\frac{\partial}{\partial z}\frac{f_z(T-t,z)}{f(T-t,z)}\right| \\
	&\leq C(z)e^{-2\hlambda\,(T-t)}\,,
\end{align}
for any $T>0,\,0\leq t<T$.\vspace{1cm}

\noindent\textbf{Proof of (ii).} Since $\xi$ depends on \textit{b}, we write $\xi=\xi(b)$ as a function of \textit{b}. As discussed earlier, we have
\begin{equation}
	\frac{\partial}{\partial b}\!\left(\frac{f_z(t,z)}{f(t,z)}\right)=\frac{f_z(t,z)}{f(t,z)}\left(\frac{\partial}{\partial b}\!\log{f_z(t,z)}-\frac{\partial}{\partial b}\!\log{f(t,z)}\right)\,.
\end{equation}
Consider a family of pairs $(Z\supe,\tmbp_z\supe)_{z\in E,\epsilon\in I}$ satisfying
\begin{equation} \label{eqn:drift_b_Yeps:append:FOU}
	dZ_t^{\epsilon}=\left(b+\epsilon-\norm{\theta}^2\xi(b+\epsilon)-(a+q\theta'\mu+2\norm{\theta}^2\eta)Z_t^{\epsilon}\right)dt+\norm{\theta}\,d\tB_t\supe,\quad Z_0^{\epsilon}=z\in E,\quad\tmbp_z\supe\mbox{ - a.s.}
\end{equation}
where $\tB\supe$ is a Brownian motion under $\tmbp_z\supe$ for each $\epsilon\in I$. Then clearly
\begin{equation}
	\frac{\partial}{\partial b}f(t,z)=\de\ETPe_z\left[e^{\eta (Z_t\supe)^2+\xi(b+\epsilon) Z_t\supe}\right]\,.
\end{equation}
Following the notation in Proposition \ref{prop:perturb:drift},
\begin{equation}
	\left(\sigma^{-1}\de b\sube\right)(x)\equiv\norm{\theta}^{-1}-\norm{\theta}\frac{d\xi}{d b}\,.
\end{equation}
Thus, the conditions \ref{cond1:thm:drift} and \ref{cond2:thm:drift} trivially hold. Since $\eta$ does not depend on \textit{b}, it suffices to show that $\ETP_z[Z_t^{2\eta}]<\infty$ for the condition \ref{cond3:thm:drift} to hold. This follows directly by observing the invariant density. Therefore, one can apply Proposition \ref{prop:perturb:drift} to deduce
\begin{align}
	\left|\de\ETPe_z[e^{\eta (Z_t\supe)^2+\xi(b+\epsilon) Z_t\supe}]\right|&=\left|\frac{\partial\xi}{\partial b}\ETP_z\left[e^{\eta Z_t^2+\xi Z_t}\right]+\ETP_z\left[e^{\eta Z_t^2+\xi Z_t}\left(\norm{\theta}^{-1}-\norm{\theta}\frac{d\xi}{d b}\right)B_t\right]\right| \\
	&\leq\left|\frac{\partial\xi}{\partial b}\ETP_z\left[e^{\eta Z_t^2+\xi Z_t}\right]\right|+\left|\ETP_z\left[e^{2\eta Z_t^2+2\xi Z_t}\right]^{1/2}\right|\left|\left(\sigma^{-1}-\sigma\frac{d\xi}{d b}\right)\sqrt{t}\right| \\
	&\leq C(z\,;\,b,a,\sigma)(1+\sqrt{t}), 
\end{align}
for some positive function $C:\mbr\to\mbr$ depending on \textit{b,a} and $\sigma$. Since the function \textit{f} is bounded and bounded away from zero, we have
\begin{equation}
	\left|\frac{\partial}{\partial b}\!\log{f(t,z)}\right|=\left|\frac{\frac{\partial}{\partial b}f(t,z)}{f(t,z)}\right|\leq C(z\,;\,b,a,\sigma)(1+\sqrt{t})\,,
\end{equation}
for some positive function $C:(0,\infty)\to\mbr$ depending on \textit{b,a} and $\sigma$. Similarly, it can be also derived that
\begin{equation}
	\left|\frac{\partial}{\partial b}\!\log{f_z(t,z)}\right|\leq C(z\,;\,b,a,\sigma)(1+\sqrt{t}).
\end{equation}
Therefore, there exists a positive function $C:\mbr\to\mbr$ depending on \textit{b,a}, and $\sigma$ such that
\begin{align}
	\left|\frac{\partial}{\partial b}\hat{\pi}_T(t,z)-\frac{\partial}{\partial b}\hat{\pi}_{\infty}(z)\right|&=\left|\frac{\delta}{1-p}(\Sigma')^{-1}\theta\right|\left|\frac{\partial}{\partial b}\frac{f_z(T-t,z)}{f(T-t,z)}\right|\\
	&\leq C(z\,;\,b,a,\sigma)(1+\sqrt{T})e^{-\hlambda\,(T-t)}\,,
\end{align}
for any $T>0,\,0\leq t<T$.\vspace{1cm}

\noindent\textbf{Proof of (iii).} We regard $\theta,\,\eta,\,\xi$, and $\hlambda$ as functions of \textit{a} and denote $\theta\sube=\theta(a+\epsilon),\,\eta\sube=\eta(a+\epsilon,\theta\sube),\,\xi\sube=\xi(a+\epsilon,\theta\sube,\eta\sube),\,\hlambda\sube=\hlambda(a+\epsilon,\theta\sube,\eta\sube)$ for $\epsilon\in I$. Consider a family of pairs $(Z\supe,\tmbp_z\supe)_{z\in E,\epsilon\in I}$ satisfying
\begin{equation}
	dZ_t^{\epsilon}=\left(b-\norm{\theta\sube}^2\xi\sube-(a+\epsilon+q\theta\sube'\mu+2\norm{\theta\sube}^2\eta\sube)Z_t^{\epsilon}\right)dt+\norm{\theta\sube}\,d\tB_t\supe,\quad Z_0^{\epsilon}=z\in E,\quad\tmbp_z\supe\mbox{ - a.s.}
\end{equation}
where $\tB\supe$ is a Brownian motion under $\tmbp_z\supe$ for each $\epsilon\in I$. Let us efine
\begin{equation}
	\widetilde{Z}_t\supe:=\frac{Z_t\supe}{\norm{\theta\sube}},\,\tilde{z}\supe:=\frac{z}{\norm{\theta\sube}}\quad t\geq 0,\,\epsilon\in I\,.
\end{equation}
Then, $\widetilde{Z}\supe$ satisfies
\begin{equation}
	d\widetilde{Z}_t^{\epsilon}=\left(b\norm{\theta\sube}^{-1}-\norm{\theta\sube}\xi\sube-(a+\epsilon+q\theta\sube'\mu+2\norm{\theta\sube}^2\eta\sube)\widetilde{Z}_t^{\epsilon}\right)dt+d\tB_t\supe,\quad \widetilde{Z}_0^{\epsilon}=\tilde{z}\in E,\quad\tmbp_z\supe\mbox{ - a.s.}
\end{equation}
The conditions \ref{cond1:thm:drift} - \ref{cond3:thm:drift} hold for $\widetilde{Z}\supe$. Indeed, \ref{cond2:thm:drift} and \ref{cond3:thm:drift} are trivial, and \ref{cond1:thm:drift} immediately follows from Lemma \ref{lem:exp_finite:FOU}. Thus, applying Corollary \ref{cor:perturb:united} is available. Proceeding with the same calculations as before, we obtain
\begin{align}
	\left|\frac{\partial}{\partial a}\hat{\pi}_T(t,z)-\frac{\partial}{\partial a}\hat{\pi}_{\infty}(z)\right|&\leq\left|\frac{\delta}{1-p}(\Sigma')^{-1}\theta\right|\left|\frac{\partial}{\partial a}\frac{f_z(T-t,z)}{f(T-t,z)}\right|+\left|\frac{\delta}{1-p}(\Sigma')^{-1}\frac{d\theta}{da}\frac{f_z(T-t,z)}{f(T-t,z)}\right|\\
	&\leq C(z\,;\,b,a,\sigma)(1+T)e^{-\hlambda\,(T-t)}\,,
\end{align}
for any $T>0,\,0\leq t<T$.\vspace{1cm}

\noindent\textbf{Proof of (iv).} Since $\theta,\,\eta,\,\xi$, and $\hlambda$ depend also on $\sigma$, the proof is exactly the same as the proof of (iii). In conclusion, we have
\begin{align}
	\left|\frac{\partial}{\partial a}\hat{\pi}_T(t,z)-\frac{\partial}{\partial a}\hat{\pi}_{\infty}(z)\right|&\leq\left|\frac{\delta}{1-p}(\Sigma')^{-1}\theta\right|\left|\frac{\partial}{\partial\sigma}\frac{f_z(T-t,z)}{f(T-t,z)}\right|+\left|\frac{\delta}{1-p}(\Sigma')^{-1}\frac{d\theta}{d\sigma}\frac{f_z(T-t,z)}{f(T-t,z)}\right|\\
	&\leq C(z\,;\,b,a,\sigma)(1+T)e^{-\hlambda\,(T-t)}\,,
\end{align}
for any $T>0,\,0\leq t<T$.

\bibliographystyle{abbrvnat}
	\bibliography{References}
\end{document}